\theoremstyle{definition}
\newtheorem{definition}{Definition}
\newtheorem{theorem}{Theorem}
\newtheorem{lemma}{Lemma}
\newtheorem{corollary}{Corollary}
\newtheorem{example}{Example}
\newtheorem{procedure}{Procedure}
\DeclareMathOperator{\Tr}{Tr}
\DeclareMathOperator{\diag}{diag}
\DeclareMathOperator{\bin}{bin}
\DeclareMathOperator{\col}{col}
\DeclareMathOperator{\Alt}{Alt}
\begin{document}
\begin{titlepage}
\newcommand{\HRule}{\rule{\linewidth}{0.1mm}} 
\center 
 
\textsc{\Large }\\[0.5cm] 
\textsc{\Large }\\[0.5cm] 

\includegraphics[scale=.08]{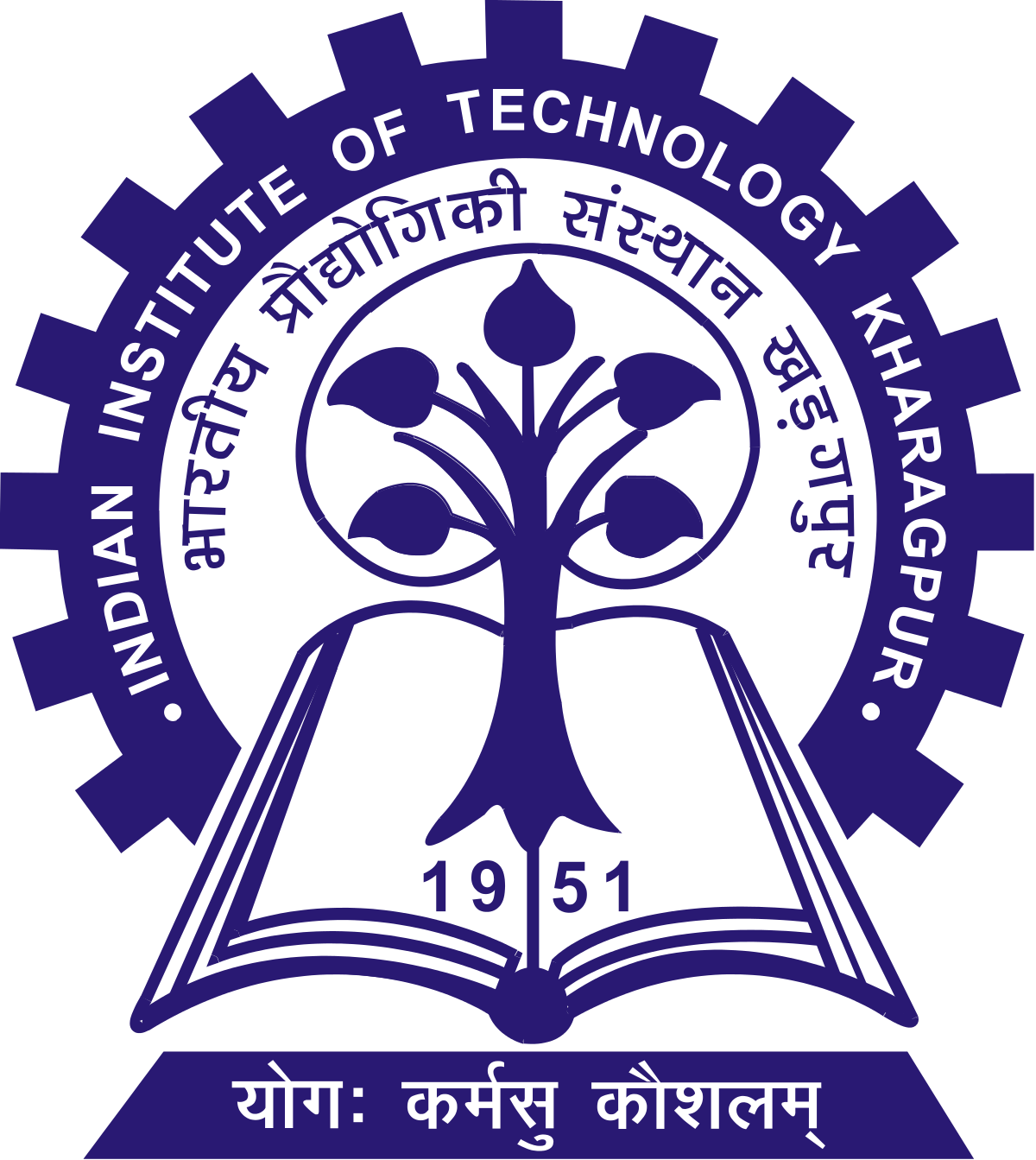}

\Large \textbf{Indian Institute of Technology Kharagpur\\
Department of Mathematics and Department of Physics}
\vfill
\HRule \\[0.6cm]
{ \Large \bfseries A switching approach for perfect state transfer over a scalable and routing enabled network architecture with superconducting qubits}\\[0.1cm] 
\HRule \\[2cm]
 

\Large Master Thesis Project (PH57002)\\
\vfill

\begin{minipage}{0.4\textwidth}
\begin{flushleft} \large
\emph{Student:}\\
\textbf{Siddhant Singh}
\\{\textit{\textbf{(15PH20030)}}}  
\end{flushleft}

\end{minipage}
\begin{minipage}{0.4\textwidth}
\begin{flushright} \large
\emph{Supervisors: \\}
\textbf{Bibhas Adhikari}\\ 
\textbf{Sonjoy Majumder}
\end{flushright}
\end{minipage}\\[1cm]
{\large \today}\\[1cm] 

\vfill 

\end{titlepage}
\section*{\begin{center}
    Abstract
\end{center}}
We propose a hypercube switching architecture for the perfect state transfer (PST) where we prove that it is always possible to find an induced hypercube in any given hypercube of any dimension such that PST can be performed between any two given vertices of the original hypercube. We then generalise this switching scheme over arbitrary number of qubits where also this routing feature of PST between any two vertices is possible. It is shown that this is optimal and scalable architecture for quantum computing with the feature of routing. This allows for a scalable and growing network of qubits. We demonstrate this switching scheme to be experimentally realizable using superconducting transmon qubits with tunable couplings. We also propose a PST assisted quantum computing model where we show the computational advantage of using PST against the conventional resource expensive quantum swap gates. In addition, we present the numerical study of signed graphs under Corona product of graphs and show few examples where PST is established, in contrast to pre-existing results in the literature for disproof of PST under Corona product. We also report an error in pre-existing research for qudit state transfer over Bosonic Hamiltonian where unitarity is violated. 
\newpage
\tableofcontents          
\listoffigures
\newpage


\chapter{Introduction to Quantum Perfect State Transfer (PST)}
\section{Introduction}

In quantum computation, it is often required to transfer an arbitrary quantum state from one site to another \cite{ref:48}. These two sites may belong to the same quantum processor or different processors. The latter one is not trivial for many quantum information processing (QIP) realizations, such as, solid state quantum computing and superconducting quantum computing \cite{ref:23}. This is because the state transmission channel is not a computational space for either processor and cannot involve manipulation. In large scale quantum computation, it is a very important task to be able to transfer a quantum state within a processor as well as between two physically distant QIP processors with robust transmission lines. It is also important to find the physical systems also which support this quantum information exchange between distant sites. For short distance communications (say between adjacent quantum processors), alternatives to interfacing different kinds of physical systems are highly desirable and have been proposed, for example, for ion traps \cite{ref:39}\cite{ref:34}, superconducting circuits \cite{ref:41}\cite{ref:42}, etc. The task of state transfer is thought with the intention of reducing the control required to communicate between distant qubits in a quantum computer \cite{ref:54}.
    
    Quantum state transfer with 100\% fidelity is known as perfect state transfer (PST) and this idea using interacting spin-1/2 particles was first proposed in \cite{ref:1} and established the connection between graph theoretic networks and actual quantum networks for quantum processors in the first excitation subspace of many-body qubit network \cite{ref:15}. For graph structures, usually the XY coupling Hamiltonian and the Heisenberg spin interaction are considered. Due to this connection, a quantum architecture can be designed purely in graph theoretic fashion (determining the qubits' mutual connectivity) and can be realized by physical systems. It was established that PST is possible in spin-1/2 systems and other bosonic networks without any additional action and manipulation from senders and receivers \cite{ref:49}. PST only requires access to two spins at each end of the spin network while all other spins in the network act like a channel for transfer and are not computational spins. In general, this involves mixed states of the network qubits \cite{ref:57}, however, showing PST for pure states in a graph suffices to prove the phenomenon. PST can be used in entanglement transfer, quantum communication, signal amplification, quantum information recovery and implementation of universal quantum computation \cite{ref:18}\cite{ref:49}\cite{ref:60}\cite{ref:61}.
    
    PST in graphs is a rare phenomenon and only very few graphs and class of graphs are known to exhibit the phenomenon of PST. For this reason, the idea of pretty good state transfer is also studied, where the fidelity is a little less than unity but offers a large number of graphs that support state transfer \cite{ref:66}\cite{ref:67}\cite{ref:68}. The task is to find graph structures which support PST for as many pair of vertices as possible and possibly grow under some operation (scalability of networks). It is important to find the class of graphs where it occurs and equally important to find graphs where it does not occur \cite{ref:58}\cite{ref:12}. Researchers aim to find class of graphs as well as as various products of graph to establish a growing scalable network supporting PST \cite{ref:4}\cite{ref:5}, and in general these graphs can be weighted \cite{kendon2011perfect}. More general graphs such as signed graphs \cite{ref:50} and oriented graphs \cite{ref:19} are also studied. In this way, we essentially define a quantum computig architecture. PST for qudits or higher dimensional spins over weighted graph is also classified for some networks \cite{ref:53}\cite{ref:21}. This was further developed for arbitrary states and large networks in \cite{ref:4} and \cite{ref:5}.

    PST scheme established in \cite{ref:4} and \cite{ref:5} allows PST over arbitrary long distances with the use of Cartesian product of one-link and two-link graphs which support PST under the XY as well as Heisenberg interaction of spins. It is known that one-link and two-link chain graphs exhibit PST between the end vertices \cite{ref:4}. And this feature carries over to the antipodal vertices of the resulting Cartesian product of these graphs with themselves, which become the pair of vertices exhibiting PST in the same time. First shortcoming in this model is that of the impossibility of routing \cite{ref:2}\cite{ref:16}. The second being that only a pair of antipodal vertices of the graph support PST which becomes less useful as the graph scales to a larger network. This would involve constructing a very large network just to enable PST between a pair of antipodal qubits of a network. Third being that this architecture scales the number of qubits  with the factor of 2 which will be very large gap for the larger dimensional hypercubes as the network scales up. This motivates for finding a quantum computing architecture which would allow us routing to any given vertex of the graph as well as enables arbitrary number of vertices while still preserving perfect fidelity and routing to any vertex starting from any other given vertex. A switching was proposed in \cite{ref:56} where in a complete graph $K_n$, switching off one link establishes PST in non-adjacent qubits. This enables PST for more vertices but still does not enable routing to different vertices and there is no scalability, the graph remains fixed. One attempt at switching and routing is proposed in \cite{ref:59} which involves creating new edges and coupling for qubits, however, is still not scalable. Routing in special regular graphs was proposed in \cite{ref:62}\cite{ref:64}. It leads us to the motivation that only quantum mechanical processes may not not be sufficient to fulfill our requirements, that is the perfect state transfer between any two vertices of a graph of arbitrary number of vertices. Therefore, we propose a hybrid of classical combinatorial and quantum information theoretic method, such that, a perfect quantum state transfer is possible between any two vertices of the graph.

    In this work, we propose a solution to both the problems of routing and scalability for quantum architecture where the network will be enabled with PST from all-to-all nodes for any arbitrary number of qubits, which fits with the idea of Noisy Intermediate-Scale Quantum (NISQ) processors \cite{ref:29}. And this task can be accomplished in just two quantum operations only. Our architecture also features the addition of any arbitrary number of qubits in an already constructed network according to our scheme while preserving both the properties. Thus, this is a possible and optimal solution for a scalable architecture for quantum information processing.  Our results in this work hold both for XY-coupling as well as the Laplacian interaction Hamiltonian. We also propose the idea of PST assisted quantum computation where PST can be used in contrast to large number of SWAP gates between any two distant qubits when the quantum circuit depth is very high and thereby reducing the complexity of a large quantum circuit. This involves PST over the computational qubits of a quantum processor. We also present analytically and simulate numerically the experimental implementation of our architecture using superconducting circuits thereby showing the implementation of PST in superconducting circuits for the very first time. Apart from these main results, we also report the PST in qudit systems and analysis of PST under Corona product of certain special graphs. Chapters 1 and 2 form the part of the preliminary literature and chapters 3,4,5 and 6 are the original contributions from this thesis.

\section{Spin Hamiltonian dynamics for Perfect State Transfer}
\label{sec:firstexcitation}
The idea for perfect state transfer of arbitrary states is to establish the connection between the graph theoretic approach and spin Hamiltonian. The system of spins can be translated into a corresponding graph where the dynamics can be explored by the structure of the graph governed by its adjacency marix and Laplacian which are in one-to-one correspondence with the connectivity of the spins in the physical picture. Arbitrary state of spin in a lattice is simply a qubit state. The principle problem is that the Hilbert space of a graph $G$ with $n$ vertices is given by $\mathbbm{C}^n$, and the Hilbert space of a spin-1/2 (or generally any many-body qubit network) particle attached to each vertex of the same graph $G$ is $\mathbbm{C}^{2^n}$, which is exponentially larger. Graphs and their products will be discussed in detail in chapter \ref{chap:Graph}. For spin-dynamics we have generally two kind of interaction Hamiltonians: The XY coupling adjacent interaction and the Heisenberg interaction model. We want to establish a connection between the dynamics of $n$ number of spin-1/2 particles interacting according to these two kind of interactions on a graph $G$ and the structure of $G$ itself. 

The central idea of this equivalence is that complicated physics of a system of distinguishable spin-1/2 particles interacting pairwise on a simple geometry given by an undirected simple graph $G$ are equivalent to the sometimes physics of a single free spinless particle hopping on a much complicated graph $\mathcal{G}$ (which is some disjoint union of the graphs $G_k$, which are related to $G$) \cite{ref:15}. This can be understood as direct application of a special graph product, called the \textit{wedge product} of graphs which is discussed in section \ref{sec:wedge}.

Consider $n=|V|$ \textit{distinguishable} spin-1/2 subsystems, each at one vertex of the graph $G(V,E)$, where $V(G)$ is the finite vertex set and $E(G)$ is the edge set (a two-element collection of vertices) of the graph. We say distinguishable spins because we are able to label them with the labeling of the graph vertices. We fix a labeling $i={1,2,...,n}$ of the vertices. This labeling induces an ordering of the vertices which we write as $v_i>v_j$ if $i>j$, where $v_i$ is the $i$th vertex of the graph $G$. The degree $\deg (v_i)$ of a vertex $v_i$ is equal to the number of edges which have $i$ as an endpoint. The adjacency matrix $A(G)$ for $G$ is the \{0,1\}-matrix of size $|V(G)|\times |V(G)|$ which has a 1 in the $(i,j)$ entry if there is an edge connecting $v_i$ and $v_j$. Define the Laplacian $L(G)$ of the graph as $D(G)-A(G)$ where $D(G)$ is the degree matrix for $G$ defined as $D(G)=\diag \{\deg(v_i)\}$. We also define the Hilbert $\mathcal{H}(G)$ space of the graph $G$ to be the vector space over $\mathbbm{C}$ generated by the orthonormal vectors $|i\rangle$, $\forall i \in V(G)$, with the canonical inner product $\langle i|j\rangle=\delta_{ij}$. For details in graph theory, refer to chapter \ref{chap:Graph}. Now we define the two mentioned interaction Hamiltonian for the pairwise interactions between the spins. The first is the XY model in two spatial degrees of freedom,
\begin{equation}
\label{eqn:XYmodel}
    H_{XY}=\sum_{(i,j)\in E(G)}J_{ij}\left(\sigma^x_i\sigma^x_j+\sigma^y_i\sigma^y_j\right)=\sum_{(i,j)\in E(G)}2J_{ij}\left(\sigma^+_i\sigma^-_j+\sigma^-_i\sigma^+_j\right),
\end{equation}
\begin{equation}
    =\frac{1}{2}\sum_{\langle i,j \rangle}\left(\sigma^x_i\sigma^x_j+\sigma^y_i\sigma^y_j\right) \quad (\text{for }J_{ij}=\frac{1}{2}\quad \forall \quad \langle i,j \rangle)
\end{equation}
where $\langle i,j \rangle$ denotes an adjacent pair of vertices on the graph (which have an edge between them) and $\sigma_i^{+,-}$ are the ladder operators for the $i$th spin (qubit) such that $\sigma_i^x=\sigma^+_i+\sigma^-_i$ and $\sigma_i^y=i\sigma^+_i-i\sigma^-_i$, with $\sigma^{x,y}_i$ as the Pauli matrices for spin-1/2 system at the $i$th vertex. The second connection is via the three-dimensional Heisenberg model,
\begin{equation}
    H_{Hei}=-\sum_{(i,j)\in E(G)}J_{ij}\Vec{\sigma}_i\cdot \Vec{\sigma}_j+\sum_jB_j\sigma^z_j
\end{equation}
\begin{equation}
    =-\frac{1}{2}\sum_{\langle i,j \rangle}\left( \Vec{\sigma}_i\cdot \Vec{\sigma}_j -I_iI_j\right) \quad (\text{for }J_{ij}=\frac{1}{2}\quad \forall \quad \langle i,j \rangle)
\end{equation}
where, $\Vec{\sigma}_i$ is Pauli matrix vector $\Vec{\sigma}_i=(\sigma^x_i,\sigma^y_i,\sigma^z_i)$ for the $i$th spin and $I_i$ is the identity operator for the $i$th vertex. We take $J_{ij}=1/2$ and choose local fields $B_j$ at the $j$th site such that $\sum_jB_j\sigma^z_j=\mathbb{I}$ (which makes the Hamiltonian coincide with the Laplacian of the graph)for the uniformly coupled system with edge weight unity for both the models.

There is one peculiar conservation property of both of these Hamiltonians that they conserve the total spin along the $z$-axis of the whole system. Formally, we define $S^z := \sum_{i\in V}\sigma^z_i$, and it can be verified that
\begin{equation}
     \left[ S^z, \sigma^x_i\sigma^x_j+\sigma^y_i\sigma^y_j\right]=0 \quad \forall \quad  i,j \in V
\end{equation}
and also
\begin{equation}
    \left[ S^z, \Vec{\sigma}_i\cdot \Vec{\sigma}_j\right]=0 \quad \forall \quad  i,j \in V.
\end{equation}
For Heisenberg Hamiltonian, it even commutes with total spin along $x$- or $y$-axis also, or generally along any axis with suitably defined spin operators along that axis. These commutation relations are enough to establish the idea of different excitation subspaces for the action of the Hamiltonian. For example, if the system had two spins excited and all other in ground state, then throughout the quantum evolution of the system under the spin Hamiltonian will conserve this total spin as two excitations. Therefore, the action of $H_{XY}$ and $H_{Hei}$ breaks the Hilbert space $\mathcal{H}\cong \mathbbm{C}^{2^n}$ into a direct sum 
\begin{equation}
    \mathcal{H}\cong \bigoplus_{k=0}^n \Gamma^k
\end{equation}
where the vector subspaces $\Gamma^k$ constitute the elements as follows
\begin{equation}
    \begin{split}
        & \Gamma^0=\{|00...0\rangle\},\\
        & \Gamma^1=\{|10...0\rangle,|01...0\rangle,...,|00...1\rangle \},\\
        & \Gamma^2=\{|110...0\rangle,|101...0\rangle,...,|000...11\rangle \},\\
        & \quad \quad \vdots \\
        & \Gamma^n=\{|11...1\rangle \}.\\
    \end{split}
\end{equation}
This implies that starting with a ket in $\Gamma^k$, the system will evolve in the linear combination of vectors in $\Gamma^k$ strictly. Projectors $P_k$ can be defined for each $\Gamma_k$, then the total spin for each $\Gamma_k$ is $\Tr(S^zP_k)=k$ (use standard basis) which is also justified by the physical argument for total spin in each subspace. The dimension of $\Gamma_k$ is
\begin{equation}
    \text{dim} \left(\Gamma_k \right)= \binom{n}{k}
\end{equation}
We also note that the subspace $\Gamma^1$ matches exactly with the vertex space of the graph $G$ itself. Dynamics in this subspace is simply the hopping between different vertices while at each point in time one of the vertices is excited. There is a deep relationship between the subspaces $\Gamma^k$ and the exterior vector spaces via the wedge product. The vector space $\Gamma^k$ is generated by the $k$th wedge product of $G$ denoted as $\wedge ^k (\mathcal{H}_G)$. For the first wedge product of $G$, we get $\Gamma^1$ and action of corresponding adjacency matrix $A(G)$ is identical to the action of the Hamiltonian $H_{XY}$. And the action of the graph Laplacian is identical to the action of the Heisenberg hamiltonian $H_{Hei}$. In the first excitation space, we denote these Hamiltonians as $H^1_{XY}$ and $H_{Hei}^1$ respectively. For general higher order wedge product we formulate it in section \ref{sec:wedge}.


\section{Perfect State Transfer in general networks for arbitrary states in Heisenberg model}
Simplest and fundamental system for the construction of network for perfects transfer are the ferromagnetic chains (or network) in Heisenberg model for spins.  Consider the general graph shown in figure \ref{fig:spinpst}, where the vertices are spins and the edges connect spins which interact. Say there are $n$ spins in the graph and these are labeled $1,2,...,n$. The Hamiltonian is given by
\begin{equation}
    H^1_{Hei}=-\sum_{\langle i,j\rangle}J_{ij}\Vec{\sigma}_i\cdot \Vec{\sigma}_j-\sum_{i=1}^nB_i\sigma^z_i
\end{equation}
as before where $\Vec{\sigma_i}=(\sigma^x_i,\sigma^y_i,\sigma_i^z)$ are the Pauli spin matrices for the $i$th spin, $B_i> 0$ are static magnetic fields and $J_{ij}>0$ are coupling strengths, and $\langle i,j\rangle$ represents pairs of adjacent spins which are coupled. $H_G$ describes an arbitrary ferromagnet with \textit{isotropic} Heisenberg interactions. We now assume that the state sender $A$ is located closest to the $s$th (sender) spin and the state receiver $B$ is located closest to the $r$th (receiver) spin (these spins are shown in figure \ref{fig:spinpst}. All the other spins will be called channel spins as they are involved in transferring the state of the qubit (spin) identical to a quantum channel. In the original idea \cite{ref:1}, it is also assumed that the sender and receiver spins are detachable from the chain. In order to transfer an unknown state to Bob, $A$ replaces the existing sender spin with a spin encoding the state to be transferred. After waiting for a specific amount of time, the unknown state placed by $A$ travels to the receiver spin with some fidelity. $B$ then picks up the receiver spin to obtain a state close to the the state Alice wanted to transfer. As individual access or individual modulation of the channel spins is never required in the process, they can be constituents of rigid 1D magnets (for instance).

Perfect transfer of a state in a many-qubit system modeled as a combinatorial graph in which the edges of the graph represents coupling of qubits, is defined by starting with a single qubit state $\rho_{\text{qubit}}^A$ on some vertex $A$, with $\rho_{\text{in}}$ in the state of the rest of the qubits, and after evolution for some time $t_0$ under a fixed Hamiltonian $H$, the output state
\begin{equation}
    \label{eqn:pstgeneral}
    e^{-iHt_0}(\rho_{\text{qubit}}^A\otimes \rho_{\text{in}})e^{iHt_0}=\rho_{\text{qubit}}^B\otimes \rho_{\text{out}}
\end{equation}
is produced, thereby transmitting the input qubit to another desired vertex $B$ of the graph. In general, $\rho_{\text{qubit}}$ is a density matrix, however, in this thesis we consider that it corresponds to a pure state (which is sufficient to demonstrate the idea of PST). The most simplified case for such realization is the one-dimensional chain of qubits.

We assume that initially the system is initially cooled
to its ground state $|0\rangle = |000...0\rangle$ where $|0\rangle$ denotes the spin down state (i.e., spin aligned along $−z$ direction) of a spin. This is shown for a 1D chain in the upper part of figure \ref{fig:spinpst}. We set the ground state energy $E_0 = 0$ (i.e., redefine $H_G$ as $E_0+H_G$). We also introduce the class of states $|j\rangle = |00...010....0\rangle$ (where $j = 1,2,..s,..r,..,n$, i.e., the vertex space of $G$) in which the spin at the $j$th site has been flipped to th $|1\rangle$ state. To start the protocol, $A$ places a spin in the unknown state $|\psi_{in}\rangle = \cos(\theta/2)|0\rangle + e^{i\phi} \sin(\theta/2)|1\rangle$ at the $s$th site in the spin chain. We can describe the state of the whole chain at this instant (time $t = 0$) as
\begin{equation}
    |\Psi(0)\rangle=\cos(\theta/2)|0\rangle+e^{i\phi}\sin(\theta/2)|s\rangle
\end{equation}
\begin{figure}[hbtp]
    \centering
    \includegraphics[scale=0.4]{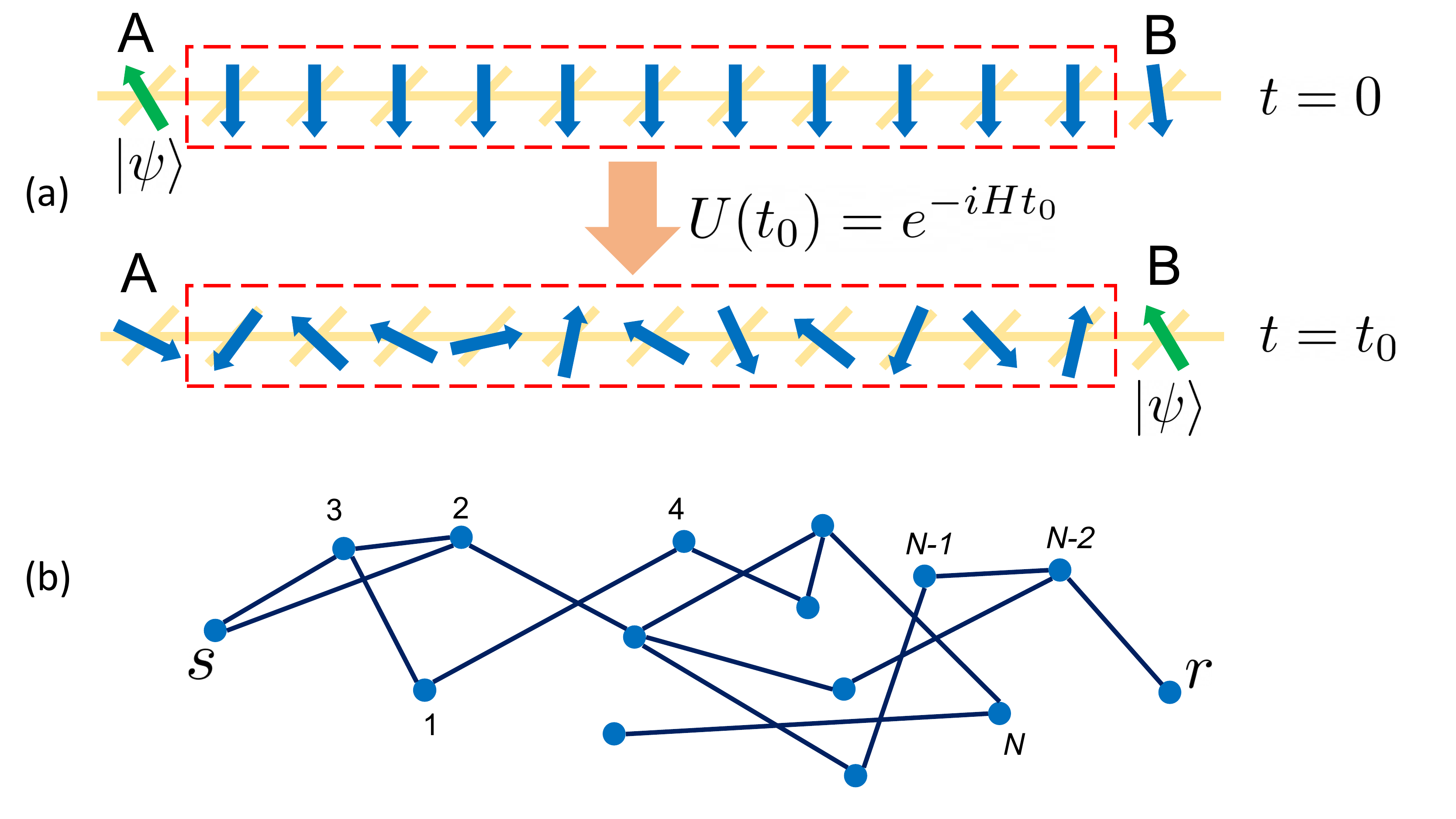}
    \caption{The part (a) of the figure shows the quantum communication protocol. Initially the spin chain is in its ground state in an external magnetic field. Alice and Bob are at opposite ends of the chain. Alice places the quantum state she wants to communicate on the spin nearest to her. After a while, Bob receives this state with some fidelity on the spin nearest to him. Part (b) shows an arbitrary graph of spins through which quantum communications may be accomplished using this protocol. The communication takes place from the sender spin $s$ to the receiver spin $r$.}
    \label{fig:spinpst}
\end{figure}
$B$ wants to retrieve this state, or a state as close to it as possible, from the $r$th site of the graph. Then he has to wait for a specific time till the initial state $|\Psi(0)\rangle$ evolves to a final state which is as close as possible to $\cos(\theta/2)|0\rangle+e^{i\phi}\sin(\theta/2)|r\rangle$. As $[H_G,\sum_i\sigma_i]=0$, the state $|s\rangle$ only evolves to states $|j\rangle$ and the evolution of the spin-graph (with $\hbar = 1$) is
\begin{equation}
\label{eqn:bose1evo}
        |\Psi(t)\rangle=\cos(\theta/2)|0\rangle+e^{i\phi}\sin(\theta/2)\sum_{j=1}^n\langle j|e^{-iH^1_{Hei}t}|s\rangle |j\rangle
\end{equation}
For the Perfect State Transfer (PST) to occur from the $s$th site to the $r$th site, we should have $\langle r|e^{-iH_Gt}|s\rangle:=f^n_{r,s}=e^{i\eta}$ (where the phase $e^{i\eta}$ can be compensated and corrected later by $B$) or simply $|f^n_{r,s}|=1$ for some finite $t=t_0$, is enough for pure states. The state of the $r$ spin will, in general, be a mixed state, and can be obtained by tracing off the states of all other spins from $|\Psi(t)\rangle$. This evolves with time as
\begin{equation}
    \rho_{\text{out}}(t)=P(t)|\Psi_{\text{out}}(t)\rangle \langle \Psi_{\text{out}}(t)|+(1-P(t))|0\rangle\langle 0|
\end{equation}
where
\begin{equation}
    |\Psi_{\text{out}}(t)\rangle=\frac{1}{\sqrt{P(t)}}\left(  \cos\frac{\theta}{2}|0\rangle+e^{i\phi}\sin\frac{\theta}{2}f^n_{r,s}(t)|1\rangle \right)
\end{equation}
with $P(t)=\cos^2(\theta/2)+\sin^2(\theta/2)|f^n_{r,s}|^2$ is the normalization of the state at any time $t$ and $f^n_{r,s}=\langle r|e^{-iH^1_{Hei}t}|s\rangle$. Note that $f^N_{r,s}(t)$ is just the transition amplitude of an excitation (the $|1\rangle$ state) from the $s$th to the $r$th site of a graph of $n$ spins. It is also equal to the fidelity between these states for the case of pure states. It is decided that $B$ will pick up the $r$th spin (and hence complete the communication protocol) at a predetermined time $t = t_0$. We show later in chapter \ref{chap:phyreal} that the phenomenon of perfect state transfer is not only restricted to spin Hamiltonians but is a general property of many Hamiltonians which are similar in action with the $H_{XY}$ or $H_{Hei}$ models. We show that superconducting transmon qubit network Hamiltonian also allows perfect state transfer along with the power of quantum computation.


\section{Review on fidelity and distance measures}
Fidelity is a measure of how apart two states are in the Hilbert space, or to say, simply the overlap of states (see chapter 9 in \cite{ref:55} for distance measures). Fidelity for two density matrices $\rho$ and $\sigma$ is defined as
\begin{equation}
    F(\rho,\sigma)=\sqrt{\sqrt{\sigma}\rho\sqrt{\sigma}}\in [0,1]
\end{equation}
which takes the form (when $\sigma=|\psi\rangle\langle\psi|$, that is, a pure state) of 
\begin{equation}
    F(\rho,\sigma)=\sqrt{\langle\psi|\rho|\psi\rangle}.
\end{equation}
We can treat $\sigma=|\psi\rangle\langle\psi|$ as the final desired (pure) state of the system and $\rho$ as the noisy obtained state (after an evolution) and check the fidelity of the two as a check for closeness.
Fidelity is related to the trace distance $D(\rho,\sigma)$ by the following mathematical relations
\begin{equation}
    1-F(\rho,\sigma)\leq D(\rho,\sigma)\leq \sqrt{1-F^2(\rho,\sigma)}
\end{equation}
where the trace distance is defined as $D(\rho,\sigma)=\frac{1}{2}\Tr|\rho-\sigma|$. They can be used interchangeably but we will stick with the measure of the fidelity for the purpose of perfect state transfer.


\section{Conditions on Perfect State Transfer for pure states}
First let us recall that for the general Hamiltonian (for the first excitation subspace)
\begin{equation}
    H^1=\sum_{i<j}J_{ij}|i\rangle\langle j|+J_{ij}^*|j\rangle\langle i|+\sum_{i=1}^nB_i|i\rangle\langle i|
\end{equation}
($J_{ij}\in\mathbbm{C}$ in general) where $i$ and $j$ are connected, has the associated space-vectors $|n\rangle$ (again, the vertex vector space of $G$) and the sender node $A$ and the receiver node $B$ are defined in the same manner. We can easily diagonalize any such given Heisenberg or XY-model Hamiltonian in the manner $H_1=\sum_{j=1}^n \lambda_j|\lambda_j\rangle\langle \lambda_j|$. We want to prove necessary and sufficient conditions for perfect state transfer in the first excitation subspace of a spin-preserving Hamiltonian $H^1$. These conditions can be expressed as the existence of a state transfer time $t_0$ and transfer phase $\phi$ in a condition on the eigenvectors,
\begin{equation}
    |\langle A|\lambda_j\rangle|=|\langle B|\lambda_j\rangle|
\end{equation}
for all $j$, and on the eigenvalues,
\begin{equation}
\label{eqn:phij}
    \lambda_j=-\phi -\varphi_j +2\pi m_j
\end{equation}
for all $j$ for which $\langle A|\lambda_j \rangle\neq 0$, where $m_j$ is an integer. The definition for $\varphi_j$ is
\begin{equation}
    \varphi_j=\arg \left( \frac{\langle\lambda_j|B\rangle}{\langle\lambda_j|A\rangle}\right)
\end{equation}
To prove necessity, we start from the definition of state transfer in the single excitation subspace, requiring that there exist a $t_0$ and $\phi$ such that
\begin{equation}
    e^{-iH^1t_0}|A\rangle=e^{i\phi}|B\rangle.
\end{equation}
This is equivalent to requiring that
\begin{equation}
\label{eqn:modcondition}
    |\langle B|U(t)|A\rangle|=1
\end{equation}
where $U(t)$ is the evolution operator. This is exactly the fidelity for the pure states.
By taking the overlap with an eigenvector,
\begin{equation}
    e^{-i\lambda_jt_0}\langle \lambda_j|A\rangle=e^{i\phi} \langle \lambda_j|B\rangle
\end{equation}
then the amplitude and phase matching gives the previously stated conditions precisely. Having proved necessity, we now prove sufficiency. Assume that a suitable finite $t_0$ and $\phi$ exist. So,
\begin{equation}
\label{eqn:spectralevo}
    e^{-iH^1t_0}|A\rangle=\sum_j|\lambda_j\rangle\langle\lambda_j|A\rangle e^{-i\lambda_jt_0}
\end{equation}
We can now impose the conditions on $\lambda_j$,
\begin{equation}
\begin{split}
    e^{-iH^1t_0}|A\rangle &= \sum_{\langle\lambda_j|A\rangle\neq 0}|\lambda_j\rangle\langle\lambda_j|A\rangle e^{i(2\pi m_j+\phi +\varphi_j)}\\
    &=\sum_{\langle\lambda_j|A\rangle\neq 0}|\lambda_j\rangle\langle\lambda_j|A\rangle e^{i\phi}\left( \frac{\langle\lambda_j|B\rangle}{\langle\lambda_j|A\rangle}\right)\\
    &=e^{i\phi} |B\rangle
\end{split}
\end{equation}
This yields a rather simple set of conditions which one use to verify that perfect transfer occurs in a network.

Alternatively, we start with the state $|A\rangle$ then evolve it according to the free Schrodinger quantum evolution as $e^{-iH^1t}|A\rangle$ such that for some finite $t_0$ we have $e^{-iH^1t}|A\rangle=e^{i\phi}|B\rangle$. Taking inner product with $\langle A|$ and modulus we have equation (\ref{eqn:modcondition}) which can be rewritten using equation (\ref{eqn:spectralevo}) as
\begin{equation}
\label{eqn:pstcond}
    \bigg| \sum_{j=1}^n e^{-i\lambda_j t_0} \langle B|\lambda_j\rangle\langle\lambda_j|A\rangle\bigg|=1.
\end{equation}
This condition is all we need to ensure for completing the task of perfect state transfer for pure states. In higher excitation subspaces generally more conditions are required to ensure perfect state transfer. However, we should carefully note that condition \ref{eqn:pstcond} signifies a state swap between $|A\rangle$ and $|B\rangle$ whereas the actual problem was to have
\begin{equation}
    \left( \cos \frac{\theta}{2}|0\rangle+e^{i\phi}\sin \frac{\theta}{2}|1\rangle \right)_A^{\otimes (n-1)}\longrightarrow \left( \cos \frac{\theta}{2}|0\rangle+e^{i\phi}\sin \frac{\theta}{2}|1\rangle \right)_B^{\otimes (n-1)}.
\end{equation}
This can be understood in the sense that the actual state of the qubit is encoded in $A$ which gets swapped with $B$ at exactly $t=t_0$ with fidelity of unity.


\section{Symmetries in network}
Symmetries are very important tool in understanding any
system. The construction of perfect state transfer chains originally relied heavily on an assumption of symmetry which was subsequently proven to be necessary \cite{ref:2}. We are interested in whether every perfect transfer Hamiltonian $H^1$ has a symmetry operator $S$ which satisfies $SH^1S^\dagger=H^1$ and also $S|A\rangle=|B\rangle$.

The existence of a symmetry can be proven by construction. By
defining a unitary rotation that is diagonal in the basis of the Hamiltonian, it will clearly satisfy the commutation property. Specifying the phases as
\begin{equation}
    S=\sum_{\langle\lambda_j|A\rangle\neq 0}e^{i\varphi_j}|\lambda_j\rangle\langle\lambda_j|+\sum_{\langle\lambda_j|A\rangle= 0}|\lambda_j\rangle\langle\lambda_j|
\end{equation}
allows us to verify the desired transformation
\begin{equation}
    S|A\rangle=\sum_{\langle\lambda_j|A\rangle\neq 0}e^{i\varphi_j}|\lambda_j\rangle\langle\lambda_j|A\rangle=|B\rangle.
\end{equation}
For a \textit{real} Hamiltonian $H^1$, $S^2 = \mathbb{I}$, so $S|B\rangle = |A\rangle$. It is worth observing that there is still continuous freedom in the definition of $S$ — the phases that are applied to the eigenvectors for which $\langle A|\lambda_n\rangle = 0$ — which gives away to see that $S$ is not necessarily a permutation (which cannot be continuous). If $S$ were a permutation, it would have to be the mirror symmetry operator. If one knows the symmetry operators of a system for some a priori reason, this identifies the values $\varphi_n$ (the eigenvalues of $S$) and associates them with specific eigenspaces. Hence, for systems where $S$ can be identified, and the eigenvalues can be modified while preserving the symmetry, we should be able to construct perfect transfer networks. This was the key insight for designing chains, and it can hopefully now be applied in other scenarios.

\subsection{Mirror symmetry}
With mirror symmetry in hands, we have periodicity implied as follows \cite{ref:2}. With a system capable of perfect state transfer, initialised in the state $|A\rangle$, at time $t_0$ we have the state
\begin{equation}
    e^{-iH^1t_0}|A\rangle=e^{i\phi}|B\rangle
\end{equation}
but by the definition of a symmetric system, $A$ and $B$ are entirely equivalent, and thus after another period of time $t_0$ , we have the state 
\begin{equation}
    e^{-iH^1\cdot 2t_0}|A\rangle=e^{-iH^1t_0}e^{i\phi}|B\rangle=e^{i2\phi}|A\rangle
\end{equation}
and thus the system is periodic, up to a phase $2\phi$, with
period $2t_0$ . Thus we conclude that \textit{a mirror symmetric system must be periodic if it is to allow perfect state transfer.} This may be written most simply as
\begin{equation}
    |\langle A|U(2t_0)|A\rangle|=1
\end{equation}
for some $t_0 \in (0,\infty)$. Let us examine the general state of a periodic system with period $2t_0$ . We can write
\begin{equation}
    |\psi (2t_0)\rangle=\sum_ja_je^{-i2\lambda_jt_0}|\lambda_j\rangle=e^{i2\phi} \sum_ja_j|\lambda_j\rangle
\end{equation}
for eigenstates $|\lambda_j\rangle$ of $H^1$ with corresponding eigenvalues $\lambda_j$, where $a_j=|\langle A|\lambda_j\rangle|^2$. Hence, for all of the stationary states $|j\rangle$, we have the condition
\begin{equation}
    2\lambda_jt_0-2\phi=2k_j\pi
\end{equation}
where the $k_j$’s are integers. Eliminating $\phi$ between two of these, we get that
\begin{equation}
    (\lambda_j-\lambda_l)2t_0=2\pi(k_j-k_l)
\end{equation}
and eliminating $t_0$ between any two of these $(\lambda_{j'}\neq \lambda_{l'})$ gives
\begin{equation}
\label{eqn:rationality}
    \frac{\lambda_j-\lambda_l}{\lambda_{j'}-\lambda_{l'}}=\frac{k_j-k_l}{k_{j'}-k_{l'}}\in \mathbb{Q}
\end{equation}
where $\mathbb{Q}$ denotes the set of rational numbers. As the $k_j$’s are integers, this implies that the ratio is rational. Hence, a symmetric system capable of perfect state transfer must be periodic, which is equivalent to the requirement that the ratios of the differences of the eigenvalues are rational. Both conditions are completely equivalent.

\subsection{Bi-partite graphs}
For a bipartite graph, its vertices can be divided into two disjoint sets $U$ and $W$ such that every edge connects a vertex in $U$ to one in $W$. Vertex sets $U$ and $W$ are called the parts of the graph. This construction is one special kind of partition of a graph. For real Hamiltonians ($J_{ij}\in \mathbbm{R}$), this implies that $B_j=0$ \cite{ref:2}. And also imposes that the transfer phase $e^{i\phi}$ is $\pm 1$ if the transfer distance is even and $\pm i$ if the transfer distance is odd. Transfer distance is number of edges between two vertices between which perfect state transfer is performed. Such a bi-partite graph can be understood as marking of vertices (see section \ref{subsec:graphproperties}). let us mark the vertices belonging to $U$ positive and those belonging to $W$ as negative. Then our symmetry operator $S$ can be written as
\begin{equation}
S=\sum_{j\in U}|j\rangle \langle j|-\sum_{j\in W}|j\rangle \langle j|.
\end{equation}
This bicoupling for marked vertices also has a conservation property for the Hamiltonian $H^1$ that
\begin{equation}
    \{ S,H^1 \}=0
\end{equation}
(notice the anti-commutator instead of a commutator) which means that for any eigenvector $|\lambda_j\rangle$ of $H^1$ with $\lambda_j\neq 0$, $S|\lambda_j\rangle$ must also be an eigenvector of $H^1$, but with a negative eigenvalue $-\lambda_j$. Let us now assume that the initial vertex $A\in U$ partition. Then, it can be re-expressed as
\begin{equation}
    |A\rangle =\sum_{\substack{\lambda_j>0\\ \langle \lambda_j|A\rangle\neq 0}}\langle \lambda_j|A\rangle\left( |\lambda_j+S|\lambda_j\rangle \right).
\end{equation}
We assumed there are no zero eigenvalues except at most one zero eigenvector with non-zero overlap with $|A\rangle$ which needs to be accounted for; without loss of generality. This eigenvector must satisfy $S|\lambda_0\rangle=|\lambda_0\rangle$. The quantum evolution of this is
\begin{equation}
    e^{-iH^1t_0}|A\rangle=\sum_{\substack{\lambda_j>0\\ \langle \lambda_j|A\rangle\neq 0}}\langle \lambda_j|A\rangle\left(e^{-i\lambda_jt_0} |\lambda_j+e^{+i\lambda_jt_0}S|\lambda_j\rangle \right).
\end{equation}
If we now calculate the overlap with some vertex $C$, such that $C\in U$, $S|C\rangle=|C\rangle$, then
\begin{equation}
    \langle C|e^{-iH^1t_0}|A\rangle=\sum_{\substack{\lambda_j>0\\ \langle \lambda_j|A\rangle\neq 0}}2\langle C|\lambda_j\rangle \langle\lambda_j|A\rangle \cos(\lambda_jt_0),
\end{equation}
such that the amplitude is always real. Because $C\in U$ was a positive vertex, it must be at an even distance from $A$. Otherwise, if $C\in W$ is a negative vertex, then $S|C\rangle=-|C\rangle$ and we have
\begin{equation}
    \langle C|e^{-iH^1t_0}|A\rangle=\sum_{\substack{\lambda_j>0\\ \langle \lambda_j|A\rangle\neq 0}}2i\langle C|\lambda_j\rangle \langle\lambda_j|A\rangle \sin(\lambda_jt_0),
\end{equation}
such that the amplitude is always imaginary. Because $C\in U$ was a negative vertex, it must be at an odd distance from $A$. there are more peculiar implications of basic symmetries. For our relevance, these two symmetries are enough. We conclude that the real nature of the Hamiltonian plays a very important role in determining these effects of symmetries.

\subsection{Symmetry of balanced graph}
\label{subsec:balanced}
For definition of balanced graph, see subsection \ref{subsec:graphproperties}. Signing of graph edges can be realised by a transformation $\Theta$, the new graph is $\Sigma=(G,\sigma)$. We assume that $\sigma$ is balanced or anti-balanced signing of $G$. Then due to lemma 1 in \cite{ref:50}, we have that if a graph $G$ has perfect state transfer, then so does the signed graph $\Sigma = (G, \sigma)$. Suppose $G$ has perfect state transfer from vertex $A$ to $B$. If $\sigma$ is a balanced or antibalanced signing of $G$, then there is a diagonal $\pm 1$ matrix $\Theta$ for which $A(G^\sigma
) = \pm \Theta^{-1}A(G)\Theta$.
Thus, we have 
\begin{equation}
    \langle A| e^{-iA(G^\sigma)t} |B|\rangle= \langle A| \Theta^{-1}e^{\pm iA(G)t} \Theta |B\rangle=\pm \langle A| e^{\pm iA(G)t}|B\rangle.
\end{equation}
Therefore, $G^\sigma$ has perfect state transfer from $A$ to $B$.



Directed or oriented graphs can be defined in similar manner. A directed graph (or digraph) is a graph that is made up of a set of vertices connected by edges, where the edges have a direction associated with them. A directed graph is an ordered pair $G = (V, A)$ where
\begin{itemize}
    \item $V$ denotes the vertex set, and
    \item $E$ is now a set of ordered pairs of distinct vertices, called directed edges.
\end{itemize}



For oriented graphs, the direction of edges is modeled with a sign. Hence, the adjacency matrix is skew symmetric. We take the adjacency matrix of an oriented graph to be the matrix $A$ with rows and columns indexed by the vertices of the graph, and $A_{ij}$ equal to 1 if the the edge $\{i,j\}$ is oriented from $i$ to $j$, equal to $−1$ if the the edge $\{i,j\}$ is oriented from $j$ to $i$, and equal to zero if $i$ and $j$ are not adjacent. Consequently $A(G)$ is skew symmetric \cite{ref:19}. If $A(G)$ is a skew symmetric matrix, then $iA(G)$ is Hermitian and so
\begin{equation}
U(t) = \exp(−it(iA(G))) = \exp(tA(G))
\end{equation}
is the transition matrix of a continuous quantum walk (a quantum evolution with adjacency or Laplacian matrix over the vertex space of a graph is simply a continuous quantum walk). We note that $U(t)$ is real and orthogonal. Similarly, a general weighted graph can be obtained if weight $w(i,j)$, some real number, is assigned to each edge $(i,j)$ for all possible edges. Weighted chain for perfect state transfer in last chapter was an example of weighted graph.


\section{Impossibility of routing and need for a custom architecture}
\label{sec:routingimpossible}
Routing of an initial state means the freedom to be able to choose which vertex we wish to perfectly transfer the state at the initial given vertex. The idea of routing is eventually related to the complex nature of the Hamiltonian involved \cite{ref:2}. If for a given graph $G$, perfect state transfer is possible between $A$ and $B$, and the minimum time for which this is possible is $t_{AB}$, then we have
\begin{equation}
    e^{-iH^1t_{AB}}|A\rangle=e^{i\phi} |B\rangle.
\end{equation}
Furthermore, since the Hamiltonian is real, all the $\varphi_j$ are 0 or $\pi$ (see equation (\ref{eqn:phij})). Then for twice the time, $2t_{AB}$, we will have perfect revivals as
\begin{equation}
    e^{-iH^12t_{AB}}|A\rangle=e^{2i\phi} |A\rangle.
\end{equation}
which is periodic dynamics. Now, if perfect routing is possible, this means that we must have a time $t_{AC}<t_{AB}$ such that
\begin{equation}
    e^{-iH^1t_{AC}}|A\rangle=e^{i\phi'} |C\rangle
\end{equation}
and similar revivals as
\begin{equation}
    e^{-iH^12t_{AC}}|A\rangle=e^{2i\phi'} |A\rangle.
\end{equation}
These equations together give
\begin{equation}
    e^{-iH^1(2t_{AC}-t_{AB})}|B\rangle=e^{i2\phi'-\phi} |A\rangle
\end{equation}
But this is simply the perfect state transfer between $B$ and $A$ with time $|2t_{AC}-t_{AB}|<t_{AB}$, which is impossible by the initial assumption that $t_{AB}$ was the shortest time for perfect state transfer between $A$ and $B$. Therefore, the transfer from $A$ to $C$ cannot exist. As a result, if there is perfect state transfer to one site from a given site, there cannot be a perfect state transfer from this given site to any other sites. However, this can be tackled as shown in \cite{ref:16} by considering the dynamics of complex Hamiltonians.

This means that for a real Hamiltonian, we can have only pair of vertices where perfect state transfer is possible. If the network is very large, then this becomes less useful because given a vertex only one other vertex can be reached out with the transfer. This calls for the possibility of an architecture which allows more freedom for routing. In chapter \ref{chap:twohop}, we propose such an architecture with additional conditions to this idea such that routing is possible to any arbitrary site in the network for real Hamiltonians. Such an architecture will be very useful for the era of large scalable quantum processors where a state needs to be perfectly transferred to any given qubit in the processor with maximum fidelity. Moreover, in chapter \ref{chap:phyreal} we also show that our architecture works with the conventional quantum computing architecture which can be used to perfectly transfer or swap arbitrary states between any two given qubits in just two steps. Our scheme greatly reduces the circuit depth if the swap has to be performed between distant qubits using the universal quantum gates.


\section{Bounds on transfer rate in chains and beyond}
Some weaker bounds for transfer rate for spin chains are known. Transfer rate has been discussed in \cite{ref:17}\cite{ref:18}. Consider assigning a second state at site $A$ before the first state has been moved to site $B$, and impose the condition that 
the first state should still arrive at $B$ perfectly. After the existence of a transfer Hamiltonian is guaranteed, the necessary condition for the ability to insert a second quantum state into the spin network to the same initial input qubit at some time $t$ without disturbing the first quantum state is that
\begin{equation}
    \langle A|e^{-iH^1t}|A\rangle=0.
\end{equation}
This condition is necessary and sufficient for chains. However, for general networks which are not chains, more conditions are required. More generally, consider inserting many different states at different times, but the condition for the chain remains the same for all possible times. This will certainly not be the case for the dynamically changing many-excitation states in an arbitrary network. Yet this is a necessary condition. Therefore, given $p$ unique time intervals $t_i<t_0$ at which $\langle A|e^{-iH^1t_i}|A\rangle=0$, perfect state transfer can occur to a site $|B\rangle$ at a distance of $d$ in time $t_0$. With $p$ time intervals, one can have $p$ unique time intervals $t_i$ by imposing fixed intervals. For each transfer distance $m=1,2,...,d-1$,
\begin{equation}
    \langle B|H^1_m|A\rangle=0.
\end{equation}
This can be expressed as
\begin{equation}
    \sum_{j=1}^ne^{-i\varphi_j}\lambda^m_ja_j=0
\end{equation}
with the same $a_j=|\langle A|\lambda_j\rangle|^2$ which can be re-expressed again after removing the degeneracies ($n$ reduces to $n'$ number of unique eigenvalues) from the system in the linear equation form as
\begin{equation}
    \left( \sum_{m=0}^{d-1}\sum_{j=1}^{n'}e^{-i\varphi_j}\lambda^m_ja_j|m\rangle \langle j| \right)\left( \sum_{j=1}^{n'}a_j|j\rangle \right)
\end{equation}
Each of the $d-1$ rows is linearly independent. Linearly expressed, the normalization condition says that
\begin{equation}
    \left( \sum_{j=1}^{n'}\langle j| \right)\left( \sum_{j=1}^{n'}a_j|j\rangle \right)
\end{equation}
We can now add conditions corresponding to $\langle A|e^{-iH^1t_i}|A\rangle=0$ and further divide these into real and imaginary components. The real part is
\begin{equation}
    \left( \sum_{i=1}^{p}\sum_{j=1}^{n'}\cos(\lambda_jt_i)|i\rangle \langle j| \right)\left( \sum_{j=1}^{n'}a_j|j\rangle \right)
\end{equation}
and the imaginary component gives
\begin{equation}
    \left( \sum_{i=1}^{p}\sum_{j=1}^{n'}\sin(\lambda_jt_i)|i\rangle \langle j| \right)\left( \sum_{j=1}^{n'}a_j|j\rangle \right)
\end{equation}
Given that all these times ti are less than $t_0$, the half period
of the system, all of these rows must be linearly independent from each other. (Since we are assuming the Hamiltonian is real and performs perfect transfer, the system is periodic with a period $2t_0$.) Hence, if a suitable set of an is to possibly exist, it must be the case that
\begin{equation}
    2p+d\leq n'\leq n
\end{equation}
Ideally, we want the maximum transfer distance, which would be $n − 1$ (a chain), imposing that $p = 0$. The only way to increase the perfect transfer rate is to reduce the transfer distance. However, you cannot also lower the state transfer time (as you would expect by shortening the transfer distance). This is because the Margolus-Levitin theorem \cite{ref:17} imposes a minimum time for evolving between two orthogonal states, such as a $|1_A\rangle$ as an input state and the $|0_A\rangle$ required for the next input. Hence the transfer time is bounded from below by $(p+1)\pi/(4\sum_jJ_{1j})$. In some sense, the “standard” perfect state transfer chains saturate the bound for a chain of $n$ qubits, any state $|j\rangle$ transfers a distance $D = n + 1 − 2j$, but there are $j − 1$ distinct times $t_i$ such that $\langle j|e^{-iH^1t_i}|j \rangle=0$. Unfortunately, however, these times are not equally spaced, so they are not useful for achieving a high rate of transfer.


\section{Limitations over the uniformly coupled chain}
\label{sec:chainlimitation}
It is desirable to maximise the distance over which communication is possible for a fixed number of qubits. The simplest and optimal arrangement, in this case, is just a linear chain of $n$ qubits, where $A$ and $B$ are the qubits at opposite ends of the chain.

Let us start with the XY chain of qubits, with uniform couplings $J_{i,i+1} = 1$ for all $1 ≤ i ≤ n − 1$. The
Hamiltonian reads
\begin{equation}
    H=\frac{1}{2}\sum_{i=1}^{n-1}\sigma_i^x\sigma_{i+1}^x+\sigma_i^y\sigma_{i+1}^y
\end{equation}
In this case, one can compute $f_{AB}(t)$ explicitly by diagonalizing the Hamiltonian or the corresponding adjacency
matrix. The eigenstates and the corresponding eigenvalues are given by
\begin{equation}
    |\Tilde{k}\rangle=\sqrt{\frac{2}{n+1}}\sum_{j=1}^n\sin \left(\frac{\pi kj}{n+1} \right)
\end{equation}
and 
\begin{equation}
    \lambda_k=E_k=-2\cos \frac{k\pi}{n+1}
\end{equation}
with $k = 1,...,n$. Thus
\begin{equation}
    f_{AB}(t)=\frac{2}{n+1}\sum_{k=1}^n\sin \left( \frac{\pi k}{n+1}\right)\sin \left( \frac{\pi kn}{n+1}\right)e^{-iE_kt}
\end{equation}
Perfect state transfer from one end of the chain to another is possible for $n = 2$ and $n = 3$, where we find that $f_{AB} (t) = −i\sin(t)$ and $f_{AB} (t) = -\sin^2(t/\sqrt{2})$ respectively. Hence, for perfect state transfer, that is, to have $|f_{AB}(t_0)|=1$, we have
\begin{equation}
\label{eqn:timeforPST}
    t_0=
    \begin{cases}
    & \pi/2,\text{ for n=2},\\
    & \pi/\sqrt{2}, \text{ for n=3},\\
    \end{cases}
\end{equation}
in the units of energy inverse. 
We have shown that perfect state transfer is possible for chains containing 2 or 3 qubits. It can be now shown that it is not possible to get perfect state transfer for $n \geq 4$. This work was originally done in \cite{ref:5}. A chain is symmetric about its centre. Hence the rationality for eigenvalues condition equation (\ref{eqn:rationality}) for perfect state transfer applies for all longer chains. If we pick specific values $l=2,$ $j=n-1,$ $l'=1$, and $j'=n$, then using the expression for eigenvalues for the chain, this condition becomes
\begin{equation}
    \frac{\cos \frac{2\pi}{n+1}}{\cos \frac{\pi}{n+1}} \in \mathbb{Q}
\end{equation}
to hold for the perfect state transfer. The concept of algebraic numbers can be used to find the value of $n$ for which the above condition holds. An algebraic number $x$ is a complex number that satisfies a polynomial equation of the form
\begin{equation}
    a_0x^m+a_1x^{m-1}+...+a_{m-1}x+a_m=0,
\end{equation}
with integral coefficients $a_i$. Every algebraic number $\alpha$ satisfies a unique polynomial equation of least degree. The degree of this polynomial is called the degree of $\alpha$. If $\alpha$ satisfies a polynomial of degree $m$, then it i called an \textit{algebraic integer} of degree $\alpha$. An algebraic integer of degree $m$ is also number of degree $m$. Rational numbers are algebraic numbers with degree $1$, and numbers with degree $\geq 2$ are necessarily irrational. If $n>1$, and gcd$(k,n+1)=1$ then $\cos [\pi k/(n+1)]$ is an algebraic integer of degree $\phi(2(n+1))/2$, where $\phi$ is the Euler phi function and we have that $\phi(2(n+1))/2\geq 3$ for $n\geq 6$. See \textit{Irrational numbers} by Lehmer (Mathematical Association of America, 1956).

It we assume that the expression of the form 
\begin{equation}
        \frac{\cos 2\theta}{\cos \theta}=\frac{p}{q} \in \mathbb{Q} \quad p,q\in \mathbbm{Z}
\end{equation}
(with $\theta=\pi/(n+1)$ is an algebraic number of degree $\geq 3$) is rational then using trigonometric identity $\cos 2\theta =2\cos^2\theta -1$ we have
\begin{equation}
    \cos^2 \theta -\frac{p}{2q}\cos \theta -\frac{1}{2}=0
\end{equation}
which has rational coefficients. This means that $\cos \theta$ is algebraic with degree $\leq 2$. Hence, this is a contradiction and $\cos 2\theta/\cos\theta$ must be irrational. Therefore, this strictly proves that for $n\geq 6$, perfect state transfer is impossible (between the end vertices of the chain) as $\deg (n)\geq 3$. Furthermore, for $n=4$ and $n=5$ similar calculations show that
\begin{equation}
    \frac{\cos 2\theta}{\cos \theta}=\frac{p}{q} \notin \mathbb{Q}.
\end{equation}
We, therefore, in conclusion, have that it is impossible to perform perfect state transfer in unmodulated chains of constant coupling for number of nodes $n\geq 4$.

However, modulated chains can allow perfect transfer over arbitrary long distances as we shall see in the next section. This has been explored in \cite{ref:5} as the column method for pseudo-hyperspin. The result is to select the coupling $J_i=\sqrt{i(n-i)}$ and the chain magically supports perfect state transfer. But such modulations over a considerable length of chain is very hard to engineer experimentally.


\section{Perfect State Transfer in long and weighted chains}
\label{sec:longchains}
The workaround to enable perfect state transfer for chains with length $n\geq 4$, the idea of projecting a \textit{hypercube} (see chapter \ref{chap:twohop} for detailed study on hypercubes) onto a spin chain was originally studied in \cite{ref:5}. The hypercube \textbf{Q}$_k$ resulting from the $k-$fold Cartesian product (see section \ref{sec:cartesian}) of one-link graph has the property that it can seen as arrangement of its vertices as columns such that there are no edges between the vertices within any column and edges only join vertices in different columns. And furthermore, each vertex in column $i$ must have the same number of incoming (from column $i-1$) and outgoing (to column $i+1$) edges as all other vertices in that column (it is simply due to the property of hypercubes that each vertex has the same number of adjacent vertices). 

Let \textbf{Q}$_k$ be arranged in $n_c$ columns, call the graph as $G$. The size of each column is $c_i:=|G_i|=^{n_c-1}$C$_{i-1}$ and label the vertices in each column as $G_{ij}$ with $j=\{1,2,...,c_i\}$. Start with a vertex $A$, then the $i$th column is $i-1$ edges away from the vertex $A$. From each column there are edges going backward to the previous column and edges going forward to the next column (except for the end columns). These are denoted as
\begin{equation}
    \begin{split}
        & C^{for}_i:=\{ (G_{ij},k):j\in \{1,...,c_i\},k \in \{1,...,f_i\} \}\\
        & C^{back}_i:=\{ (G_{ij},k):j\in \{1,...,c_i\},k \in \{1,...,b_i\} \}
    \end{split}
\end{equation}
where $f_i$ and $b_i$ denote the number of forward and backward edges, respectively, for the $i$th column. If all the edges are to have ends, then $|C^{for}_i|=|C^{back}_i|$. Since there is only one vertex (qubit) in the first column ($c_1=1$), each vertex in the second column has only a single edge going backward, implying $b_1=1$. Starting from this boundary condition, and that $f_i$ and $b_i$ must be integers for all $1\leq i \leq n_c$, we have the condition that
\begin{equation}
    c_if_i=c_{i+1}b_{i+1}
\end{equation}
which implies
\begin{equation}
    \frac{f_i}{b_{i+1}}=\frac{n_c-i}{i}.
\end{equation}
The solution for this is to choose $f_i=n_c-i$ and $b_i=i-1$. Therefore, we end up with a graph such that for every pair of numbers ($i,j$), $G_{ij}$ ic onnected with $n_c-i$ columns in $G_{i+1}$ and each vertex in $G_{i+1}$ is connected with $i-1$ vertices in $G_i$.

We define the vectors that span the column space $\mathcal{H}_c$ as,
\begin{equation}
    |\col i \rangle:=\frac{1}{\sqrt{c_i}}\sum_{j=1}^{c_i}|G_{}ij\rangle.
\end{equation}
Class of networks with this column representation have the special property that throughout the quantum evolution with the adjacency matrix the instantaneous state always remains in the column space $\mathcal{H}_c$. Thus, it can be seen as the problem of perfect state transfer from $G_{11}$ to $G_{n_c1}$, for instance. Also note that the \textit{antipodal} vertices of a hypercube constructed from one-link and two-link graphs admit perfect state transfer (see next section). The matrix elements of the adjacency matrix of $G$, restricted to the column space are given as 
\begin{equation}
\label{eqn:Jj}
    J_i:=\langle \col i|H_G|\col i+1\rangle=\sqrt{i(i-n_c)}.
\end{equation}
The matrix form is
\begin{equation}
    J=\begin{bmatrix}
        0 &J_1 &0 &0 &... &0\\
        J_1 &0 &J_2 &0 &... &0\\
        0 &J_2 &0 &J_3 &... &0\\
        0 &0 &J_3 &0 &... &0\\
        \vdots &\vdots &\vdots &\vdots &\ddots &J_{n_c-1}\\
        0 &0 &0 &0 &J_{n_c-1} &0
      \end{bmatrix}
\end{equation}
This is because
\begin{equation}
    \langle \col i|H_G|\col i+1\rangle=\frac{1}{\sqrt{c_ic_{i+1}}}\sum_{j=1}^{c_i}\sum_{j'=1}^{c_{i+1}}\langle G_{ij}|H_G|G_{i+1j'}\rangle
\end{equation}
\begin{equation}
    =\frac{1}{\sqrt{c_ic_{i+1}}}c_i(n_c-1)=\sqrt{i(n_c-i)}.
\end{equation}
Clearly, this is identical to the matrix form of the $XY$-model chain with just specially engineered coupling strengths $\{ J_i\}$ sch that the Hamiltonian is
\begin{equation}
\label{eqn:XYweighted}
    H_G \equiv H_{XY}=\frac{1}{2}\sum_{j=1}^{n_c-1}J_j\left( \sigma^x_j\sigma^x_{j+1}+ \sigma^y_j\sigma^y_{j+1} \right)
\end{equation}
where $J_j$ is given by equation (\ref{eqn:Jj}). Such a chain must allow perfect state transfer over any length $n_c$ (where we redefine $|A\rangle:=|\col 1\rangle$ and $|B\rangle:=|\col n_c \rangle$) because the hypercube does (see next section). Similar weighted chain can be realised with the Heisenberg model with local magnetic fields as
\begin{equation}
    H_G \equiv H_{Hei}=\frac{1}{2}\sum_{j=1}^{n_c-1}J_j\Vec{\sigma}_j\cdot \Vec{\sigma_{j+1}}+\sum_{j=1}^{n_c}B_j\sigma^z_j
\end{equation}
with $B_j = \frac{1}{2}(J_{j-1}+J_j)-\frac{1}{2(n_c-2)}\sum_{k=1}^{n_c-1}J_k$ which has been specially chosen to cancel the diagonal elements to bring the $XY$ and Heisenberg model on equal grounds. This model is now perfectly equivalent to the transfer dynamics of a weighted chain with $n_c$ vertices (qubits). With this idea of projecting a hypercube to a spin chain, we see that the chain is enabled for perfect state transfer with the difference being that it is now modulated with special coupling strengths which gives rise to a weighted chain graph.


\section{Perfect state transfer over greater distances}
\label{sec:hypercubetransfer}
Perfect state transfer over arbitrary distances is impossible
for a simple unmodulated spin chain (limited to $n=2$ and $n=3$ only!). Clearly it is desirable to find a class of graphs that allow state transfer over larger distances. One approach to achieved this apart from modulation of spin chains is to construct larger arbitrary graphs using the graph products of small blocks of $n=2$ or $n=3$ which serve as the fundamental building blocks for such construction. One well explored construction is through the Cartesian product of linear chains proposed in \cite{ref:5}. We examine the $d$-fold Cartesian product of one-link (two-vertex) and two-link (three-vertex) chain $G$. We denote this by $G^d:=\square^dG$ where the square denotes the Cartesian product of $G$ with itself. See section \ref{sec:cartesian} for details and construction of Cartesian product of graphs. Following the binary and ternary representation for the vertex labeling as in chapter \ref{chap:twohop}, consider two antipodal vertices $A(0,0,...,0)$ and $B(1,1,...,1)$ (labels of length $d$ each) for for one-link. Similarly, for two-link hypercube we have the antipodal points as $A(0,0,...,0)$ and $B(2,2,...,2)$ respectively. This can be proved that for \textit{any dimension} $d$ $|f_{AB}(t)|=1$ for $t=t_0=\pi/2$ and $t=t_0=\pi/\sqrt{2}$ respectively! This means that over this large hypercube the perfect transfer takes place in the same time as the one-link and two-link chain respectively. Hence, $t_0$ is the perfect state transfer time for transfer between antipodal vertices $A$ and $B$ for $G^d$ also.

The first sign of perfect state transfer for hypercubes can be seen due to equation (\ref{eqn:rationality}). For hypercubes from one-link and two-link seed graph $G$, the ratios of differences of all possible eigenvalues are rational, which permits perfect state transfer. Furthermore, it can be proved strongly by construction. As already established, the Hamiltonian dynamics of $XY$ interaction Hamiltonian is identical to the dynamics of the adjacency matrix in the first excitation subspace. This holds equally for the Cartesian product of $G$, by construction. Hence,
\begin{equation}
    H=A(G^d)=\sum_{j=0}^{d-1}\mathbbm{I}^{\otimes j}\otimes A(G)\otimes \mathbbm{I}^{\otimes d-j-1}
\end{equation}
and
\begin{equation}
    e^{-iHt}=(e^{-iA(G)t})^{\otimes d}
\end{equation}
Thus, if we evolve the system for time $t_0$, we get perfect state transfer along each dimension. Each term in the tensor product applies to a different element of the basis. We therefore achieve perfect state transfer between $A$ and $B$ as well as between any qubit and its mirror vertex qubit. The fidelity of the state transfer is simply the $d$th power of the fidelity for the original chain:
\begin{equation}
    F_{G^d}(t)=[F_G]^d(t)=
    \begin{cases}
    & \sin^d(t), \quad \text{for }2^d \text{ vertices}\\
    & \sin^{2d}(t/\sqrt{2}), \quad \text{for }3^d \text{ vertices}.
    \end{cases}
\end{equation}
This formalism also extends over to the Heisenberg couping Hamiltonian. This is because, in the case of a two-qubit chain, the Hamiltonian in the single excitation subspace is represented by a matrix with identical diagonal elements, and hence is the same as the Hamiltonian of an $XY$ model up to a constant energy shift, which just adds a global phase factor. Hence, the same hypercube transfer dynamics holds true for Heisenberg scheme. Thus, any quantum state can be perfectly transferred between the two antipodes of the one-link and two-link hypercubes of any dimensions in constant time.

\section{Contribution from this thesis on hypercubes and beyond}
The above discussion motivates the idea of finding other graph products or class of graphs which might support perfect transfer. Cartesian product is the simplest such product which has been explored. Other products are not so physically relevant as the Cartesian product. Basically, this defines a growing architecture scheme for connectivity in a quantum processor. However, for large $d$, the difference between $2^{d}$ and $2^{d+1}$ is very large and it makes little sense experimentally to add this many qubits in a system to allow for perfect state transfer. Moreover, for large $d$, the cost of adding so many edges (physically establishing precisely the same coupling strength) in the system just to establish perfect state transfer between only pair of antipodal nodes, is too high. In this thesis work, we propose our scheme which is based on the hypercube result for long distance transfer which primarily resolves these two challenges to the hypercube architecture. Our scheme in chapter \ref{chap:twohop} enables perfect state transfer from all-to-all nodes for \textit{arbitrary} number of qubits! Hennce, the problem of routing of states can be resolved. It also features that one qubit can be added each time individually in our architecture. This also complies with the current experimental challenges for the realization of quantum computing where a small number of qubits can be added into the processor for scalability.

\newpage
\chapter{Graph Theory}
\label{chap:Graph}
In this chapter we briefly discuss some fundamental concepts related to graphs, and matrices associated with graphs. We primarily focus on finite, simple graphs: those without loops or multiple edges. Further details can be found in \cite{west2001introduction}.  

\section{Graphs, Adjacency matrices and graph Laplacian matrices}
A graph is an ordered pair $G = (V, E)$, where
\begin{itemize}
    \item $V$ denotes the set of vertices \{$v_i$\}(also called nodes or points), and
    \item $E \subseteq \{\{x, y\} | (x, y) \in V\times V,  x \neq y\}$ denotes the set of edges (also called links or lines), which are unordered pairs of vertices (i.e., an edge is associated with two distinct vertices)
\end{itemize}


The adjacency matrix associated with a graph $G$ is denoted by $A(G)=[a_{ij}]$,where
        \[
        a_{ij}=
        \begin{cases}
        1, \text{\quad if }(i,j)\in E\\
        0, \quad \text{otherwise}
        \end{cases}
        \]
Let $G$ be a graph on $n$ vertices, that is, $|V|=n.$ Then obviously, $A(G)$ is a symmetric matrix of order $n\times n.$ Let $v_i\in V.$ Then the degree of $v_i$ is defined as $\mbox{deg}(v_i)=\sum_{j=1}^n a_{ij}.$


The degree matrix $D(G)$ of $G$ is a diagonal matrix where \begin{equation}
    D(G)\equiv [d_{ii}]=\deg (v_i)
\end{equation}
The graph Laplacian matrix $L(G)$ associated with the  graph $G$ is defined as
\begin{equation}
    L(G)=D(G)-A(G)
\end{equation}
which is equivalent to saying
\[
        L(G)\equiv[l_{ij}]=
        \begin{cases}
        \text{deg}(v_i), \quad  \text{if }i=j\\
        -1, \quad \text{if }i\neq j, \text{and } (i, j)\in E \\
        0, \quad \text{otherwise}.
        \end{cases}
\]

The \textit{signless} Laplacian matrix corresponding to $G$ is defined by 
\begin{equation}
    L^+(G)=D(G)+A(G).
\end{equation}

\section{More general properties of graphs}
\label{subsec:graphproperties}
Following are some more general definitions of the graphs which may play a role in PST for specific class of graphs. 
\begin{itemize}
\item \textbf{Signed graph:} A signed graph is an ordered tuple $G = (V,E,\sigma)$ where $V$ denotes the set of nodes, $E \subseteq V \times V$, the edge set, and $\sigma : E \longrightarrow \{+,−\}$ is called the signature function. This another degree added to the definition of a graph.  An obvious way to construct a signed graph from a marked graph is be defining the sign of an edge of the marked graph as the product of signs of its adjacent vertices. Thus, the sign of an edge is the product of its signs of vertices it connects.
    \item \textbf{Marked graph:} We can assign a marking $\{\pm\}$ to the nodes or vertices along with naming them. This adds more degree of freedom to the graph which can be captured by additional functions. A graph is called a marked graph if every node of the graph is marked by either a positive or negative sign. Thus a marked graph is a tuple $G = (V,E,\mu)$ where $V$ is the node set, $E$ the edge set and $\mu : V → {+,−}$ is called the marking function. There are various possible marking schemes. Let us look at the following two conventional ways of marking the vertices.
    \begin{itemize}
     
        \item \textbf{Canonical marking scheme:}  Defined from a signed (see next definition) graph $G = (V,E,\sigma)$ by defining the marking of a node $v \in V$ as
        \begin{equation}
            \mu(v)=\prod _{e\in E_v}\sigma(e)
        \end{equation}
        where $E_v$ is the set of signed edges adjacent at $v$.
        \item \textbf{Plurality marking scheme:} We define plurality marking of a node $v$ of a signed graph $G = (V,E,\sigma)$ as
        \[
        \mu(v)=
        \begin{cases}
        +, \quad \text{if }\max \{d^+(v),d^-(v)\}=d^+\\
        -, \quad \text{Otherwise}
        \end{cases}
        \]
        Hence a node is negatively marked in plurality marking scheme only when $d^−(v) > d^+ (v)$.
    \end{itemize}
   
    \item \textbf{Balanced graph:}  A signed network is balanced if and only if all its cycles are balanced. A signed cycle is called balanced if the number of negative edges in it is even. In other words, a graph is balanced if all its cycles (vacuum-loops) or cliques are balanced. More rigorously it can be stated as follows. Let $G$ be a graph with vertices $v_1 ,..,v_n$ and cliques $C_1 ,..,C_m$ . The (0,1) matrix $A = (a_{ij} )$ where $a_{ij}$ is 1 iff vertex $v_i$ belongs to clique $C_j$ is called a clique matrix of $G$. A (0,1)-matrix is balanced if it does not contain the vertex-edge incidence matrix of an odd-cycle as a submatrix (that is, it contains no square submatrix of odd order with exactly two 1s per row and per column). A graph is balanced if its clique matrix is balanced.
    \item \textbf{Regularity:} A regular graph is a graph where each vertex has the same number of neighbors; i.e. every vertex has the same degree or valency. A regular directed graph must also satisfy the stronger condition that the indegree and outdegree of each vertex are equal to each other. A regular graph with vertices of degree $k$ is called a $k$‑regular graph or regular graph of degree $k$. Also, from the handshaking lemma, a regular graph of odd degree will contain an even number of vertices. A signed regular graph can be defined in the sense that signed regularity (say, $d^+(v_i)-d^-(v_i)=d$, constant) for all vertices $v_i\in V(G)$.
\end{itemize}
Therefore, a graph generally is a 4-tuple $G(V,E,\sigma,\mu)$. To take products of two graphs we first start by defining a graph with signed edges or marked vertices and find the other using a scheme we wish to follow in accordance with the graph operation involved. Then, the new edge signing in the new product graph can be found using the same scheme(s).

\section{Product Graphs}
Two graphs can be operated with a defined operation that gives another resultant graph. A graph product is a binary operation on graphs. Specifically, it is an operation that takes two graphs $G_1$ and $G_2$ and produces a graph $H$ with the following properties:
\begin{itemize}
    \item The vertex set of $H$ is the Cartesian product $V(G_1) \times V(G_2)$, where $V(G_1)$ and $V(G_2)$ are the vertex sets of $G_1$ and $G_2$, respectively.
    \item Two vertices $(u_1, u_2)$ and $(v_1, v_2)$ of $H$ are connected by an edge if and only if the vertices $u_1$, $u_2$, $v_1$, and $v_2$ satisfy a condition that takes into account the edges of $G_1$ and $G_2$. The graph products differ in exactly which this condition is.
\end{itemize}
Graph product is a very important operation for this work as it defines a new 'larger' graph from the initial graphs. This is helpful in describing a growing network which multiplies according to some defined graph product rule. In this thesis, we are specifically concerned with Wedge product, Cartesian product and Corona product of graphs.

\subsection{Wedge product of graphs}
\label{sec:wedge}
This section follows the wedge product as proposed in \cite{ref:15} and describes the general action of coupling Hamiltonians in different excitation spaces 
\begin{definition}
We define the wedge product $\wedge^kG$ of a graph $G$ to be the graph with vertex set $V(\wedge ^kG):=\{ (v_0,v_1,...,v_{k-1})|v_j\in V(G), v_{k-1}>v_{k-2}>...>v_0 \}$. We write vertices of $\wedge^k G$ as $v_0\wedge v_1 \wedge ...\wedge v_{k-1}$. We connect two vertices $v_0\wedge v_1 \wedge ...\wedge v_{k-1}$ and $w_0\wedge w_1 \wedge ...\wedge w_{k-1}$ in $\wedge^kG$ with an edge if there is a permutation $\pi\in S_k$ ($S_k$ is a permutation group on $k$ distinguish entities) such that $w_j=v_{\pi(j)}$ for all $j=0,1,...,k-1$ except at one place $j=\pi(l)$ where $(v_l,w_{\pi(l)})\in E(G)$ is an edge in $G$. The Hilbert space $\mathcal{H}_{\wedge^kG}$ of the graph $\wedge^kG$ is isomorphic to $\wedge^k(\mathcal{H}_G)$. Exterior vector space $\wedge^kG$ is spanned by vectors $|v_0\wedge v_1\wedge ...\wedge v_{k-1}\rangle$ where no two vectors $|v_j\rangle,|v_k\rangle,\forall j\neq k$, are the same.
\end{definition}
Wedge product for corresponding Hilbert space $\wedge ^kG$ is defined as $\wedge:\mathcal{H}_G\times \mathcal{H}_G\times...\times \mathcal{H}_G\longrightarrow
\bigotimes_{l=0}^{k-1}\mathcal{H}_G$ with action on the vectors as
\begin{equation}
    |v_0\wedge v_1 \wedge ...\wedge v_{k-1}\rangle:=\frac{1}{k!}\sum_{\pi\in S_k}\epsilon(\pi)|v_{\pi(0)},v_{\pi(1)},...,v_{\pi(k-1)}\rangle
\end{equation}
where $\epsilon(\pi)$ is the sign of the permutation $\pi$. This defines the basis for $\wedge^kG$ as $\{ |v_0\wedge v_1 \wedge ... \wedge v_{k-1} \rangle \}$ with $v_j\in V(G)$ and $v_{k-1}>v_{k-2}>...>v_0$. Furthermore, $\Gamma^k$ (defined in section \ref{sec:firstexcitation}) and $\wedge^k$ have the same dimension and they are isomorphic vector spaces over $\mathbb{C}$. The correspondence can be assigned by identifying the state $|1_{v_0},1_{v_1},...,1_{v_{k-1}} \rangle \in \Gamma^k$ which has a 1 at positions or vertices $v_{k-1}>v_{k-2}>...>v_0$ and zeros elsewhere, with the basis vector $|v_0\wedge v_1 \wedge ... \wedge v_{k-1} \rangle  \in \wedge^k\mathcal{H}_G$.

We show how the structure of $\wedge^G$ is captured by the adjacency matrix for for $\wedge^G$. And then show its action is identical to $H_{XY}$ in that excitation space. Let $\mathcal{B}(\mathcal{H}_G)$ be the space of all bound operators on $\mathcal{H}_G$. And let $M\in \mathcal{B}(\mathcal{H}_G)$ be a linear operator from $\mathcal{H}_G$ to $\mathcal{H}_G$. Define the operation $\Delta^k:\mathcal{B}(\mathcal{H}_G) \longrightarrow \bigotimes_{j=0}^{k-1}\mathcal{B}(\mathcal{H}_G)$ as
\begin{equation}
    \Delta^k(M):=\sum_{j=0}^{k-1}I_{01...j-1}\otimes M_j\otimes I_{j+1...k-1}.
\end{equation}
Dimension of $\bigotimes^k\mathcal{H}_G$ is greater than that of $\wedge^k \mathcal{H}_G$. We define the projection $\Alt: \bigotimes  ^k\mathcal{H}_G\longrightarrow \wedge^k \mathcal{H}_G$ by
\begin{equation}
    \Alt | \phi_0,\phi_1,...,\phi_{k-1} \rangle:=\frac{1}{k!}\sum_{\pi\in S_k}\epsilon(\pi)\pi\left[ | \phi_0,\phi_1,...,\phi_{k-1} \rangle \right]
\end{equation}
\begin{equation}
    =\frac{1}{k!}\sum_{\pi\in S_k}\epsilon(\pi)| \phi_{\pi(0)},\phi_{\pi(1)},...,\phi_{\pi(k-1)} \rangle
\end{equation}
where the action of the symmetric group $S_k$ on the basis kets is evident and $\Alt ^\dagger =\Alt$ (because it is a projection). The analogous adjacency matrix (signed version) for $\wedge^G$ is generally  defined by
\begin{equation}
    C(\wedge^kG)=\Alt \Delta^k[A(G)]\Alt
\end{equation}
which generally contained negative entries also. It can be calculated that $\langle v_0\wedge v_1\wedge ... \wedge v_{k-1}|C(\wedge^kG)|v_0\wedge v_1\wedge ... \wedge v_{k-1}\rangle=\pm 1$ iff $v_j=w_j$ for all $j$ except at exactly one place $j=l$ where $(v_j,w_l)\in E(G)$. All other entries are exactly zero. The unsigned adjacency matrix $A(\wedge^kG)$ is the matrix obtained after replacing all instances of $-1$ by $+1$ in the obtained matrix $C(\wedge^kG)$. Using the spectral decomposition of $A(G)$, the spectral properties of $C(\wedge^kG)$ can be obtained as in \cite{ref:15}.

As evident from equation (\ref{eqn:XYmodel}), the Hamiltonian $H_{XY}$ action on $|\phi\rangle=|1_v,1_{v_1},...,1_{v_{k-1}}\rangle \in \Gamma^k$ is to move the 1 at position $i$ to $j$ if and only if there is no 1 in the $j$ place. In this way, we can say that the Hamiltonian $H_{XY}$ maps the state $|\phi\rangle$ to an equal superposition if all states which are identical to $|\phi\rangle$ at all indices except at one place. So, a 1 at a given place has been moved along an edge $e \in E(G)$ as long as there is no 1 at the endpoint of $e$. This is identical to the action of $A(\wedge^kG)$ on $\Gamma^k$. Similarly, an observation can be made for Laplacian matrix in $\Gamma^k$.

The action of $H_{XY}$ (and respectively $H_{Hei}$), when restricted to $\Gamma^k$ is the same as that of the adjacency matrix (and respectively Laplacian) of $\wedge^kG$.

\subsection{Cartesian product of graphs}
\label{sec:cartesian}
\begin{definition}
The Cartesian product of two graphs $G:=\{V(G),E(G)\}$ and $H:=\{V(H),E(H)\}$ is a graph $G \times H$ whose vertex is a set $V(G)\times V(H)$ and two of its vertices $(g,h)$ and $(g',h')$ are adjacent iff one of the following conditions hold
\begin{itemize}
    \item $g=g'$ and $\{h,h'\} \in E(H)$
    \item $h=h'$ and $\{g,g'\} \in E(G)$.
\end{itemize}
\end{definition}
Furthermore, if $|k\rangle$ is an eigenvector of $A(G)$ with corresponding eigenvalue $E_k$ and $|l\rangle$ is an eigenvector of $A(H)$ with corresponding eigenvalue $E_l$, then $|k\rangle \otimes |l\rangle$ is an eigenvector of $A(G\times H)$ with corresponding eigenvalue $E_k+E_l$. Here, $A(\cdot)$ is the adjacency matrix. This happens due to the underlying construction
\begin{equation}
    A(G\times H) =A(G) \otimes \mathbb{I}_{V(H)}+\mathbb{I}_{V(G)}\otimes A(H).
\end{equation}
This is exactly as forming the composite system out of two sub-systems in quantum theory. All the same construction applies.
\begin{figure}[hbtp]
    \centering
    \includegraphics[scale=0.35]{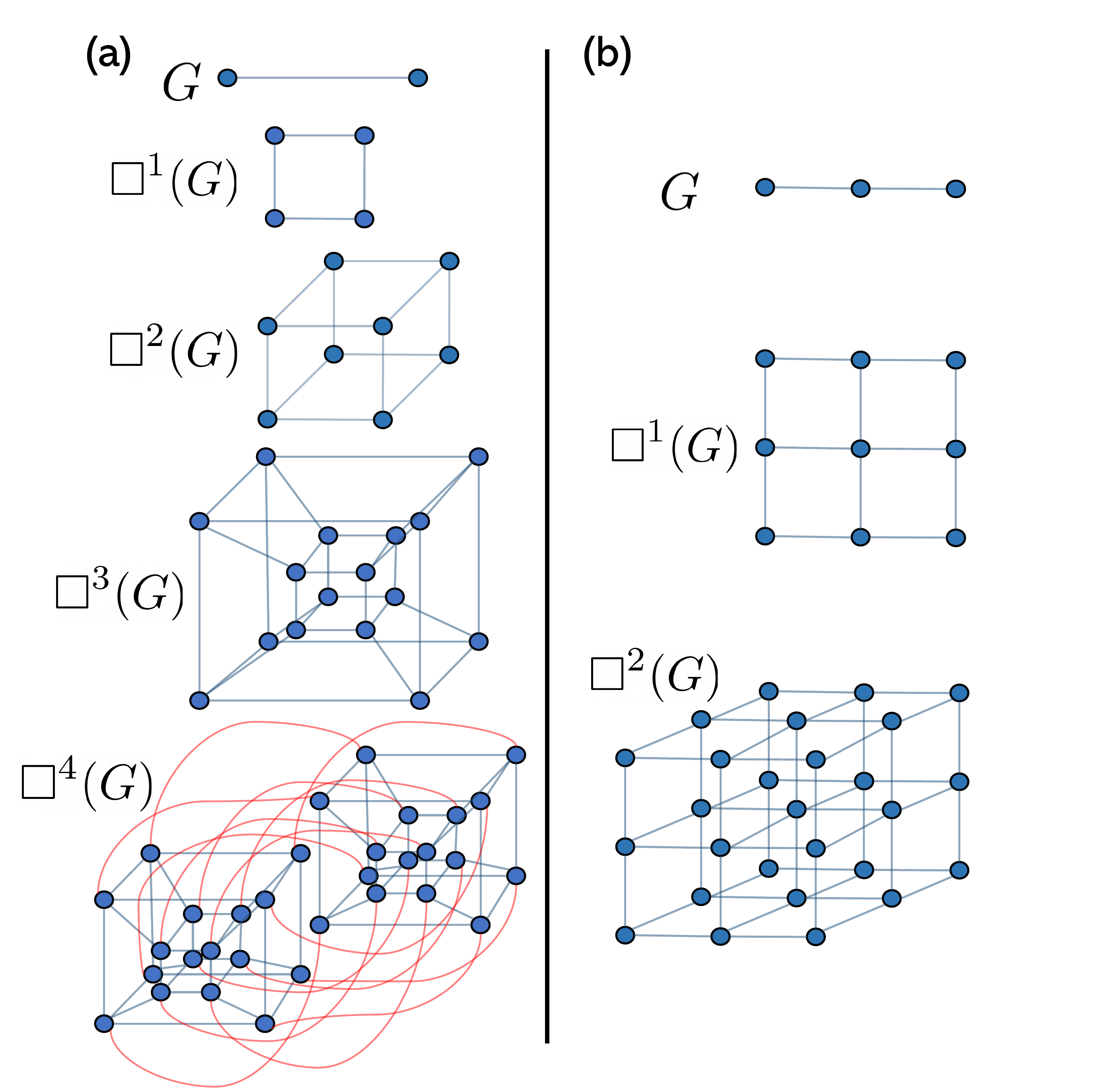}
    \caption{Part (a) shows the first four Cartesian product of $K_2$ or $Q_1$ with itself. These are the first five hypercubes. The new edges formed due to the fourth Cartesian product $\square^4 (G)$ is shown in red for clarity. Part (b) shows the first two Cartesian product of two-link graph $G$ with itself.}
    \label{fig:cartesian}
\end{figure}

\subsection{Corona product of graphs}
\label{sec:corona}
We state a relatively new and special kind of product called the Corona product. Corona product of graphs was introduced by Frucht and Harary in 1970 \cite{ref:6}\cite{ref:8}\cite{ref:9}. Given two unsigned and unmarked graphs $G$ and $H$, the corona product of $G$ and $H$ is a graph, we denote it by $G \circ H$, which is constructed by taking $n$ instances of $H$ and each such $H$ gets connected to each node of $G$, where $n$ is the number of nodes of $G$. Starting with a connected simple graph $G$, we define corona graphs which are obtained by taking corona product of G with itself iteratively. In this case, $G$ is called the seed graph for the corona graphs. Using a seed is exactly the same approach as the chain building blocks were considered previously. 

Now we state the definition of the signed Corona product of two graphs.
\begin{definition}
Let $G_1 = (V_1 ,E_1 ,\sigma_1 ,\mu_1 )$ and $G_1 = (V_2 ,E_2 ,\sigma_2 ,\mu_2 )$ be signed graphs on $n$ and $k$ nodes respectively. Then corona product $G_1 \circ G_2$ of $G_1,G_2$ is a signed graph by taking
one copy of $G_1$ and $n$ copies of $G_2$, and then forming a signed edge from $i$th node of $G_1$ to
every node of the $i$th copy of $G_2$ for all $i$. The sign of the new edge between $i$th node of $G_1$ , say $u$ and $j$th node in the $i$th copy of $G_2$ , say $v$ is given by $\mu_1 (u)\mu_ 2 (v)$ where $µ$ is a marking scheme defined by $\sigma_i ,i = 1,2$.
\end{definition}

For instance, the corona product $G_1 \circ G_2$ of signed graphs $G_1$ and $G_2$ is shown in figure \ref{fig:corona1}. Note that canonical and plurality marking are same for the graph $G_2$ . For $G_1$ the marking of the nodes 1,3 are same for canonical and plurality markings, whereas the canonical and plurality markings of node 2 are − and + respectively. Thus the choice of the marking function produce different corona product graphs.
\begin{figure}[hbtp]
    \centering
    \includegraphics[scale=0.35]{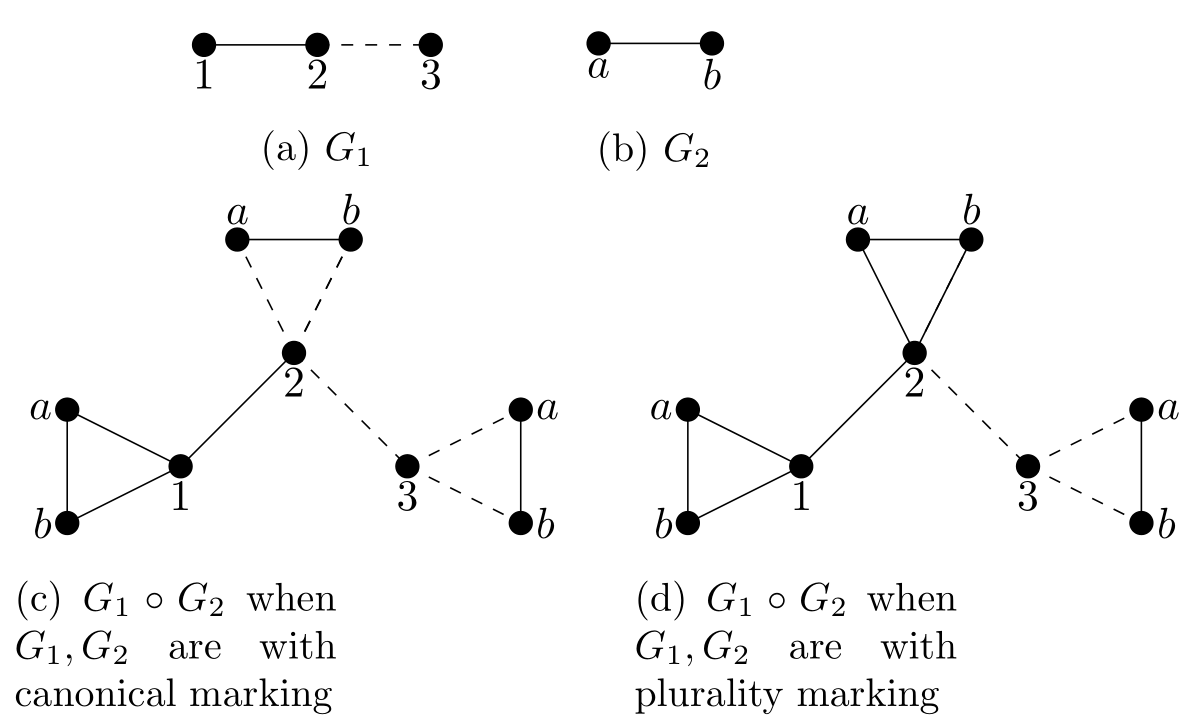}
    \caption{The corona product of $G_1 \circ G_2$ is shown in (c), (d) with canonical and plurality marking functions on $G_i$ , $i = 1,2$.}
    \label{fig:corona1}
\end{figure}


 Let $G = G^{ ( 0 )}$ be a simple connected graph \cite{ref:9}.  
 Then the corona graphs $G^{( m )}$ corresponding to the seed graph $G$ are defined by
 \begin{equation}
     G^{(m)}=G^{(m-1)}\circ G
 \end{equation}
 where $m (\geq 1 )$ is a natural number.
For example, the corona graphs $G^{ ( 1 )}$ and $G^{ ( 2 )}$ corresponding to the seed graph K$_3$ are shown in figure \ref{fig:corona2}.

The following are  
some observations associated with corona graphs.
\begin{itemize}
    \item  The number of nodes in $G^{ ( m )}$ is
    \begin{equation}
        |V(G^{(m)})|=n(n+1)^m
    \end{equation}
    \item  If $k$ is the number of edges in the seed graph $G ^{( 0 )}$ then the number of edges in $G^{ ( m )}$ is
    \begin{equation}
        | E ( G^{ ( m )} )| = k + ( k + n )(( n + 1 )^{m-1}-1) 
    \end{equation}
     \item The number of nodes added in $i$th $( i \leq m )$ step during the formation of $G^{ ( m )}$ is $n^2 ( n + 1 )^{i-1}$. 
\end{itemize}
\begin{figure}[hbtp]
    \centering
    \includegraphics[scale=0.4]{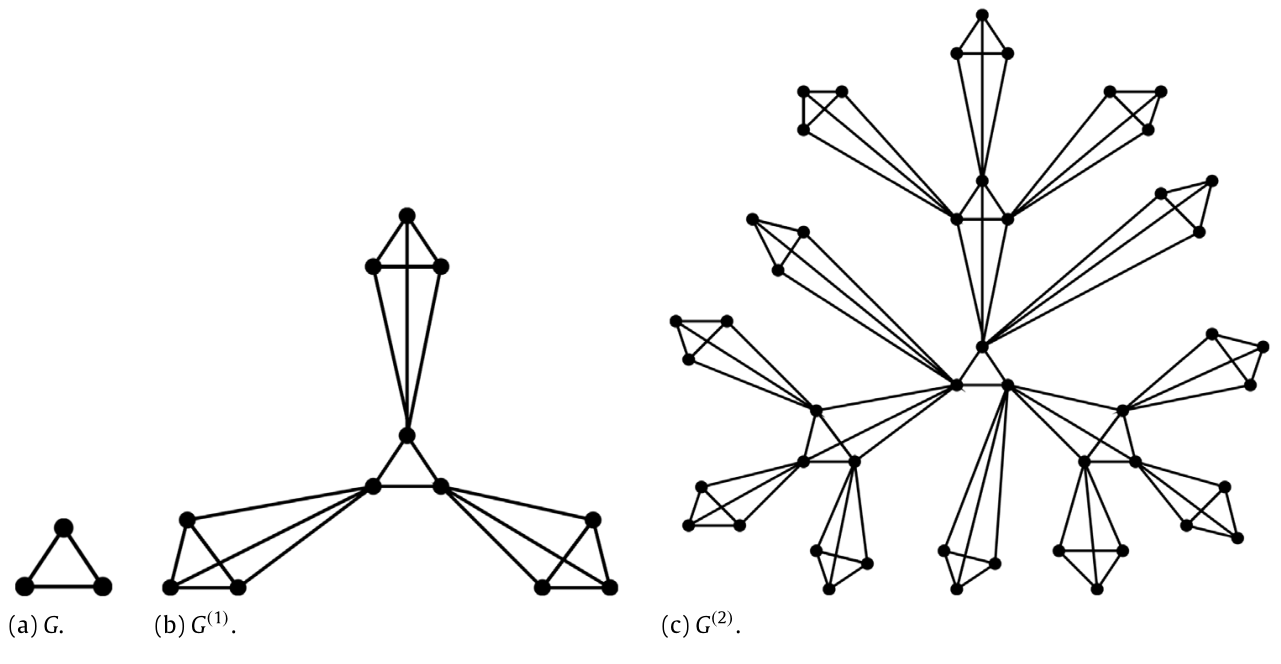}
    \caption{Examples of the Corona graphs (a) Seed graph K$_3$ (b) Corona graph for $G^{ ( 1 )}$ , and (c) Corona graph for $G^{ ( 2 )}$}
    \label{fig:corona2}
\end{figure}
We can do the similar construction of the signed graphs using a signed and marked seed. Let us define the adjacency and the Laplacian matrix of the signed graphs Corona product as well. We focus on the spectral properties of $G_1 \circ G_2$ . Let $G_1 = (V_1 ,E_1 ,\sigma_1 ,\mu_1 )$ and $G_2 = (V_2 ,E_2 ,\sigma_2 ,\mu_2 )$ be two signed graphs with $n$ and $k$ number of nodes, respectively. Suppose $V_1 = {u_1 ,...,u_n }$ and $V_2 = {v_1 ,...,v_k }$. Let us denote the marking vectors corresponding to vertices in $G_1$ and $G_2$ as
\begin{equation}
    \mu [V_1]=[\mu_1(u_1) \quad \mu_1(u_2) \quad ... \quad \mu_1(u_n)] \quad \text{and}\quad  \mu [V_2]=[\mu_2(v_1) \quad \mu_2(v_2) \quad ... \quad \mu_2(v_k)]
\end{equation}
where $µ_j (u) = 1$ if marking of $u = +$, otherwise $µ_j (u) = −1$, $j = 1,2$. Defining a matrix
\begin{equation}
    \text{diag}(\mu[V_1])=\text{diag}(\mu_1(u_1),\mu_1(u_2),...,\mu_1(u_n))
\end{equation}
defined using the marking $\mu[V_1]$, and similarly for $V_2$.

Then with a suitable labeling of the nodes the adjacency matrix of $G_1 \circ G_2$ is given by
\begin{equation}
    A(G_1\circ G_2)= \begin{bmatrix}
A(G_1) & \mu[V_2]\otimes \diag (\mu[V_1]) \\
\mu[V_2]^T\otimes \diag (\mu[V_1]) & A(G_2)\otimes I_n
\end{bmatrix}
\end{equation}
where $A(G_i )$ denotes the adjacency matrix associated with $G_i$ , $i = 1,2$, $\otimes$ denotes the Kronecker product of matrices, $I_n$ is the identity matrix of order $n$.

Similarly we have the definition for the Laplacian of the Corona product of two graphs $G_1$ and $G_2$ as
\begin{equation}
    L(G_1\circ G_2)= \begin{bmatrix}
L(G_1)+kI_n & -\mu[V_2]\otimes \diag (\mu[V_1]) \\
-\mu[V_2]^T\otimes \diag (\mu[V_1]) & (L(G_2)+I_k)\otimes I_n
\end{bmatrix}
\end{equation}
with rest similar definitions as for the adjacency matrix of the product. We can use these definitions to recursively construct $G^{(m)}$ using $G_1=G^{(m-1)}$ and $G_2=G$ and so on.

\subsubsection{Theorems on construction of eigenvalues and eigenvalues for product of corona graphs}
Constructing the higher order adjacency and Laplacian matrices using the previous section definitions is easy when done recursively. However, we need the eigenvalues and eigenvectors for these matrices which can be very tedious for even $m\geq4$ with $n=4$. We thus require some algorithm to construct the eigenvalues and eigenvectors recursively. This in general is not known, but for certain special graphs, this is possible if these special constrains on the graphs being multiplied are satisfied. We state two such extremely useful theorems (Theorem 2.3 and Theorem 2.6) extracted from \cite{ref:6} (without stating the proof here).

\begin{theorem}
\label{th:cor1}
Let $G_1$ be any signed graph on $n$ nodes and $G_2$ be a net-regular signed graph on $k$ nodes having net-regularity $d$. Let $(\lambda_i ,X_i )$ be an adjacency eigenpair of $G_1$, and $(\eta_j ,Y_j )$ be an eigenpair of $G_2$, $i = 1,...,n$, and $j = 1,...,k$. Let $\eta_k = d$. Then an adjacency eigenpair of $G_1 \circ G_2$ is given by $(\lambda^{(i)}_\pm,Z^{(i)}_\pm)$, $i = 1,...,n$ where
\begin{equation}
    \lambda^{(i)}_\pm=\frac{d+\lambda_i\sqrt{(d-\lambda_i)^2+4k}}{2}, \quad \text{ and } \quad Z^{(i)}_\pm=\begin{bmatrix}
X_i \\
\frac{\mu(v_1)}{\lambda^{(i)}_\pm -d}\diag (\mu[V_1])X_i\\
\frac{\mu(v_2)}{\lambda^{(i)}_\pm -d}\diag (\mu[V_1])X_i\\
\vdots \\
\frac{\mu(v_k)}{\lambda^{(i)}_\pm -d}\diag (\mu[V_1])X_i\\
\end{bmatrix}
\end{equation}
In addition, if all the nodes in $G_2$ are either positively or negatively marked, that is $\mu[V_2 ] = I_k$ or $−I k$ then
\begin{equation}
    \left( \eta_j,\begin{bmatrix}
    \boldsymbol{0}\\
    Y_j\otimes e_i
    \end{bmatrix}\right)
\end{equation}
is an eigenpair of $G_1 \circ G_2$ where $j = 1,...,k − 1$, and $\{e_i : i = 1,...,n\}$ the standard basis of $\mathbb{R}^n$ .
\end{theorem}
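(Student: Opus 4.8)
The plan is to verify directly that the claimed pairs are eigenpairs of the block matrix
\[
A(G_1\circ G_2)=\begin{bmatrix} A(G_1) & \mu[V_2]\otimes\diag(\mu[V_1]) \\ \mu[V_2]^T\otimes\diag(\mu[V_1]) & A(G_2)\otimes I_n \end{bmatrix},
\]
by multiplying it against the proposed eigenvectors and matching the two block rows. For the pairs $(\lambda^{(i)}_\pm, Z^{(i)}_\pm)$ I would write $Z^{(i)}_\pm = \big(X_i,\ \alpha\,(\mu[V_2]\otimes\diag(\mu[V_1])X_i)\big)^T$ for the scalar $\alpha = 1/(\lambda^{(i)}_\pm-d)$ and compute each block of $A(G_1\circ G_2)Z^{(i)}_\pm$. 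The top block gives $A(G_1)X_i + \alpha\,(\mu[V_2]\otimes\diag(\mu[V_1]))^{?}\cdots$; using $A(G_1)X_i=\lambda_iX_i$, that $\diag(\mu[V_1])^2=I_n$, and that $\mu[V_2]^T\mu[V_2]=k$ (it is a $\pm1$ vector of length $k$), the top block collapses to $(\lambda_i + \alpha k)X_i$. The bottom block, using net-regularity $A(G_2)\mu[V_2]^T = d\,\mu[V_2]^T$ (this is exactly what net-regularity $d$ buys us) together with $(A(G_2)\otimes I_n)(\mu[V_2]^T\otimes\diag(\mu[V_1])X_i) = d\,\mu[V_2]^T\otimes\diag(\mu[V_1])X_i$, gives $(1+\alpha d)\,\mu[V_2]^T\otimes\diag(\mu[V_1])X_i$. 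Demanding that these equal $\lambda^{(i)}_\pm$ times the respective blocks of $Z^{(i)}_\pm$ yields the two scalar equations $\lambda_i + \alpha k = \lambda^{(i)}_\pm$ and $1+\alpha d = \alpha\,\lambda^{(i)}_\pm$; eliminating $\alpha$ gives the quadratic $(\lambda^{(i)}_\pm)^2 - d\lambda^{(i)}_\pm - (k + \lambda_i(\cdot))=0$ whose roots are exactly the stated $\lambda^{(i)}_\pm$, and back-substitution recovers $\alpha=1/(\lambda^{(i)}_\pm-d)$, confirming the eigenvector formula.

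For the second family I would take the vector $\big(\mathbf{0},\ Y_j\otimes e_i\big)^T$ where $A(G_2)Y_j=\eta_jY_j$ and $e_i$ is the $i$th standard basis vector, under the hypothesis $\mu[V_2]=\pm I_k$ (so $\mu[V_2]$ is not a single $\pm1$ vector here but the signing makes the off-diagonal coupling block act cleanly). Applying $A(G_1\circ G_2)$: the top block is $(\mu[V_2]\otimes\diag(\mu[V_1]))(Y_j\otimes e_i)$ which, when $\mu[V_2]=\pm I_k$, is proportional to $(\pm Y_j)\otimes(\diag(\mu[V_1])e_i)$ — one must check this vanishes, which happens precisely because $Y_j$ is orthogonal to the all-ones-type eigenvector associated with net-regularity; this is why the index runs only over $j=1,\dots,k-1$, excluding $\eta_k=d$ whose eigenvector $Y_k$ is the one that survives and appears in the first family instead. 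The bottom block is $(A(G_2)\otimes I_n)(Y_j\otimes e_i)=\eta_j Y_j\otimes e_i$, matching the eigenvalue $\eta_j$.

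The main obstacle I anticipate is bookkeeping the Kronecker-product identities correctly — in particular getting the orientation of $\mu[V_2]$ versus $\mu[V_2]^T$ right in each block, and justifying the vanishing of the top block in the second family. That vanishing is really a statement that the "fiber" eigenvectors $Y_j\otimes e_i$ for $j<k$ lie in the kernel of the coupling map from the $G_2$-copies back to $G_1$; pinning down exactly which spectral projection of $A(G_2)$ this uses (the complement of the net-regular eigenspace) is the one genuinely substantive point, and it is exactly the role played by the hypothesis that $G_2$ be net-regular with $\eta_k=d$ and by restricting $j\le k-1$. Everything else is linear algebra of $2\times 2$ block matrices plus the scalar quadratic, which I would present compactly rather than expand in full.
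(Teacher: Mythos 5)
The paper does not actually prove this theorem: it is imported verbatim from \cite{ref:6} with the explicit remark ``without stating the proof here,'' so there is no in-paper argument to compare yours against. Judged on its own, your strategy --- direct block multiplication of $A(G_1\circ G_2)$ against the candidate vectors, using $\diag(\mu[V_1])^2=I_n$, $\mu[V_2]\mu[V_2]^T=k$, and the mixed-product rule for Kronecker products, then eliminating $\alpha=1/(\lambda^{(i)}_\pm-d)$ to land on the quadratic $(\lambda-\lambda_i)(\lambda-d)=k$ --- is the standard and correct route, and your treatment of the second family (the top block vanishes because $Y_j\perp Y_k$ for $j\le k-1$, which is exactly why the $d$-eigenvector is excluded) is right.

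There is, however, one genuine gap: the step ``net-regularity gives $A(G_2)\mu[V_2]^T=d\,\mu[V_2]^T$.'' Net-regularity of $G_2$ with net-degree $d$ says only that the \emph{row sums} of $A(G_2)$ are $d$, i.e. $A(G_2)\mathbf{1}_k=d\,\mathbf{1}_k$; the marking vector $\mu[V_2]$ is an arbitrary $\pm1$ vector determined by the marking scheme and is not in general an eigenvector of $A(G_2)$. Since the lower blocks of $Z^{(i)}_\pm$ carry the coefficients $\mu(v_j)$, the bottom-block consistency of your verification genuinely needs $A(G_2)\mu[V_2]^T=d\,\mu[V_2]^T$, equivalently that the switched graph $\diag(\mu[V_2])\,A(G_2)\,\diag(\mu[V_2])$ is again net-regular of net-degree $d$. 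This holds automatically when the marking of $G_2$ is constant ($\mu[V_2]=\pm\mathbf{1}_k$, which is the hypothesis the theorem imposes only for its \emph{second} family, and which is what all the paper's worked examples satisfy), but it is not a consequence of net-regularity alone; you should either add it as a hypothesis or observe that the first family, too, implicitly requires it. Two smaller points: your expression ``proportional to $(\pm Y_j)\otimes(\diag(\mu[V_1])e_i)$'' for the top block of the second family is dimensionally off --- $\mu[V_2]Y_j$ is a scalar, so the top block is $(\mu[V_2]Y_j)\,\mu_1(u_i)\,e_i$, which is what must vanish --- and the displayed eigenvalue formula in the statement is missing a ``$+\ldots\pm$'' (it should read $\lambda^{(i)}_\pm=\bigl(d+\lambda_i\pm\sqrt{(d-\lambda_i)^2+4k}\bigr)/2$), which your quadratic correctly reproduces.
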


\begin{theorem}
\label{th:cor2}
Let $G_1= (V_1 ,E_1 ,\sigma_1 ,\mu_1 )$ be a signed graph on $n$ nodes and $G_2 = (V_2 ,E_2 ,\sigma_2 ,\mu_2 )$ be a signed graph on $k$ nodes. Let $V_2 = {v 1 ,...,v_k }$. Let $(\lambda_i ,X_i )$ be a signed Laplacian eigenpair of $G_1$, and $(\eta_j ,Y_j )$ are signed Laplacian eigenpairs of $G_2$, $i = 1,...,n, j = 1,...,k$. Let $d^− = d^−_j$
denote the negative degree of every node $v_j$ in $G_2$ . Then a signed Laplacian eigenpair of $G_1 \circ G_2$ is given by $(\lambda^ {(i)}_\pm ,Z^{(i)}_\pm)$ where
\begin{equation}
\begin{split}
        & \lambda^{(i)}_\pm=\frac{2d^-+1+\lambda_i+k+\pm\sqrt{[(2d^-+1)-(\lambda_i+k)]^2+4k}}{2}\\
        & Z^{(i)}_\pm=\begin{bmatrix}
X_i \\
-\frac{\mu(v_1)}{\lambda^{(i)}_\pm (2d^-+1)}\diag (\mu[V_1])X_i\\
-\frac{\mu(v_2)}{\lambda^{(i)}_\pm (2d^-+1)}\diag (\mu[V_1])X_i\\
\vdots \\
-\frac{\mu(v_k)}{\lambda^{(i)}_\pm (2d^-+1)}\diag (\mu[V_1])X_i\\
\end{bmatrix}
\end{split}
\end{equation}
where $i = 1,...,n$. Let $\eta_ k = 2d^−$ . In addition if all the nodes in $G_2$ are marked either positively or negatively marked then an eigenpair of $G_1 \circ G_2$ is
\begin{equation}
    \left( \eta_j+1,\begin{bmatrix}
    \boldsymbol{0}\\
    Y_j\otimes e_i
    \end{bmatrix}\right)
\end{equation}
is an eigenpair of $G_1 \circ G_2$ where $j = 1,...,k − 1$, and $\{e_i : i = 1,...,n\}$ the standard basis of $\mathbb{R}^n$ .
\end{theorem}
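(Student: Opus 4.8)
The idea is to verify both families of eigenpairs by direct substitution into the block form of the signed Laplacian of the corona product recorded above,
\[
L(G_1\circ G_2)=\begin{bmatrix} L(G_1)+kI_n & -\mu[V_2]\otimes\diag(\mu[V_1])\\ -\mu[V_2]^{T}\otimes\diag(\mu[V_1]) & (L(G_2)+I_k)\otimes I_n\end{bmatrix},
\]
once one isolates the single spectral fact that makes the Kronecker products collapse: $m:=\mu[V_2]^{T}$ is an eigenvector of $L(G_2)=D(G_2)-A(G_2)$, where $A(G_2)$ is the $\pm1$ signed adjacency matrix, with eigenvalue $2d^{-}$. This is exactly the hypothesis ``$\eta_k=2d^{-}$''. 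I would prove it by noting that, since every node of $G_2$ has negative degree $d^{-}$, its canonical mark is $(-1)^{d^{-}}$, so $m=(-1)^{d^{-}}\mathbf{1}_k$, and the $j$-th entry of $L(G_2)\mathbf{1}_k$ is $\deg(v_j)-\sum_{v_{j'}\sim v_j}\sigma(v_jv_{j'})=(d^{+}(v_j)+d^{-}(v_j))-(d^{+}(v_j)-d^{-}(v_j))=2d^{-}$. I then fix $Y_k\propto m$ and choose the remaining Laplacian eigenvectors $Y_1,\dots,Y_{k-1}$ of $G_2$ orthogonal to $Y_k$.

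For the $\pm$ family, write $Z^{(i)}_{\pm}=\begin{bmatrix}X_i\\ W\end{bmatrix}$ with $\widetilde X_i:=\diag(\mu[V_1])X_i$, $W=c_{\pm}\,(m\otimes\widetilde X_i)$ and $c_{\pm}=-\bigl(\lambda^{(i)}_{\pm}-(2d^{-}+1)\bigr)^{-1}$, and apply $L(G_1\circ G_2)$ blockwise. Using $\diag(\mu[V_1])^{2}=I_n$, $\mu[V_2]\mu[V_2]^{T}=k$, the relation $L(G_1)X_i=\lambda_iX_i$, and the identity above, the top block collapses to $(\lambda_i+k-c_{\pm}k)X_i$ and the bottom block to $\bigl(c_{\pm}(2d^{-}+1)-1\bigr)(m\otimes\widetilde X_i)$. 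Equating these with $\lambda^{(i)}_{\pm}X_i$ and $\lambda^{(i)}_{\pm}W$ and cancelling the common tensor factor gives the scalar pair
\[
\lambda=\lambda_i+k-ck,\qquad c\bigl((2d^{-}+1)-\lambda\bigr)=1 .
\]
The second equation holds by the definition of $c_{\pm}$; substituting $c=-\bigl(\lambda-(2d^{-}+1)\bigr)^{-1}$ into the first yields the quadratic $\bigl(\lambda-\lambda_i-k\bigr)\bigl(\lambda-(2d^{-}+1)\bigr)=k$, whose discriminant works out to $[(2d^{-}+1)-(\lambda_i+k)]^{2}+4k$ and whose two roots are exactly the stated $\lambda^{(i)}_{\pm}$; hence each $Z^{(i)}_{\pm}$ is a genuine eigenvector.

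For the complementary family, substitute $Z=\begin{bmatrix}\mathbf{0}\\ Y_j\otimes e_i\end{bmatrix}$ for $j\le k-1$ and $i\le n$. The top block is $-(\mu[V_2]Y_j)\otimes(\diag(\mu[V_1])e_i)$, and here the extra hypothesis that all nodes of $G_2$ carry the same mark makes $\mu[V_2]\propto\mathbf{1}_k^{T}\propto Y_k^{T}$, so $\mu[V_2]Y_j=0$ by orthogonality and the top block vanishes; the bottom block is $\bigl((L(G_2)+I_k)Y_j\bigr)\otimes e_i=(\eta_j+1)(Y_j\otimes e_i)$. Thus $\bigl(\eta_j+1,\begin{bmatrix}\mathbf{0}\\ Y_j\otimes e_i\end{bmatrix}\bigr)$ is an eigenpair. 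I would close with a dimension count: the $2n$ vectors of the first family (which have nonzero first block $X_i$, and for fixed $i$ are independent because $\lambda^{(i)}_{+}\neq\lambda^{(i)}_{-}$ as the discriminant is at least $4k>0$) together with the $n(k-1)$ vectors of the second family are linearly independent, and $2n+n(k-1)=n(k+1)=|V(G_1\circ G_2)|$, so they span the whole eigenspace.

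The main obstacle is the spectral identity for $G_2$ in the first paragraph; once it is in place, everything else is routine blockwise multiplication and a single quadratic. That identity is clean only because constant negative degree pins the canonical marking of $G_2$ to a constant, making $\mu[V_2]^{T}$ proportional to $\mathbf{1}_k$ and hence a $2d^{-}$-eigenvector of $L(G_2)$; for a non-constant marking $Z^{(i)}_{\pm}$ as written would no longer be an eigenvector, so the statement should be read under a marking (such as the canonical one) for which $\mu[V_2]^{T}\propto\mathbf{1}_k$.
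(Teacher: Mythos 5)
Your verification is correct, and it is worth noting that the paper itself gives no proof of this statement: it is quoted from the cited reference ``without stating the proof here,'' so your blockwise substitution into
\(L(G_1\circ G_2)=\begin{bmatrix} L(G_1)+kI_n & -\mu[V_2]\otimes\diag(\mu[V_1])\\ -\mu[V_2]^{T}\otimes\diag(\mu[V_1]) & (L(G_2)+I_k)\otimes I_n\end{bmatrix}\)
supplies exactly the argument the paper omits, and it is the standard one for corona spectral results. Your two scalar relations \(\lambda=\lambda_i+k-ck\) and \(c\bigl((2d^{-}+1)-\lambda\bigr)=1\) do combine to \((\lambda-\lambda_i-k)\bigl(\lambda-(2d^{-}+1)\bigr)=k\), whose roots are the stated \(\lambda^{(i)}_{\pm}\) with discriminant \([(2d^{-}+1)-(\lambda_i+k)]^{2}+4k\), and the second family goes through once \(Y_1,\dots,Y_{k-1}\) are chosen orthogonal to \(Y_k\propto\mathbf{1}_k\). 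You also correctly isolate the one non-routine ingredient — that \(\mu[V_2]^{T}\) must be a \(2d^{-}\)-eigenvector of \(L(G_2)\), which forces the constant (e.g.\ canonical) marking reading of the hypotheses — and you implicitly repair two typographical slips in the displayed formulas (the denominator should read \(\lambda^{(i)}_{\pm}-(2d^{-}+1)\), and the eigenvalue formula has a stray ``\(+\pm\)''). The only cosmetic caveat is that the closing linear-independence/dimension count, while a good sanity check, is not needed for the statement as given, and full independence across the two families deserves one more line if you want to claim a complete eigenbasis.
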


Both these theorems together give all the ordered eigenpairs for higher order Corona products recursively. They can be used when these special conditions mentioned in the theorems are satisfied and approaching the calculation for perfect state transfer is a lot easier.

\newpage
\chapter{Perfect State Transfer under Corona product of signed graphs}
This chapter corresponds to Part-I of the thesis project and was aimed at studying the perfect state transfer for signed graphs under the Corona product (as discussed in section \ref{sec:corona}). This consists of some important theorems and numerical results. Results from Part-II of the thesis are contained in chapters 4,5 and 6.

\section{No perfect state transfer in Corona product of graphs under Laplacian}
A general discussion for conditions for PST under Corona product of graphs is presented in \cite{ref:70}. A negative result indicating that no perfect state transfer is possible between any vertices of resulting Corona product of two given graphs, is obtained in \cite{ref:12}. We first present the the Theorem 4.1 in this paper.
\begin{theorem}
Let $G$ be a connected graph on $n \geq 2$ vertices and
$\Vec{H} = (H_1,\hdots,H_n)$ be an $n$-tuple of graphs on $m \geq 1$ vertices. Then there is no Laplacian perfect state transfer in $G \circ \Vec{H}$. 
\end{theorem}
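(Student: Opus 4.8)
## Proof Strategy

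The plan is to exploit the explicit block structure of the Laplacian $L(G \circ \vec{H})$ together with the periodicity/rationality conditions for PST established in the excerpt. Since $G \circ \vec{H}$ has $N = n + nm$ vertices, partition them into the $n$ ``$G$-vertices'' and, for each $i$, the $m$ vertices of the $i$th copy $H_i$. Writing $L = L(G \circ \vec{H})$ in this block form, the diagonal block acting on the $i$th copy of $H_i$ is $L(H_i) + I_m$ (each leaf-copy vertex gains one unit of degree from its attachment to $G$), while the $G$-block is $L(G) + m I_n$, with off-diagonal coupling blocks that are $0$-$1$ incidence-type matrices linking $u_i$ to all of copy $i$. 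The key structural observation to extract is that $L$ has many eigenvectors supported entirely inside a single copy $H_i$: if $y$ is an eigenvector of $L(H_i)$ orthogonal to the all-ones vector $\mathbbm{1}$ (equivalently, an eigenvector for a nonzero Laplacian eigenvalue of $H_i$, or any vector in $\mathbbm{1}^\perp$ if $H_i$ is empty/ has eigenvalue $0$ with multiplicity — handle $m=1$ separately since then $\mathbbm{1}^\perp = \{0\}$), then $(0,\dots,0,y,0,\dots,0)$ is an eigenvector of $L$ with eigenvalue $\mathrm{eval}_{H_i} + 1$, because $y \perp \mathbbm{1}$ kills the coupling to $u_i$.

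The next step is a parity/integrality argument on eigenvalues. A standard fact (provable by the same symmetry bootstrap used for mirror symmetry in the excerpt, or directly) is that Laplacian PST forces the graph to be \emph{periodic}, and periodicity of an integer-weighted graph — which $G \circ \vec{H}$ is, all couplings being $0$ or $1$ — forces all Laplacian eigenvalues to be \emph{integers}, and moreover PST between two vertices $a,b$ requires that the ``eigenvalue support'' eigenprojections at $a$ and $b$ satisfy a phase-matching condition modulo $2\pi$ at the common transfer time $t_0 = \pi/g$ for some integer $g$ dividing all eigenvalue differences appearing with nonzero overlap. Concretely: if $e^{-iLt_0}\ket{a} = e^{i\phi}\ket{b}$ then for every eigenvalue $\lambda$ with $\langle \lambda\text{-eigenspace}\mid a\rangle \neq 0$ we need $\lambda t_0 + \phi \equiv \lambda' t_0 + \phi \pmod{2\pi}$ across the support, i.e. all such $\lambda$ are congruent mod $2g$. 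One then distinguishes cases on where $a$ and $b$ lie.

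The core of the argument is then case analysis ruling out all three possibilities for $\{a,b\}$:
\begin{itemize}
\item Both $a,b$ are $G$-vertices, say $a = u_i$, $b = u_j$: but the copy-supported eigenvectors above show that the eigenvalue support of $u_i$ must include contributions forcing an eigenvalue-congruence condition that cannot simultaneously hold for $u_i$ and $u_j$, OR — more cleanly — use that $e^{-iLt_0}$ maps $\ket{u_i}$ into the span of vectors having \emph{nonzero} weight on copy $i$ but \emph{zero} weight on copy $j$ (since $u_i$ connects only to copy $i$), contradicting equality with $e^{i\phi}\ket{u_j}$ unless the weight on copy $i$ vanishes, which forces $u_i$ to already be an eigenvector — impossible for $n \geq 2$ connected $G$ since then $L\ket{u_i}$ has support on a neighbour $u_k$ and on copy $i$.
\item Both $a,b$ lie in copies, say $a \in H_i$, $b \in H_j$ with $i \neq j$: the coupling structure shows the $u_i$-component of $e^{-iLt_0}\ket{a}$ is generically nonzero (it is zero only if $\ket{a}$ decomposes into copy-internal eigenvectors all orthogonal to $\mathbbm{1}$, i.e. $\ket{a} \perp \mathbbm{1}_{H_i}$, impossible for a single basis vector when $m \geq 1$), so the image has a $G$-vertex component, contradicting $e^{i\phi}\ket{b}$ which has none.
\item $a \in H_i$ and $b = u_j$ a $G$-vertex (or vice versa): similar — the image $e^{-iLt_0}\ket{u_j}$ has support on copy $j$, and on all $G$-neighbours of $u_j$, which cannot equal a vector supported on copy $i$ alone.
\end{itemize}
Actually the slickest unified version is: \emph{strong cospectrality} of $a$ and $b$ is necessary for PST, meaning every eigenprojection $E_\lambda$ satisfies $E_\lambda \ket a = \pm E_\lambda \ket b$; then test this against the copy-internal eigenvectors and against $L\ket a, L^2\ket a$ to derive that $a$ and $b$ have the same ``copy address'' \emph{and} are eigenvectors on that copy, which is incompatible with $G$ being connected on $n \geq 2$ vertices.

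The main obstacle I anticipate is the degenerate and boundary cases: when $m = 1$ (each $H_i$ a single vertex, so no copy-internal $\mathbbm{1}^\perp$ eigenvectors exist and $G \circ \vec H$ is just $G$ with a pendant at each vertex), and when some $H_i$ is disconnected or empty so that the Laplacian spectrum of $H_i$ behaves atypically. The $m=1$ case must be handled by a direct argument — the extra pendant vertex attached to each $u_i$ changes $L(G)$ to $L(G) + I_n$ in that block and contributes a specific rank-structure; here one likely invokes that $G$ connected with $n \geq 2$ means $L(G)$ has an eigenvalue $0$ (constant vector) and at least one positive eigenvalue, and checks the integrality/congruence conditions fail. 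Making the ``generically nonzero $u_i$-component'' claims fully rigorous — rather than generic — is where the real work lies: one must show that for the \emph{specific} standard basis vector $\ket a$, the relevant projection does not vanish, which comes down to the fact that a single coordinate vector is never orthogonal to the all-ones vector on its own block.
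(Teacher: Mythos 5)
First, note that the paper does not actually prove this statement: it quotes it as Theorem~4.1 of \cite{ref:12} and explicitly defers the proof to that reference, so there is no in-paper argument to compare yours against. Judged on its own merits, your proposal has genuine gaps. The most serious one is that the central case analysis rests on unsound ``support propagation'' reasoning. You argue, for example, that $e^{-iLt_0}\ket{u_i}$ lies in the span of vectors with zero weight on copy $j$ ``since $u_i$ connects only to copy $i$'', and that $e^{-iLt_0}\ket{a}$ for $a\in H_i$ ``has a $G$-vertex component, contradicting $e^{i\phi}\ket{b}$ which has none.'' Neither claim is valid: the exponential involves all powers of $L$, and since $G\circ\vec H$ is connected, $L^k\ket{u_i}$ acquires support on every vertex (including every other copy) for large $k$. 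More fundamentally, arguing ``the image must have amplitude somewhere other than $b$, contradiction'' is circular — the whole content of PST is that at the special time $t_0$ all amplitude concentrates on $b$ even though it is spread out at generic times (exactly as happens on $K_2$). Ruling this out requires the spectral conditions, not a locality argument. Your case analysis also omits the case where $a$ and $b$ lie in the \emph{same} copy $H_i$, where even your intended contradiction (mismatched copy address or $G$-vertex component) is unavailable.

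The route you gesture at in passing — periodicity forces integer Laplacian eigenvalues on the eigenvalue support, plus a congruence (parity) condition on eigenvalue differences at the transfer time — is in fact the correct one, and is essentially how \cite{ref:12} proceeds; but you never execute it. The real work is to compute the eigenvalue support of a vertex in the corona explicitly: each Laplacian eigenvalue $\lambda$ of $G$ in the support of $u$ contributes a conjugate pair $\mu_\pm$ with $\mu_++\mu_-=\lambda+m+1$ and $\mu_+\mu_-=\lambda$ (cf.\ Theorem~\ref{th:cor2} in the unsigned case), so that $\lambda=0$ always contributes $\{0,m+1\}$, while connectivity of $G$ on $n\geq 2$ vertices guarantees a second eigenvalue $\lambda>0$ in the support contributing a pair with discriminant $(\lambda+m-1)^2+4m$. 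One must then show that the integrality and parity requirements cannot be met simultaneously by all these eigenvalues. Your ``strong cospectrality'' variant could in principle be developed, but as written (``test this against $L\ket a$, $L^2\ket a$ to derive\ldots'') it is a statement of intent rather than an argument. As it stands, the proposal identifies relevant structure (the block Laplacian, copy-supported eigenvectors, the necessary conditions for PST) but does not contain a proof.
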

The proof of this theorem can be found in \cite{ref:12}. Here, the second graph is a tuple of $n$ graphs for each vertex of $G$. This is more general scenario of Corona product where each vertex of $G$ is associated with a different graph $H_j$. Construction of this product is also given in section 3 of the same paper. This is a very strong result for Corona product. This result holds for the Laplacian evolution of the graph (that is, under Heisenberg coupling interaction of qubits) and is related to the impossibility of Laplacian perfect state transfer for trees \cite{ref:58}. Therefore, now adjacency state transfer remains to be explored for Corona product. More freedom can be explored in state transfer by using signed graphs which may support PST.

\section{Pretty good state transfer under Corona product of graphs}
The previous section indicates that there is no PST in Corona. However, there is a concept of pretty good state transfer where the fidelity is below 100\% but still close to it \cite{ref:66}\cite{ref:68}. Work done in \cite{ref:12} presents two theorems on pretty good state transfer in Coronas that we restate here without their proof. These theorems are very strong results for state transfer under Corona product. Proofs can be found in the original work. 

\begin{theorem}
Let $G$ be a graph on $n$ vertices and $\vec{H} = (H_1,\hdots ,H_n)$ be an $n$-tuple of graphs on $m \geq 1$ vertices. Suppose $G$ has perfect state transfer between vertices $u$ and $v$, and let $2^r$ be the greatest power of two dividing each element of the eigenvalue support of $u$. If $2^{r+1}$ divides $m+ 1$, then there is pretty good state transfer between vertices $(u, 0)$ and $(v, 0)$ in $G \circ \vec{H}$.
\end{theorem}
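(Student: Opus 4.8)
I would run this through the standard algebraic criterion for pretty good state transfer: between two vertices $x,y$ of a graph $\Gamma$ one has PGST if and only if (i) $x$ and $y$ are strongly cospectral, i.e. $E_\lambda e_x=\pm E_\lambda e_y$ for every eigenprojection $E_\lambda$ of $A(\Gamma)$ — write $\sigma_\lambda\in\{+1,-1\}$ for the sign — and (ii) for every integer tuple $(\ell_\lambda)$ indexed by the eigenvalue support of $x$ satisfying $\sum_\lambda\ell_\lambda\lambda=0$ and $\sum_\lambda\ell_\lambda=0$, the quantity $\sum_{\lambda:\,\sigma_\lambda=-1}\ell_\lambda$ is even. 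So the proof splits into a spectral/structural half (establishing (i) for $(u,0),(v,0)$ in $\Gamma:=G\circ\vec{H}$ and identifying the support) and a $2$-adic number-theoretic half (verifying (ii) from the hypotheses $2^r\,\|\,\gcd(\operatorname{supp}_G(u))$ and $2^{r+1}\mid m+1$).

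\textbf{Spectral structure of the corona.} First I would describe the spectrum of $\Gamma$ as seen from a vertex of $G$. Taking the Schur complement of $A(\Gamma)$ with respect to the $V(G)$-block (the remaining block being block-diagonal over the copies), the restriction of $(tI-A(\Gamma))^{-1}$ to $V(G)\times V(G)$ equals $\bigl(tI-A(G)-\diag_w W_{H_w}(t)\bigr)^{-1}$, where $W_H(t)=\mathbf 1^{T}(tI-A(H))^{-1}\mathbf 1$ is the all-ones walk generating function of $H$. Hence every eigenvalue $\lambda$ of $\Gamma$ with $\langle e_{(u,0)},E_\lambda e_{(u,0)}\rangle\neq 0$ comes, via this rational equation, from an eigenvalue $\theta$ of $G$ lying in $\operatorname{supp}_G(u)$, while the ``internal'' eigenvalues of the $H_i$ do not appear in $\operatorname{supp}_\Gamma(u,0)$ since their eigenvectors vanish on $V(G)$. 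In the uniform case $\vec H=(\overline{K_m},\dots,\overline{K_m})$ one has $W(t)=m/t$, so — exactly as in Theorem~\ref{th:cor1} with net-regularity $d=0$ — each such $\theta$ contributes the conjugate pair $\lambda_\theta^{\pm}=\tfrac12\bigl(\theta\pm\sqrt{\theta^2+4m}\bigr)$, with $\lambda_\theta^++\lambda_\theta^-=\theta$ and $\lambda_\theta^+\lambda_\theta^-=-m$; for a general tuple $\vec H$ one replaces $\theta^2+4m$ by the discriminant of the relevant quadratic factor, and nothing else changes. Lifting strong cospectrality is then immediate: if $X_\theta$ is the $\theta$-eigenvector of $G$ with $X_\theta(v)=\sigma_\theta X_\theta(u)$, it extends to an eigenvector of $\Gamma$ for each $\lambda_\theta^{\pm}$ whose $V(G)$-part is still $X_\theta$, so $E_{\lambda_\theta^{\pm}}e_{(u,0)}$ and $E_{\lambda_\theta^{\pm}}e_{(v,0)}$ differ only by the factor $\sigma_\theta$. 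Thus $(u,0)$ and $(v,0)$ are strongly cospectral in $\Gamma$ with $\sigma_{\lambda_\theta^{\pm}}=\sigma_\theta$, and $\operatorname{supp}_\Gamma(u,0)=\{\lambda_\theta^{\pm}:\theta\in\operatorname{supp}_G(u)\}$.

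\textbf{The number-theoretic condition.} Given an integer relation $\sum_\theta(\ell_\theta^+\lambda_\theta^++\ell_\theta^-\lambda_\theta^-)=0$ with $\sum_\theta(\ell_\theta^++\ell_\theta^-)=0$, substituting the values of $\lambda_\theta^{\pm}$ yields $\sum_\theta(\ell_\theta^++\ell_\theta^-)\theta+\sum_\theta(\ell_\theta^+-\ell_\theta^-)\sqrt{\theta^2+4m}=0$. I would then partition $\operatorname{supp}_G(u)$ according to whether $\theta^2+4m$ is a perfect square. For the $\theta$ with $\theta^2+4m$ not a square, grouping by squarefree part and using that distinct squarefree surds are linearly independent over $\mathbb Q$ (also from $1$) forces $\ell_\theta^+=\ell_\theta^-$ for each such $\theta$, so each contributes $\ell_\theta^++\ell_\theta^-=2\ell_\theta^+$, an even amount, to $\sum_{\sigma=-1}\ell_\lambda$. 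It remains to handle the $\theta$ for which $\theta^2+4m$ is a perfect square. Here I would invoke the sign rule for the original PST on $G$: with $g:=\gcd\{\theta-\theta':\theta,\theta'\in\operatorname{supp}_G(u)\}$ and $\theta_{\max}$ the largest support eigenvalue (whose Perron eigenvector forces $\sigma_{\theta_{\max}}=+1$), PST on $G$ gives $\sigma_\theta=+1\iff\theta\equiv\theta_{\max}\pmod{2g}$, with all support eigenvalues integral, divisible by $2^r$ but not all by $2^{r+1}$. Combining $v_2(\theta)\ge r$ for all $\theta$ (with equality for at least one), the hypothesis $v_2(m+1)\ge r+1$ — which makes $m$ odd and $m\equiv-1\pmod{2^{r+1}}$ — and a short computation of $\theta^2+4m$ modulo $2^{r+3}$, I would show that $\theta^2+4m$ being a perfect square is incompatible with $\sigma_\theta=-1$; hence the perfect-square $\theta$'s all have $\sigma_\theta=+1$ and contribute nothing to $\sum_{\sigma=-1}\ell_\lambda$. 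Therefore $\sum_{\lambda:\,\sigma_\lambda=-1}\ell_\lambda$ is even, condition (ii) holds, and PGST between $(u,0)$ and $(v,0)$ follows.

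\textbf{Where the difficulty lies.} The main obstacle is exactly the last $2$-adic step: showing that, under $2^r\,\|\,\gcd(\operatorname{supp}_G(u))$ and $2^{r+1}\mid m+1$, no support eigenvalue $\theta$ with $\sigma_\theta=-1$ can make the corona discriminant a perfect square. This forces one to track $2$-adic valuations carefully through the PST sign rule $\sigma_\theta=+1\iff\theta\equiv\theta_{\max}\pmod{2g}$ — in particular distinguishing the cases $v_2(g)=r$ and $v_2(g)>r$ — and it is the only point at which both hypotheses are needed together; the Schur-complement spectral picture, the lift of strong cospectrality, and the treatment of the irrational conjugate pairs are all essentially formal. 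A secondary technical point is the passage from the single graph $\overline{K_m}$ to an arbitrary $m$-vertex tuple $\vec H$: one must verify that replacing the uniform perturbation $m/t$ by $\diag_w W_{H_w}(t)$ does not enlarge $\operatorname{supp}_\Gamma(u,0)$ with eigenvalues that spoil the parity count, which is precisely where restricting attention to the $V(G)$-vertices $(u,0)$ and $(v,0)$ is essential.
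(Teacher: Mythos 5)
First, a point of reference: the thesis does not actually prove this statement --- it is quoted from \cite{ref:12} with the explicit remark that the proof is omitted --- so there is no in-paper argument to compare yours against. On its own terms, your framework is the right one: the Banchi--Coutinho--Godsil--Severini criterion (strong cospectrality plus the parity condition that every integer relation $\sum_\lambda\ell_\lambda\lambda=0$, $\sum_\lambda\ell_\lambda=0$ forces $\sum_{\sigma_\lambda=-1}\ell_\lambda$ to be even) is the standard route to pretty good state transfer, and your handling of irrational conjugate pairs via linear independence of distinct squarefree surds is the usual move in this family of results.

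The genuine gap is the spectral reduction itself, which you set aside as a ``secondary technical point.'' The description of $\operatorname{supp}_\Gamma(u,0)$ as conjugate pairs $\lambda_\theta^{\pm}=\tfrac12\bigl(\theta\pm\sqrt{\theta^2+4m}\bigr)$ is valid only for $\vec H=(\overline{K_m},\dots,\overline{K_m})$, where $W_H(t)=m/t$ and the Schur-complement perturbation is the scalar matrix $\tfrac{m}{t}I$. For a general $m$-vertex graph $H$, $W_H(t)=\sum_j w_j/(t-\mu_j)$ runs over the main eigenvalues of $H$, so even in the uniform case the equation $t-\theta-W_H(t)=0$ has up to $q+1>2$ roots per $\theta$ (e.g.\ three roots when $H=P_3$), none of the form $\tfrac12(\theta\pm\sqrt{D})$; and when the $H_i$ differ, $\operatorname{diag}_w W_{H_w}(t)$ no longer commutes with $A(G)$, so the eigenvalues seen by $(u,0)$ are roots of $\det\bigl(tI-A(G)-\operatorname{diag}_w W_{H_w}(t)\bigr)=0$ and are not indexed by $\operatorname{supp}_G(u)$ at all. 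Correspondingly, $X_\theta$ is in general not the $V(G)$-part of any eigenvector of $\Gamma$, so your lift of strong cospectrality fails too, and every subsequent step (grouping by squarefree part, the $2$-adic analysis of $\theta^2+4m$) has nothing to attach to. As written, your argument proves the theorem only for $G\circ\overline{K_m}$. The fact that the hypothesis refers only to $m=|V(H_i)|$ and not to the spectra of the $H_i$ is itself the tell that the intended proof must work through a mechanism in which only the vertex count of the pendant graphs enters (e.g.\ comparing the corona dynamics restricted to $V(G)$ with the $\overline{K_m}$ case); supplying that mechanism is the real content still missing, not the mod-$2^{r+3}$ computation you identify as the main obstacle. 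A smaller but real issue: your endgame assumes the support eigenvalues of $u$ are integers, which is not automatic for perfect-state-transfer graphs ($P_3$ has support $\{0,\pm\sqrt2\}$ at an end vertex), so the divisibility hypothesis also needs the quadratic-integer case addressed.
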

These are the sufficient conditions for pretty good state transfer under Corona product of graphs.

\begin{theorem}
Let $\vec{H} = (H_1,H_2)$ be a pair of graphs on $m \geq 1$ vertices. Then $K_2\circ \vec{H}$ has pretty good state transfer between the vertices of $K_2$.
\end{theorem}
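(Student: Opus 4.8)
The statement here is with respect to the Laplacian $L=L(K_2\circ\vec H)$ (consistent with the no‑go result of the previous section), and the plan is to collapse the whole $(2m+2)$‑vertex graph onto a $4$‑dimensional picture that is blind to the internal structure of $H_1$ and $H_2$. Write the two vertices of $K_2$ as $1,2$ with standard basis vectors $e_1,e_2$, and let $\mathbf 1_{H_i}$ be the indicator vector of the $i$th copy. First I would note that for each non‑constant Laplacian eigenvector $z$ of $H_i$ (i.e.\ $z\perp$ all‑ones), extending $z$ by zero gives an eigenvector of $L$ with eigenvalue shifted by $+1$; there are $2(m-1)$ of these and they are all orthogonal to both $e_1$ and $e_2$, so they contribute nothing to $\langle e_2|e^{-iLt}|e_1\rangle$ — and even if some of their eigenvalues coincide with one of the four special eigenvalues below, this disturbs neither the eigenvalue supports nor the cospectrality relations of $1$ and $2$. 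The complementary subspace $\mathcal U=\mathrm{span}\{e_1,e_2,\mathbf 1_{H_1},\mathbf 1_{H_2}\}$ is $L$‑invariant, and — because every row of a Laplacian sums to zero — the matrix of $L|_{\mathcal U}$ depends only on $m$. In effect, PGST between $1$ and $2$ in $K_2\circ(H_1,H_2)$ is equivalent to PGST between the two central vertices of the double star $K_2\circ\overline{K_m}$; this structure‑independence is precisely what fails for the adjacency matrix, which is why the statement must be read as a Laplacian statement.

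Second I would diagonalize $L|_{\mathcal U}$. In the orthonormal basis $e_1,e_2,\tfrac{1}{\sqrt{m}}\mathbf 1_{H_1},\tfrac{1}{\sqrt{m}}\mathbf 1_{H_2}$ there is an obvious involution exchanging the two sides; passing to its $\pm1$ eigenspaces block‑diagonalizes $L|_{\mathcal U}$ into the $2\times2$ blocks $\left(\begin{smallmatrix} m & -\sqrt{m}\\ -\sqrt{m} & 1\end{smallmatrix}\right)$ (symmetric part) and $\left(\begin{smallmatrix} m+2 & -\sqrt{m}\\ -\sqrt{m} & 1\end{smallmatrix}\right)$ (antisymmetric part). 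A one‑line computation gives symmetric eigenvalues $0$ and $m+1$, and antisymmetric eigenvalues $\theta_\pm=\tfrac12\big(m+3\pm\sqrt{m^2+6m+1}\big)$; since $m^2+6m+1=(m+3)^2-8$ is not a perfect square for any $m\ge1$, the $\theta_\pm$ are irrational, hence distinct from $0,m+1$ and from each other. Each of the four $2\times2$ eigenvectors has nonzero first coordinate (the off‑diagonal entry $-\sqrt{m}$ is nonzero), so $e_1$ — and likewise $e_2$ — overlaps all four, and the eigenvalue support of vertex $1$ is exactly $\{0,\,m+1,\,\theta_+,\,\theta_-\}$. The involution forces strong cospectrality of $1$ and $2$, with sign $+1$ on $0$ and $m+1$ and sign $-1$ on $\theta_+$ and $\theta_-$.

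Finally I would apply the standard spectral characterization of pretty good state transfer: for strongly cospectral vertices with signs $\sigma_\lambda$, PGST holds iff every integer tuple $(\ell_\lambda)_{\lambda\in\mathrm{supp}}$ with $\sum_\lambda\ell_\lambda\lambda=0$ and $\sum_\lambda\ell_\lambda=0$ satisfies $\sum_{\sigma_\lambda=-1}\ell_\lambda\equiv0\pmod 2$. Writing $\theta_\pm=\tfrac{m+3}{2}\pm\tfrac12\sqrt{m^2+6m+1}$ and splitting $\ell_{m+1}(m+1)+\ell_{\theta_+}\theta_++\ell_{\theta_-}\theta_-=0$ into its rational part and its (irrational) multiple of $\sqrt{m^2+6m+1}$, irrationality forces $\ell_{\theta_+}=\ell_{\theta_-}$; hence $\sum_{\sigma_\lambda=-1}\ell_\lambda=\ell_{\theta_+}+\ell_{\theta_-}=2\ell_{\theta_+}$ is even, for every $m\ge1$. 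The criterion holds unconditionally, so $K_2\circ\vec H$ has pretty good state transfer between the two vertices of $K_2$.

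I expect the number‑theoretic step to be the easiest part; the real work is the reduction. The delicate points are showing that $\mathcal U$ is invariant with a matrix depending only on $m$ (the bookkeeping of the ``$+1$'' shift coming from the corona edges together with the vanishing Laplacian row sums of the $H_i$), verifying that all four overlaps with $e_1$ are nonzero, and — since $H_1\ne H_2$ in general means $K_2\circ\vec H$ need not possess any nontrivial graph automorphism — establishing strong cospectrality of $1$ and $2$ directly from the block decomposition rather than from a symmetry of the graph. That is where I would be most careful.
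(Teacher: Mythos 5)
The paper does not prove this theorem: it explicitly restates it from \cite{ref:12} and defers to the original work for the proof, so there is no in-text argument to compare against. Your proof, read as a Laplacian statement (which is the correct reading, consistent with the surrounding Laplacian results), is sound and self-contained modulo the standard Kronecker-type characterization of pretty good state transfer for strongly cospectral vertices, which you correctly quote. I checked the key computations: the block Laplacian is
\begin{equation}
L=\begin{bmatrix} m+1 & -1 & -\mathbf 1^T & 0\\ -1 & m+1 & 0 & -\mathbf 1^T\\ -\mathbf 1 & 0 & L(H_1)+I & 0\\ 0 & -\mathbf 1 & 0 & L(H_2)+I\end{bmatrix},
\end{equation}
so zero-extensions of non-constant eigenvectors of $L(H_i)$ are indeed eigenvectors (eigenvalue $\mu+1$) orthogonal to $e_1,e_2$, the span $\mathcal U$ is invariant with matrix depending only on $m$ because $L(H_i)\mathbf 1=0$, and the two $2\times2$ blocks have traces $m+1$, $m+3$ and determinants $0$, $2$, giving exactly the eigenvalues $0$, $m+1$, $\theta_\pm=\tfrac12\bigl(m+3\pm\sqrt{(m+3)^2-8}\bigr)$ you state; $(m+3)^2-8$ is never a perfect square for $m\ge 1$ (a factorization of $8$ into two factors of equal parity forces $m=0$), the first coordinate of each $2\times2$ eigenvector is nonzero since the off-diagonal entry $-\sqrt m\neq 0$, and the rational/irrational split forces $\ell_{\theta_+}=\ell_{\theta_-}$, so the parity condition holds unconditionally. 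Your handling of the two genuine subtleties — possible eigenvalue coincidences between the $\mu+1$ and the four quotient eigenvalues (harmless, since the extra eigenvectors are orthogonal to $e_1,e_2$ and so change neither the supports nor the projections of $e_1,e_2$), and the absence of a graph automorphism swapping the two sides when $H_1\neq H_2$ (circumvented by the involution on $\mathcal U$ alone) — is exactly right, and this quotient-plus-field-trace strategy is the same one used in the cited source.
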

This theorem establishes a particular class of Corona graphs where the pretty good transfer is possible. In the light of these three theorems and subsection \ref{subsec:balanced} what remains to explore are the signed balanced graphs under the Corona product with XY coupling. And also the unbalanced signed XY and Heisenberg coupling based graphs under Corona. We use Theorem \ref{th:cor1} and Theorem \ref{th:cor2} to construct some specific graphs and numerically study perfect state transfer and pretty good state transfer under Corona product (for the marking schemes mentioned in section \ref{sec:corona}) for both XY and Heisenberg interactions.

\section{Numerical study of some signed graphs under Corona product}
Some conclusive results based on the previous section study are presented here. All the results and conclusions for the current progress are based on numerical study and examples based on construction. We start with constructing the Hamiltonian (and thereby the adjacency and the Laplacian matrices) of a possible graph and then calculate the fidelity of a perfect state transfer from a given node to another given node of the graph. Perfect transfer is possible if a non-zero and finite time $t_0$ exists at which fidelity is unity. Most often the value of this $t_0$ is not very important for our work as is the check for a perfect transfer. If perfect transfer is possible for some time $t_0$, it is enough evident to classify the possible graphs which support perfect quantum state transfer. We plot the fidelity as a variation of time as a parameter in the problem. For mirror-symmetric graphs between a chosen pair of nodes, the fidelity is periodic as a function of time for the given pair of nodes. If periodicity does not hold, then fidelity follows quite complicated variation w.r.t. time evolution in large graphs. Numerics has been performed in \href{https://www.wolfram.com/mathematica/}{Wolfram Mathematics 11.3}.
\subsubsection{Example: 1}
This is the case of a 2-clique signed graph as shown in figure \ref{fig:eg1}. This is the simplest example of net regular balanced signed graph that satisfies Theorem \ref{th:cor1} and Theorem \ref{th:cor2}. Perfect transfer is possible from node 1 to node 3 and vice versa and also node 2 to node 4 and vice versa (shown in green). Perfect transfer is forbidden for the rest combinations, which are all adjacent (shown in red). All vertices are negatively marked according to canonical marking scheme. Graph is mirror symmetric between 1 \& 3 and 2 \& 4 and hence the transfer is periodic with a period $2t_0=\pi$, with $t_0$ as the time for perfect transfer from one node to another. All these properties are summarised by finding that the fidelity is
\begin{equation}
    F(t)=\sin^2(t)
\end{equation}
for both the cases.
For most cases, the fidelity is not at all simple to calculate in compact analytical form as above and instead only numerical computation is possible.
\begin{figure}[hbtp]
    \centering
    \includegraphics[scale=0.8]{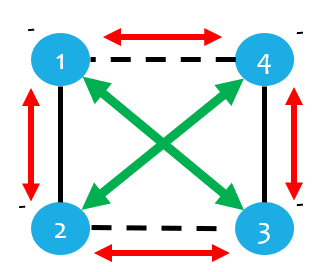}
    \caption{2-clique signed graph. Simplest case for a signed graph that is net-regular and balanced with markings as shown. All edge weights are $\pm1$, with solid lines as +1 and dotted lines as -1.}
    \label{fig:eg1}
\end{figure}
    
The first self corona product is shown in figure \ref{fig:eg1cor}

\begin{figure}[hbtp]
    \centering
    \includegraphics[scale=0.7]{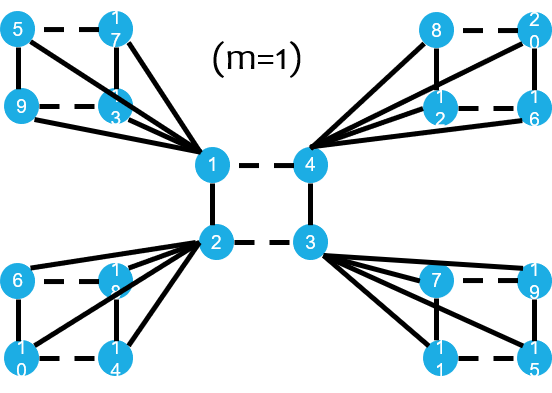}
    \caption{First corona product ($m=1$) of signed 2-clique.}
    \label{fig:eg1cor}
\end{figure}

More self-corona products with the seed can be found out in similar fashion. The fidelity between 1 \& 3 and 2 \& 4 is observed to \textit{decrease} as $m$ increases. This is plotted in figure \ref{fig:eg1graph}. This implies that Corona product for this graph does not support perfect state transfer and it satisfies the conclusions drawn in the previous section. These corona products are balanced and any signed balanced graph has the same transfer dynamics as its unsigned version where (in this case), corona product perfect state transfer is forbidden for Laplacian and adjacency constructions.

\begin{figure}[hbtp]
    \centering
    \includegraphics[scale=0.6]{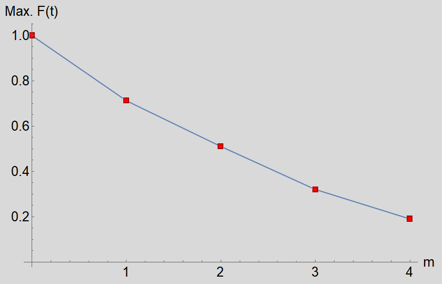}
    \caption{Plot of maximum fidelity $\max [F(t)]$ vs $m$ between nodes 1 \& 3 and 2 \& 4.}
    \label{fig:eg1graph}
\end{figure}

Inferences from this example are as follows:
\begin{itemize}
    \item Fidelity between any two nodes in any graph decreases as m increases (more graph bulk increases) for most (not all) graphs.
    \item Symmetry is important factor. In many cases, symmetric ends support perfect transfer.
\end{itemize}

\subsubsection{Example: 2}
There is not much similarity between graph dynamics when we change a graph slightly. This can be seen in this example. It does not support perfect state transfer between any given pair of nodes over all vertices. Fidelity below 90\% is practically very bad in terms of experiments and hence this example is a bad architecture. The maximum fidelity allowed by this seed graph is only 0.68. However, the first corona product of this graph allows the same fidelity between 1 \& 3 which does not decay (another important observation).
\begin{figure}[hbtp]
    \centering
    \includegraphics[scale=0.8]{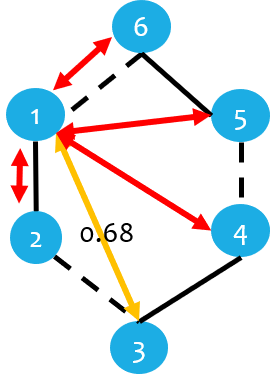}
    \caption{Another non-trivial example forming a signed hexagon $K_6$.}
    \label{fig:eg2}
\end{figure}

\subsubsection{Example: 3}
In this example of signed $K_8$, the near perfect quantum state transfer is possible between diametrically opposite ends. Also notice that the effective distance between these ends is more than the length of a $n=3$ chain yet a near perfect transfer is possible due to the signed graph nature and also the cyclicity imposed.

The corona products of the graph offers decaying fidelity similar to the 2-clique graph and much less fidelity on the rest of the pair of nodes.
\begin{figure}[hbtp]
    \centering
    \includegraphics[scale=0.8]{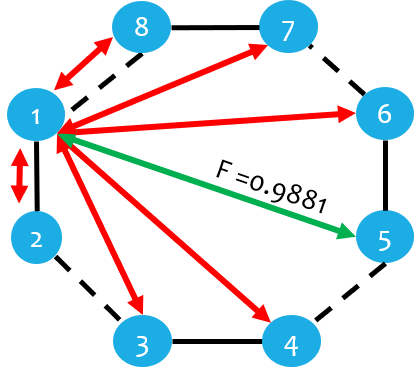}
    \caption{Another non-trivial example forming a signed octagon $K_8$. Near perfect state transfer is possible.}
    \label{fig:eg3}
\end{figure}

\subsubsection{Example: 4}
This is the fist non-trivial and simplest example of a signed unbalanced graph that satisfies the net-regularity and net-negative degree to make use of the concerned theorems for the adjacency and Laplacian eigenvalues and eigenvectors. It allows prfect state transfer for two symmetric adjacent nodes as shown in figure \ref{fig:eg4}. The importance of this graph is that it is not proved to show forbidden perfect state transfer for corona products. However, the dynamics of the corona product follows the similar results as for many other examples that the perfect state transfer is not supported after the corona products and fidelity between any pair of given nodes decays as the order increases. But it leaves the question of exploring other variations of similar graphs which are signed and unbalanced.

\begin{figure}[hbtp]
    \centering
    \includegraphics[scale=1]{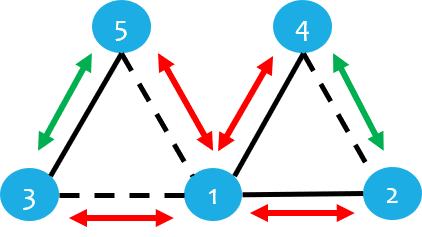}
    \caption{First non-tivial example that satisfies the theorems and is unbalanced.}
    \label{fig:eg4}
\end{figure}

\subsubsection{Example: 5}
This is another signed variation of the $K_6$ with more connectivity, however, the graph is still net-regular to take the advantages of the theorems. The graph is not symmetric. In contrast to symmetry argument before, this still allows perfect transfer between 1 \& 5 and near perfect transfer in other three pair of nodes marked with yellow in \ref{fig:eg5}. The graph is not complete as every node is not connected to every other node as can be seen trivially. This seed as a graph is very well constructed example for perfect state transfer. And this is a very peculiar example with the property that the fidelity for perfect transfer between 3 \& 5 is preserved even with corona products of higher order and it is also conserved for the other three near-perfect transfer ends. This graph for these pair of nodes behaves as being invariant for fidelity for corona products. This is generally true for other nodes as well that the fidelity does not decay with the product order $m$. However, after the corona product, new nodes are observed to have less fidelity than any pair on the seed itself.
\begin{figure}[hbtp]
    \centering
    \includegraphics[scale=0.9]{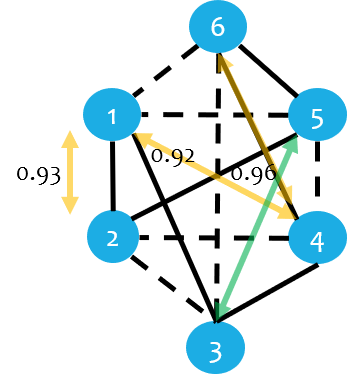}
    \caption{Another variation of signed net-regular $K_6$ with more connectivity.}
    \label{fig:eg5}
\end{figure}

\subsubsection{Example: 6}
These are two signed and conjugate pair variations of $K_6$ as shown in figure \ref{fig:eg6}. These two examples indicate that signs and connectivity of graphs change the perfect transfer with strong dependence. Perfect transfer is allowed between 3 \& 6 ends in both graphs. This also proves that these conjugate signing schemes are equivalent in perspective of perfect state transfer problem, however they are two very different graphs. Symmetry argument holds for this graph and perfect state transfer is periodic. For the corona product, fidelity between any two nodes chosen initially is conserved and also perfect transfer is possible between the same nodes even with the higher order corona products much similar to example 5.

\begin{figure}[hbtp]
    \centering
    \includegraphics[scale=0.8]{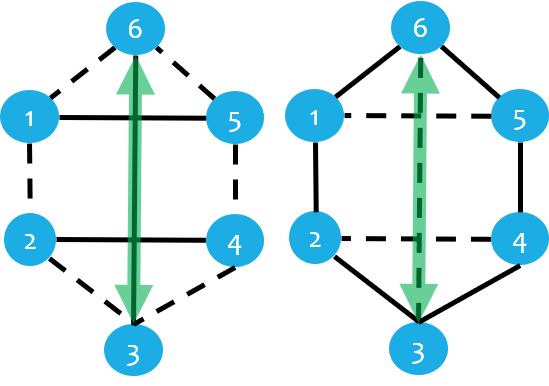}
    \caption{Two conjugate signed versions of $K_6$.}
    \label{fig:eg6}
\end{figure}

\subsubsection{Example: 7}
This example is a fully connected version of $K_8$. One negative edge allows perfect transfer between the same edge, indicating the importance of negative edges. There is very rich dynamics of this graph to see the different combinations of signed edges in different signed versions. This is symmetric between nodes 7 \& 8.
\begin{figure}[hbtp]
    \centering
    \includegraphics[scale=0.8]{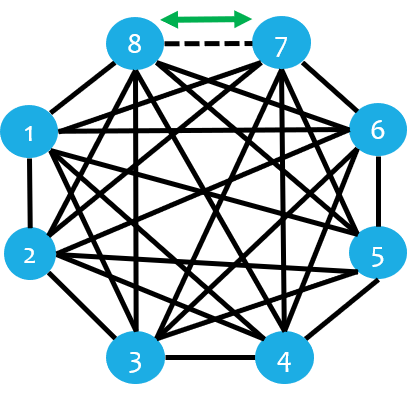}
    \caption{Another variation of signed net-regular $K_6$ with more connectivity.}
    \label{fig:eg7}
\end{figure}

\subsubsection{Example: 8}
Another signed and complete version of the 2-clique graph. Allows perfect state transfer between the same pair of nodes as the initial graph. has exactly the same dynamics. The fidelity decreases with $m$. No perfect state transfer possible in the higher corona products implying it is also a bad seed graph to start with. The graph should be chosen such that the fidelity is non-decreasing between any pair of nodes with increasing bulk on the graph.
\begin{figure}[hbtp]
    \centering
    \includegraphics[scale=0.8]{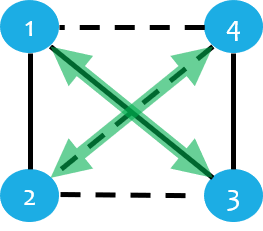}
    \caption{Another variation of signed 2-clique K$_4$ with more connectivity.}
    \label{fig:eg8}
\end{figure}

\subsubsection{Example: 9}
One signed unbalanced graph and another its completed connected version. Both these graphs allow only below 0.5 fidelity and the corresponding corona products are even far below this value. Also, for these non-symmetric graphs, the flow of fidelity with time is very irregular and chaotic as shown in figure \ref{fig:eg9graph}.
\begin{figure}[hbtp]
    \centering
    \includegraphics[scale=0.9]{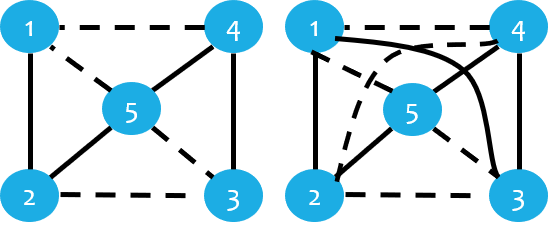}
    \caption{Another variation of signed net-regular $K_6$ with more connectivity.}
    \label{fig:eg9}
\end{figure}

\begin{figure}[hbtp]
    \centering
    \includegraphics[scale=0.6]{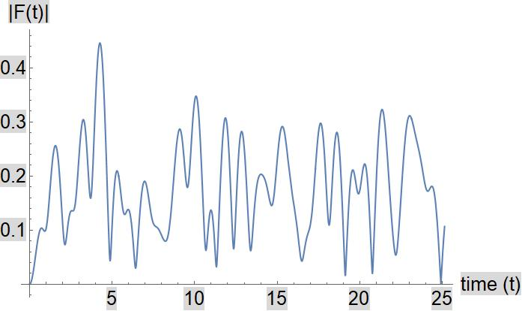}
    \caption{Typical flow of fidelity $F(t)$ w.r.t. time  $t$ for a pair of nodes of example 9.}
    \label{fig:eg9graph}
\end{figure}

\subsubsection{Example: 10}
This is not a net-regular graph and hence we cannot take the advantages of the theorems for the eigenpair construction. However, the manual computations reveal that no perfect state transfer s possible between any nodes in seed as well as the corona products. It gives very low fidelity for any pair of nodes on the seed as well as the corona products. It can be looked in contrast to the linear chain of two-links where perfect transfer is possible. We conclude that the perfect transfer is lost very quickly upon adding on extra link in between for signed as well as unsigned version.
\begin{figure}[hbtp]
    \centering
    \includegraphics[scale=0.8]{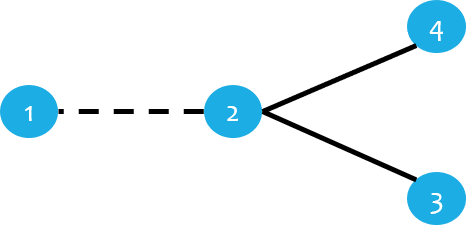}
    \caption{Non-cyclic signed graph.}
    \label{fig:eg10}
\end{figure}

\section{Interpretation from the numerical study}
The task was to check by the construction of examples that signed corona graphs can support perfect state transfer which can be seen from these special examples. Around 30 examples were constructed while these 10 were important. In summary:
\begin{itemize}
    \item We see that perfect state transfer is possible for certain signed graphs which preserve unity fidelity under Corona product for certain pair of nodes. Thereby showing that under signed graphs we can recover the PST in Coronas.
    \item This indicates a class of possible graphs which sustain perfect transfer in contrast to Theorem 4.1 in \cite{ref:12} which forbids perfect transfer in unsigned graphs
    \item Fidelity for some node pairs increase while for others it decreases (this needs to be classified analytically)
    \item Symmetries in the graph allow to choose the pair of nodes for perfect transfer
\end{itemize}
\newpage
\chapter{Scalable and routing enabled network for Perfect State Transfer}

	\label{chap:twohop}
	\section{Introduction and motivation}

    
    In the light of section \ref{sec:routingimpossible} and section \ref{sec:chainlimitation}, there are limitations to routing and transfer distance. Transfer distance was tackled in \cite{ref:4} as presented in section \ref{sec:longchains} as Cartesian product resulting for PST for the pair of antipodal points on the hypercube of any order. This is limited to only antipodal points and cost of the constructing such large number of edges is high just to enable PST over two given vertices on the graph. In this chapter we aim to propose a solution to both of these problems. In our state transfer scheme,
    \begin{itemize}
        \item Arbitrary number of vertices $n$ (intrinsically qubits) is allowed for the qubit network
        \item Perfect state transfer is enabled from all-to-all vertices on the graph in at most time $2t_0$ with the same fidelity of unity
        \item Enables a growing network architecture for scalability of quantum network while preserving both the above properties
    \end{itemize}

	We assume that a quantum communication network is a connected graph, allowing perfect quantum state transfer between any two vertices. The above three features can be enabled when we have the freedom of edge switching, that is, we are allowed to switch off and on the couplings (edges) in the graph. This corresponds to switching off and on the interaction between the qubits. We justify this requirement both theoretically and experimentally (in chapter \ref{chap:phyreal}). A connected graph is described in a graph theoretic fashion, which has a path that is a sequence of vertices and edges between any two vertices. Therefore, the graph offers a classical platform for a quantum mechanical operation. There is no graph other than $K_2$ allowing perfect state transfer between any two vertices. In general, the perfect state transfer is possible between a few specific vertices, in a larger graph. Increasing the number of attempts for state transfer makes is limited between two specific vertices only. It leads us to the conclusion that only quantum mechanical process is not sufficient to fulfill our requirements, that is the perfect state transfer between any two vertices of a graph. Therefore, we propose a hybrid of combinatorial and quantum information theoretic method, such that, a perfect quantum state transfer is possible between any two vertices of the graph. Our results in this work hold both for XY as well as the Laplacian coupling Hamiltonian.

	\section{Graph labeling and associated Hilbert spaces}
	
		Let $G = (V(G), E(G))$ be a graph with $V$ vertices. We label the vertices by the integers $0, 1, 2, \dots (|V| - 1)$.
		Any integer $v \in \{0, 1, \dots (|V| - 1)\}$ has a $(k + 1)$ term binary representation $\bin(i)$, where $2^{k} < |V| \leq 2^{(k + 1)}$, for $k = 1, 2, \dots$. Now, $\ket{\bin(v)}$ represents a quantum state vector in $\mathbb{C}^{2^{(k + 1)}} = \mathbb{C}^2 \otimes \mathbb{C}^2 \otimes \dots \otimes  \mathbb{C}^2((k + 1)$-times). For example if $v = 2$, then $\bin(v) = 10$ and $\ket{\bin(v)} = \ket{10} = \ket{1} \otimes \ket{0}$, where $\ket{0} = \begin{bmatrix}1 \\ 0 \end{bmatrix}$ and $\ket{1} = \begin{bmatrix} 0 \\ 1 \end{bmatrix}$ are the standard basis vectors. This coincides with the first excitation subspace of the $XY$ and Laplacian Hamiltonian.
		
		Corresponding to the vertex $v$ we also associate a state vector $\ket{v} \in \mathbb{C}^{|V|}$. If we denote a vector in $\mathbb{C}^{|V|}$ as $\ket{m} = (m_0, m_1, \dots m_{|V| - 1})^T$ then the vector $\ket{v}$ is given by $m_u = 0$ for $u \neq v$ and $m_v = 1$. We define a linear transformation $R : \mathbb{C}^{2^k} \rightarrow \mathbb{C}^{2^{(k + 1)}}$ by $R \ket{m} = (m_0, m_1, \dots m_{|V| - 1}, 0, 0, \dots 0 ((2^{(k + 1)} - |V|)$ -times$))^T$ which will help us to extend over to a larger Hilbert space by appending extra fixed labels for a state. Therefore, now $R\ket{m}$ belongs to $\mathbb{C}^{2^{k+1}}$. When we have $n$ number of vertices, where $2^k\leq n \leq 2^{k+1}$, then we adopt the labeling for $2^{k+1}$ vertex graph.

	\section{Hypercubes and their properties}
	
		Cartesian product was presented in section \ref{sec:cartesian}. Here, for hypercubes, we are concerned with the Cartesian product of a graph with itself. The Cartesian product of $G$ with itself is denoted by $G^{\square 2} = G \square G$. Similarly, for any natural number $k$ we denote the $k$-th Cartesian product as $G^{\square k}$. The Cartesian product is associative. In general, it is commutative when the graphs are not labelled. Also, the graphs $G \square H$ and $H \square G$ are naturally isomorphic.
		  
		A hypercube $Q_k$ of dimension $k$ is a graph with $2^k$ vertices for $k = 0, 1, 2, \dots$. For $k = 0$ the graph $Q_0$ consists of a single vertex. For $k = 1$ we have two vertices and an edge in the hypercube $Q_1$, which can also be described as the complete graph $K_2$ with two vertices. When $k \geq 3$ we can justify $Q_k = (K_2)^{\square n}$. Hence, the Cartesian product of two hypercubes is another hypercube, that is $Q_i \square Q_j = Q_{i + j}$ \cite{harary1988survey}.
		
		Let the vertices of $K_2$ are given by $0$ and $1$. Then the vertices of $Q_k$ are represented by the elements in the set $\{0, 1\}^{\times k} = \{0, 1\} \times \{0, 1\} \times \dots \times \{0, 1\} (k$-times). Note that the elements of $\{0, 1\}^{\times k}$ are the $k$-term binary representations of the natural numbers $0, 1, \dots 2^k$. Hence $\bin(v)$ denotes the label of the vertex $v$ in the hypercube. An important property for the hypercubes is that any two vertices $u$ and $v$ in $Q_k$ are adjacent when the Hamming distance between $\bin(u)$ and $\bin(v)$ is $1$, for $k > 2$.
		
		\begin{definition}
			\textbf{Antipodal points}: Two vertices $u$ and $v$ which are labeled by the binary sequences $\bin(u) = (u_j)_{j = 0}^{(k - 1)}$ and $\bin(v) = (v_j)_{j = 0}^{(k - 1)}$ in the hypercube $Q_k$ are called the antipodal points if $u_j \neq v_j$, for all $j$.
		\end{definition}
		For example, the antipodal points of $Q_2$ are $0$ and $1$. The antipodal points of $00$ in $Q_3$ is $11$ and $01$ for $10$. In case of $Q_3$, we can write the antipodal points as pairs $(000, 111), (001, 110), (101, 010)$ and $(011, 100)$.
		
		Note that, in case of the hypergraphs $Q_k$ we have $\mathbb{C}^{|V(Q_k)|} = {C}^{2^{(k + 1)}}$. Therefore, the linear operator $T$ is the identity function for this case.
		
		The hypercube of dimension $0$ is a single vertex labeled by $0$ only. The hypercube of dimension $1$ is denoted by $Q_1$ which is depicted as follows
		\begin{center}
			\begin{tikzpicture}
				\draw [fill] (0, 0) circle [radius = 0.1];
				\node [left] at (0, 0) {$0$};
				\draw [fill] (1, 0) circle [radius = 0.1];
				\node [right] at (1, 0) {$1$};
				\draw (0, 0) -- (1, 0);
			\end{tikzpicture}
		\end{center}
		Note that $Q_1$ is the complete graph with two vertices $K_2$. The hypercube $Q_2 = K_2 \square K_2$ has four vertices which is represented by
		\begin{center} 
			\begin{tikzpicture}
			\draw [fill] (0, 0) circle [radius=0.1];
			\node [below left] at (0, 0) {$00$};
			\draw [fill] (0, 1) circle [radius=0.1];
			\node [above left] at (0, 1) {$01$};
			\draw [fill] (1, 0) circle [radius=0.1];
			\node [below right] at (1, 0) {$10$};
			\draw [fill] (1, 1) circle [radius=0.1];
			\node [above right] at (1, 1) {$11$};
			\draw (0,0) -- (0, 1) -- (1, 1) -- (1, 0) -- (0, 0);
			\end{tikzpicture}
		\end{center}
		Also, the hypercube $Q_3 = Q_2 \square K_2$ has $8$ vertices which is given by
		\begin{center}
			\begin{tikzpicture}[scale = .75]
			\draw [fill] (0, 0) circle [radius=0.1];
			\node [below left] at (0, 0) {$000$}; 
			\draw [fill] (0, 3) circle [radius=0.1];
			\node [above left] at (0, 3) {$001$};
			\draw [fill] (3, 0) circle [radius=0.1];
			\node [right] at (3, 0) {$010$};
			\draw [fill] (3, 3) circle [radius=0.1];
			\node [above right] at (3, 3) {$011$};
			\draw (0,0) -- (0, 3) -- (3, 3) -- (3, 0) -- (0, 0);
			\draw [fill] (1.5, 1.5) circle [radius=0.1];
			\node [below left] at (1.5, 1.5) {$101$};
			\draw [fill] (4.5, 1.5) circle [radius=0.1];
			\node [above right] at (4.5, 1.5) {$111$};
			\draw [fill] (4.5, -1.5) circle [radius=0.1];
			\node [below right] at (4.5, -1.5) {$110$};
			\draw [fill] (1.5, -1.5) circle [radius=0.1];
			\node [below left] at (1.5, -1.5) {$100$};
			\draw (1.5, 1.5) -- (4.5, 1.5) -- (4.5, - 1.5) -- (1.5, -1.5) -- (1.5, 1.5);
			\draw (0, 3) -- (1.5, 1.5);
			\draw (3, 3) -- (4.5, 1.5);
			\draw (0, 0) -- (1.5, -1.5);
			\draw (3, 0) -- (4.5, -1.5);
			\end{tikzpicture}
		\end{center}
		
		In general, the hypercube $Q_{k + 1} = Q_k \square K_2$ for $k \geq 2$.  The vertex labels of $Q_k$ are the distinct binary sequences of length $k$. The Cartesian product between $Q_k$ and $K_2$ makes the number of vertices doubled as well as add an additional index to the vertex labeling. 
		
		A hypercube of dimension $k$ consists of smaller hypercubes of dimension $i$ for $i = 0, 1, \dots (k - 1)$. All the hypercubes of dimension $i$ are unique upto isomorphism. But, all the hypercubes $Q_i$ embedded in $Q_k$ have different vertex labelling. The number of distinct hypercubes $Q_i$ embedded in $Q_k$ is given by $\binom{k}{i} 2^{(k - i)}$ \cite{klavvzar2006counting}. The next lemma suggests how to distinguish a particular subhypercube which is embedded in a larger hypercube.
		
		\begin{lemma}\label{projection_lemma} 
			Let the $2^k$ vertices of the hypercube $Q_k$ be labeled by the binary sequences $\bin(v) = (v_j)_{j = 0}^{(k - 1)}, v_j \in \{0, 1\}$. For some $i$ with $1 \leq i < k$, consider $(k - i)$ integers $\{m_t: t = 1, 2, \dots (k - i)\}$, such that $0 \leq m_1 < m_2 < \dots < m_{(k - i)} \leq (k - 1)$, and a binary sequence $M = (M_t)_{t = 1}^{(k - i)}$. Corresponding to the set of indices $\{m_t\}$ and the binary sequence $M$ construct a set of vertices $V_i = \{v: \bin(v) = (v_j)_{j = 0}^{(k - 1)}, v_{m_t} = M_t ~\text{for}~ t = 1, 2, \dots (k - i) \} \subset V(Q_k)$. Then the induced subgraph of $Q_k$ generated by $V_i$ is isomorphic to the hypercube $Q_i$  
		\end{lemma}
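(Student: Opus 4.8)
The plan is to build an explicit graph isomorphism between $Q_i$ and the induced subgraph of $Q_k$ generated by $V_i$, using the coordinate positions that $M$ leaves free. First I would set $F := \{0,1,\dots,k-1\}\setminus\{m_1,\dots,m_{k-i}\}$, the set of unconstrained positions; since exactly $k-i$ positions are pinned down by $M$, we have $|F| = i$, and I write $F = \{f_1 < f_2 < \dots < f_i\}$. Labelling the vertices of $Q_i$ by binary sequences $w = (w_s)_{s=1}^{i}$, I define $\varphi \colon V(Q_i) \to V_i$ by letting $\varphi(w)$ be the vertex of $Q_k$ whose label has $v_{f_s} = w_s$ for $s = 1,\dots,i$ and $v_{m_t} = M_t$ for $t = 1,\dots,k-i$.

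Next I would verify that $\varphi$ is a bijection onto $V_i$. By construction $\varphi(w) \in V_i$ for every $w$. Conversely, any vertex of $V_i$ is forced to take the value $M_t$ at each position $m_t$, so it is completely determined by its restriction to the free positions $F$; as $w$ ranges over $\{0,1\}^i$ this restriction ranges over all patterns on $F$, so $\varphi$ is onto $V_i$ and, being a map between finite sets of the same size $2^i$, also injective.

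The heart of the argument is showing that $\varphi$ preserves and reflects adjacency. Using the characterization recalled above — two vertices of a hypercube are adjacent precisely when their binary labels differ in exactly one coordinate, which is elementary to check in the small cases $i \le 2$ as well — I note that $\varphi(w)$ and $\varphi(w')$ agree at every constrained position $m_t$, so the number of coordinates in which their labels differ equals the number of coordinates in which $w$ and $w'$ differ. Hence $\varphi(w)$ and $\varphi(w')$ are adjacent in $Q_k$ if and only if $w$ and $w'$ are adjacent in $Q_i$; and because both images lie in $V_i$, such an edge of $Q_k$ is, by the definition of induced subgraph, precisely an edge of the subgraph generated by $V_i$. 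Therefore $\varphi$ is the claimed isomorphism.

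The verification is entirely combinatorial bookkeeping, so I do not expect a genuine obstacle; the only place that needs care is the indexing — aligning the length-$i$ labels of $Q_i$ with the ordered free positions $f_1 < \dots < f_i$ — and applying the Hamming-distance adjacency rule uniformly to both $Q_k$ and $Q_i$. As an alternative one could induct on $k-i$ by repeatedly using $Q_{k} = Q_{k-1}\,\square\,K_2$ to peel off one constrained coordinate at a time, but the direct construction is shorter and exhibits the isomorphism explicitly.
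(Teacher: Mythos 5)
Your proof is correct and follows essentially the same route as the paper: the paper defines the inverse map $V_i \to V(Q_i)$ by deleting the constrained coordinates, whereas you define the map $V(Q_i)\to V_i$ by inserting them, and both arguments rest on the same Hamming-distance-one characterization of adjacency. No gaps.
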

	
		\begin{proof}
			To label a vertex in $Q_i$ we need a binary sequence of length $k$. For constructing the vertex set $V_i$ we keep $(k - i)$ terms in the sequence constant, which are equal to the elements of $M$. Therefore, number of elements in $V_i$ is $2^i$, which is the number of vertices in $Q_i$.
			
			Let $H = (V(V_i), E(H))$ be the induced subgraph of $Q_k$ generated by $V_i$. Clearly, $V(Q_i) = V_i$. We write $\bin(v) = (v_j)_{j = 0}^{(k - 1)} = (v_0, v_1, \dots v_{(k - 1)})$. Given the set of indices $\{m_t\}$ define $\bin(v) \ominus \{m_t\} = (v_0, v_1, \dots v_{m_1 - 1}, v_{m_1 + 1}, \dots v_{m_2 - 1}, v_{m_2 + 1}, \dots v_k)$, that is we remove the terms of $\bin(v)$ corresponding to the indices in $\{m_t\}$. Clearly, after removing $(k - i)$ terms from $\bin(v)$ we find a new binary sequences of length $i$. \footnote{Sir, please find if the operation is standard in the literature of coding theory or Boolean functions.}
			
			Now, define a function $f: V_i \rightarrow V(Q_i)$, such that, $f(v) = \bin(v) \ominus \{m_t\}$ and prove that it is bijective function. For any two different $u$ and $v$ in $V_i$, we have $u_{m_t} = v_{m_t}$. When these equal entries are removed we get two different binary sequences. Hence, the function $f$ is injective. Consider any element $w \in V(Q_i)$. Note that, $\bin(w) = (w_j)_{j = 0}^{(i - 1)}, w_i \in \{0, 1\}$. This binary sequence of $i$ terms can be extended to a binary sequence of $k$ terms by including the elements $M_t$ of $M$ at the $m_t$-th index. It concludes that $f$ is surjective. Therefore $f$ is a bijective mapping. This function does the reverse of the function $R|.\rangle$ defined previously.
			
			Consider two adjacent vertices $u$ and $v$ in $H(V_i)$. As $H(V_i)$ is a subgraph of $Q_k$ the binary sequences $\bin(u)$ and $\bin(v)$ has Hamming distance $1$. The construct of $V_i$ suggests that $m_t$-th entries of $\bin(u)$ and $\bin(v)$ are equal, which are removed by the function $f$. Therefore, the sequences $f(u)$ and $f(v)$ has Hamming distance $1$. As $f(u)$ and $f(v)$ represents two vertices in $Q_i$, they are adjacent. 
			
			Alternatively consider two adjacent vertices in $Q_i$ which are labeled by binary sequence of length $i$. These sequences have Hamming distance $1$. We add equal entries at equal indexed position to get their inverse in $V_i \subset V(Q_k)$. The inverses also have Hamming distance $1$. Hence, they are adjacent in $Q_k$. As $H(V_i)$ is an induced subgraph of $Q_k$, they are also adjacent. Therefore, $f$ is a graph isomorphism. 
		\end{proof}
	
		\begin{corollary}\label{induced_hypercube}
			Consider two hypercubes $Q_p$ and $Q_q$ with $p > q$. Then there is an induced subgraph of $Q_p$ with $2^q$ vertices, which is isomorphic to $Q_q$. 
		\end{corollary}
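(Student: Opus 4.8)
The plan is to obtain this corollary as an immediate specialization of Lemma \ref{projection_lemma}. First I would set $k = p$ and $i = q$ in the lemma. Since $p > q$, we have $1 \le q < p$ whenever $q \ge 1$, so the standing hypothesis $1 \le i < k$ of the lemma is met; the remaining case $q = 0$ is trivial, as any single vertex of $Q_p$ induces a copy of $Q_0$.

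Next I would make a concrete choice of the data the lemma leaves free. Pick the $(p - q)$ indices $m_t = t - 1$ for $t = 1, 2, \dots, (p - q)$ — i.e.\ freeze the first $p - q$ coordinates of the binary labels — and take the binary sequence $M = (0, 0, \dots, 0)$ of length $p - q$. (Any choice of $(p-q)$ distinct indices in $\{0, 1, \dots, p-1\}$ and any binary word $M$ of that length would do equally well; this merely fixes one.) Form the vertex set $V_q = \{ v \in V(Q_p) : \bin(v) = (v_j)_{j = 0}^{(p - 1)}, \ v_{m_t} = M_t \text{ for all } t = 1, \dots, (p-q) \}$.

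Then I would simply invoke Lemma \ref{projection_lemma}: the induced subgraph of $Q_p$ generated by $V_q$ is isomorphic to $Q_q$. The vertex count is immediate, since freezing $p - q$ of the $p$ coordinates leaves $q$ free coordinates, giving $|V_q| = 2^q$ as claimed. There is essentially no obstacle here — the corollary is a packaging of the lemma — and the only point requiring a word of care is the boundary behaviour: the case $q = 0$ noted above, and the excluded case $q = p$ (ruled out by the strict inequality, and anyway trivial since $Q_p$ is an induced subgraph of itself). This is the statement that will be used to locate a routing-capable subhypercube inside a larger one in the sections that follow.
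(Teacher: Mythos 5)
Your proposal is correct and matches the paper's own argument: the paper likewise freezes the first $p-q$ coordinates of the binary labels to $0$ and observes that the resulting induced subgraph is isomorphic to $Q_q$ via the Hamming-distance-one adjacency criterion. You invoke Lemma \ref{projection_lemma} explicitly rather than re-sketching it, and you handle the trivial boundary case $q=0$, but the construction is the same.
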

	
		\begin{proof}
			The vertices of $Q_p$ can be labeled by the sequence of binary digits $\bin(v) = (v_j)_{j = 0}^{(p - 1)} = (v_0, v_1, \dots v_{(p - 1)})$ of length $p$. Similarly, the vertices of $Q_q$ can be given by the sequences $\bin(u) = (u_j)_{j = 0}^{(q - 1)} = (u_0, u_1, \dots u_{(q - 1)})$. Now we construct a set of vertices $V_q = \{v: v_i = 0 ~\text{for}~ i = 0, 1, 2, \dots (p - i - 1)\}$. Clearly, $V_q$ has $2^q$ vertices, which is the number of the vertices in $Q_q$. Consider the induced subgraph of $Q_p$ generated by $V_q$ which is isomorphic to $Q_q$. It can be easily shown by considering that the adjacent vertices have hamming distance one. 
		\end{proof}
	
		Recall that, a Hamiltonian path is a path in a graph that visits each vertex only once. A Hamiltonian cycle is a Hamiltonian path which is a cycle. Every hypercube $Q_n$ with $n > 1$ has a Hamiltonian cycle.
		
		\begin{corollary}
			Let the vertices $v$ of a hypergraph $Q_i$ are labeled by the binary sequences $\bin(v) = (v_j)_{j = 0}^{(k - 1)}$ where $k > i$, such that, the Hamming distance between the labels of any two adjacent vertices is one. Then, there are a sequence of non-negative integers $\{m_t\}$ of length $(k - i)$, such that, $0 \leq m_1 < m_2 < \dots < m_{(k - i)} \leq (k - 1)$, and a binary sequence $M = (M_t)_{t = 1}^{(k - i)}$ which determine a set of vertices $V_i = \{v: \bin(v) = (v_j)_{j = 0}^{(k - 1)}, v_{m_t} = M_t$ for $t = 1, 2, \dots (k - i) \} \subset V(Q_k)$. Then the induced subgraph of $Q_k$ generated by $V_i$ has the same vertex labeling as of the hypercube $Q_i$  
		\end{corollary}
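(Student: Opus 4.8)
The statement has two halves: (i) the $2^i$ length-$k$ strings used as vertex labels form a subcube of $\{0,1\}^k$, i.e.\ they are exactly the strings agreeing with a fixed string $M$ on a fixed set of $k-i$ coordinates $\{m_t\}$; and (ii) the induced subgraph of $Q_k$ on that subcube carries the same labelling as $Q_i$. Half (ii) is essentially Lemma~\ref{projection_lemma} once the $\{m_t\}$ and $M$ produced by half (i) are in hand, so the real content is half (i).

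My plan for (i) is to show that the $\{0,1\}^k$-coordinate flipped along an edge of the embedded $Q_i$ depends only on the \emph{direction} (parallel class) of that edge in $Q_i$, not on which edge of that class it is. Fix a base vertex $v_0$ with label $\bin(v_0)=a$. Its $i$ neighbours have labels differing from $a$ in $i$ distinct coordinates: adjacent vertices receive distance-$1$, hence distinct, labels, so two neighbours of $v_0$ obtained by flipping the same coordinate of $a$ would coincide, which is impossible; call this coordinate set $C$, with $|C|=i$. Using the standard proper $i$-edge-colouring of $Q_i$ in which opposite edges of every $4$-cycle share a colour, the crux is: the two opposite colour-$\alpha$ edges of any $4$-cycle flip the same $\{0,1\}^k$-coordinate. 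This follows from the identity that the XOR of the coordinate-flips read around a $4$-cycle is zero, so the four flipped coordinates $p,q,r,s$ pair up as standard basis vectors of $\mathbb{F}_2^k$; since consecutive edges of the cycle meet at a vertex and therefore flip distinct coordinates ($p\neq q$, $q\neq r$, $r\neq s$, $s\neq p$), the only admissible pairing is $p=r$ and $q=s$ — precisely the claim. Propagating this along a chain of $4$-cycles (any two edges of the same colour in $Q_i$ are joined by such a chain inside the corresponding sub-$(i-1)$-cube) produces a well-defined injective map $\phi$ from the $i$ colours to coordinates, with image $C$.

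To finish (i), for any vertex $v$ I would take a geodesic $v_0\to\cdots\to v$ in $Q_i$; it uses each of the directions in which $v_0$ and $v$ differ exactly once, so $\bin(v)=a\oplus\bigoplus_{\alpha}e_{\phi(\alpha)}$, the XOR taken over exactly those directions $\alpha$. As $\phi$ is injective this XOR exhibits no cancellation, so $\bin(v)$ agrees with $a$ on every coordinate outside $C$, and $v\mapsto\bin(v)$ restricted to $C$ is a bijection onto $\{0,1\}^C$; equivalently the label set is exactly $\{x\in\{0,1\}^k:x_{m_t}=a_{m_t},\ t=1,\dots,k-i\}$ with $\{m_t\}=\{0,\dots,k-1\}\setminus C$ and $M_t=a_{m_t}$. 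That is the required $V_i$, and the same computation shows $\bin$ is an isometric embedding, so adjacency in the induced subgraph of $Q_k$ on $V_i$ agrees with adjacency in $Q_i$; together with Lemma~\ref{projection_lemma} this gives that the induced subgraph has the same vertex labelling as $Q_i$.

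The main obstacle is the colour-invariance step — establishing that the flipped coordinate is a function of the $Q_i$-direction alone. A single $4$-cycle is handled by the cycle-sum argument above; the delicate points are pinning down that $p=r$, $q=s$ is the only admissible pairing, which relies on the four inequalities asserting that edges meeting at a common vertex flip distinct coordinates, and setting up the chain of $4$-cycles connecting two arbitrary same-colour edges so that the invariance genuinely propagates. Everything downstream of constructing $\phi$ is routine bookkeeping.
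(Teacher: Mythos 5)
Your proof is correct, and it takes a genuinely different route from the paper's. The paper's proof walks a Hamiltonian cycle of $Q_i$, records the coordinate $q_p$ flipped at each step, and then simply asserts that these $2^i$ flips occupy only $i$ positions of the length-$k$ labels, taking $\{m_t\}$ to be the complementary positions; that assertion is essentially the whole content of the corollary and is not actually justified there. Your argument supplies exactly the missing ingredient: the flipped coordinate is an invariant of the parallel class (direction) of the edge, proved via the four-cycle XOR identity plus distinctness of consecutive flips, and propagated between any two parallel edges through a ladder of four-cycles; injectivity of the resulting map $\phi$ from directions to coordinates then forces the image of the labeling to be a subcube, after which the paper's Lemma \ref{projection_lemma} finishes the job. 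This is the standard Djokovi\'{c}--Winkler-style argument for partial-cube embeddings, and it is both more rigorous and more robust than the Hamiltonian-cycle bookkeeping. Two small repairs: first, your justification that the $i$ neighbours of $v_0$ flip distinct coordinates appeals to ``adjacent vertices receive distinct labels,'' but two neighbours of $v_0$ are not adjacent to each other -- either invoke injectivity of the labeling directly, or derive $\phi(\alpha) \neq \phi(\beta)$ for $\alpha \neq \beta$ from your own four-cycle lemma applied to a face spanned by those two directions. Second, both your proof and the paper's tacitly assume the labeling is injective (distinct vertices get distinct labels); without this the four-cycle lemma's inequalities $p \neq q$, etc., cannot be derived and the conclusion is in fact false, so the hypothesis deserves to be made explicit.
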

	
		\begin{proof}
			Assume that $v = v_1, v_2, \dots v_{2^i}, v_{2^i +1} = v$ is a Hamiltonian cycle starting and ending at a vertex $v \in V(Q_i)$. Let the vertex labels $v_p$ are given by $\bin(v_p) = (v_{pj})_{j = 0}^{(k - 1)}$ for $p = 1, 2, \dots 2^i$. The Hamming distance between $\bin(v_1)$ and $\bin(v_2)$ is $1$. Therefore, there is an index $q_1$ such that $v_{1j} = v_{2j}$ when $j \neq q_1$. Similarly there are indices $q_2, q_3, \dots q_{2^i}$, such that $v_{2j} = v_{3j}$ for $j \neq q_2$; $v_{3j} = v_{4j}$ for $j \neq q_3$, and so on. Note that, the binary sequence $\bin(v)$ has length $k$. Therefore, $q_1, q_2, \dots q_{2^i}$ may not be all distinct. In the sequence $\bin(v)$ the element $v_j$ represents $0$ or $1$. Therefore, to represent $q_1, q_2, \dots q_{2^n}$ we need only $i$ positions in the sequence $\bin(v)$, which are given by $q_1', q_2', \dots q_i'$. Now define the entries of $\{m_t\}$, such that, $0 \leq m_1 < m_2 < \dots < m_{(k - i)} \leq (k - 1)$ and $m_t \notin \{q_1', q_2', \dots q_i'\}$ for any $t$. Construct the sequence $M = \{M_t\}_{t = 1}^{(k - i)}$, such that $M_t = v_{1m_t}$. Note that, for any $t$ we have $M_t = v_{1m_t} = v_{2m_t} = \dots = v_{2^im_t}$, otherwise the condition of unite Hamming distance between the vertex labeling of adjacent vertices will be violated. 
			
			Now, in the hypergraph $Q_k$ we construct the set of vertices $V_i = \{v: v_{m_t} = M_t ~\text{for}~ t = 1, 2, \dots (k - i) \}$ with respect to the sequences $\{m_t\}_{t = 1}^{(k - i)}$ and $M$. Clearly, the induced subgraph $G(V_i)$ of $Q_k$ generated by $V_i$ is a hypercube $Q_i$. The vertex labeling of $Q_i$ considered in the statement and vertex labellings of $G(V_i)$ are equal because of the particular choice of $\{m_t\}$ and $M$.
		\end{proof}

	\section{Quantum walk and perfect state transfer}
		
	To define quantum walk and state transfer on the graphs we associate a basis vector of $\mathbb{C}^{|V|}$ to the individual vertex $v \in V(G)$.  A continuous time quantum walk on a graph $G$ is defined using the Schr\"{o}dinger equation with the $A(G)$ as the Hamiltonian \cite{kendon2011perfect}. If $\ket{\zeta(t)} \in \mathbb{C}^{|V|}$ is a time-dependent quantum state, then the evolution of the quantum walk is given by
\begin{equation}\label{adj:dyn}
			\ket{\zeta(t)} = \exp(-itA(G))\ket{\zeta(0)},
\end{equation} 
where $\ket{\zeta(0)}$ is the initial state vector. The probability for getting the quantum state localised at the vertex $v$ at time $t$ is given by $|\braket{v|\zeta(t)}|^2$. We say $G$ has a perfect state transfer from vertex $u$ to vertex $v$ at time $t_0$ if
		\begin{equation}
			|\braket{v| \exp(-it_0A(G)) | u}| = 1.
		\end{equation}
		This is the same condition for perfect state transfer expressed in graph theoretic fashion \cite{ref:7} and implies equation \ref{eqn:pstgeneral}. When $\{\ket{v}: v \in V(G)\}$ represents the computational basis of $\mathbb{C}^{|V|}$, we say that the graph $G$ allows a perfect state transfer from the vertex $u$ to $v$ if the $(u,v)$-th term of $\exp(-itA(G))$ has magnitude $1$. Besides in \cite{ref:2}, a necessary and sufficient condition is proved for PST. The well-known examples of graphs allowing perfect state transfer over long distances are described below \cite{ref:4}\cite{ref:5}
		\begin{enumerate}
		\label{knownPST}
			\item 
				The complete graph $K_2$ with two vertices allow perfect state transfer between its vertices in time $t_0=\pi/2$ (in the units of energy inverse).
			\item
				The path graph $P_3$ has perfect state transfer between its end vertices  in time $t_0=\pi/\sqrt{2}$ (in the units of energy inverse).
			\item
				The hypercube of any order has perfect state transfer between its antipodal points in the same time $\pi/2$. And any order of Cartesian product of $P_3$ has PST between its antipodal vertices in the same time $\pi/\sqrt{2}$.
		\end{enumerate}	
Above three results hold both for the XY-coupling as we as the Heisenberg interaction. We make use of these results in order to establish a scalable and routing enabled quantum many-body network.

	\section{Perfect State Transfer from all-to-all nodes in two hoppings}
	
		Recall the known result, which will be applicable for proving the next lemma and its corollary. 
		\begin{lemma}
			Let $A = \diag\{B_1, B_2, \dots B_k\}$ be a block diagonal matrix, where $B_i$ are square matrices of arbitrary order for $i = 1, 2, \dots k$, then $\exp(A) =  \diag\{\exp(B_1), \exp(B_2), \dots \exp(B_k)\}$.
		\end{lemma}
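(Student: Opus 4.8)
The plan is to work directly from the power-series definition of the matrix exponential, $\exp(A) = \sum_{m = 0}^{\infty} A^m / m!$, together with the fact that block diagonal matrices with a fixed block-size pattern are closed under multiplication. First I would establish, by a short induction on $m$, that $A^m = \diag\{B_1^m, B_2^m, \dots, B_k^m\}$: the base case $m = 0$ is the identity matrix, which is block diagonal with identity blocks, and the inductive step is just the block-multiplication rule, since multiplying two block diagonal matrices with matching partitions yields a block diagonal matrix whose $i$-th diagonal block is the product of the $i$-th diagonal blocks, all cross terms vanishing because each such term contains a zero block as a factor.

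Next I would form the $N$-th partial sum $S_N = \sum_{m = 0}^{N} A^m / m!$. By the previous step and linearity of the block decomposition, $S_N = \diag\{ \sum_{m=0}^N B_1^m/m!, \dots, \sum_{m=0}^N B_k^m/m! \}$; that is, the $i$-th diagonal block of $S_N$ is exactly the $N$-th partial sum of $\exp(B_i)$, while every off-diagonal block of $S_N$ is zero. The matrix exponential series converges entrywise — indeed absolutely, by comparison with the scalar exponential of the operator norm $\|A\|$ — so $\exp(A) = \lim_{N \to \infty} S_N$ componentwise. Since each entry lying outside the diagonal blocks is constantly zero along the sequence, it is zero in the limit, and the entries inside the $i$-th diagonal block converge to the corresponding entries of $\exp(B_i)$. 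Hence $\exp(A) = \diag\{\exp(B_1), \exp(B_2), \dots, \exp(B_k)\}$.

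There is essentially no serious obstacle; the only two points needing a line of care are (i) checking that the block partition is preserved under products, which is pure bookkeeping with block indices, and (ii) justifying that entrywise convergence of the $S_N$ lets one read off the limit blockwise, which holds because the off-diagonal blocks vanish identically for every $N$ and convergence is componentwise. If one prefers to avoid the series manipulation altogether, an alternative is to invoke uniqueness for the matrix initial value problem $X'(t) = A\,X(t)$, $X(0) = I$: the map $t \mapsto \diag\{\exp(tB_1), \dots, \exp(tB_k)\}$ solves it, differentiating block by block, and since $t \mapsto \exp(tA)$ is the unique solution, the two coincide; evaluating at $t = 1$ gives the claim. I would present the power-series argument as the main proof and note the ODE argument as a brief remark.
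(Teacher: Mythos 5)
Your proof is correct and is the standard argument; the paper simply states this lemma as a recalled known result without proof, and your power-series derivation (with the block-multiplication induction and the entrywise convergence justification) is exactly the canonical way to establish it. Nothing is missing, and the ODE-uniqueness remark is a valid alternative.
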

		
		Let $G = (V(G), E(G))$ and $H = (V(H), E(H))$ be two graphs. The union of $G$ and $H$ is denoted by $G \cup H = (V(G \cup H), E(G \cup H))$ where $V(G \cup H) = V(G) \cup V(H)$ and $E(G \cup H) = E(G) \cup E(H)$ \cite{west2001introduction}.
		
		\begin{lemma}
			Let $G$ be a connected graph with perfect state transfer between two vertices $u$ and $v$ at time $\tau$. Also, let $H$ be a connected graph with perfect state transfer between two vertices $p$ and $q$, at time $\tau$. Then the graph $G \cup H$ has state transfer between $u$ and $v$ as well as $p$ and $q$, at time $\tau$.
		\end{lemma}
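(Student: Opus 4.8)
The plan is to exploit the block structure of the adjacency matrix of a disjoint union together with the block-diagonal exponential lemma stated just above. First I would note the standing assumption that the union is \emph{disjoint}: $V(G)$ and $V(H)$ share no vertices and no edge of $G \cup H$ joins a vertex of $G$ to a vertex of $H$. Ordering the vertices of $G \cup H$ by listing those of $G$ first and those of $H$ afterwards, the adjacency matrix then takes the form
\begin{equation}
  A(G \cup H) = \begin{bmatrix} A(G) & 0 \\ 0 & A(H) \end{bmatrix} = \diag\{A(G), A(H)\}.
\end{equation}

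Next I would apply the preceding lemma on exponentials of block diagonal matrices, with the two blocks $-it A(G)$ and $-it A(H)$, to obtain
\begin{equation}
  \exp(-it A(G \cup H)) = \diag\{\exp(-it A(G)), \exp(-it A(H))\}
\end{equation}
for every $t$, in particular at $t = \tau$. The point is simply that evolution under a disconnected Hamiltonian never mixes the two components.

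Then, since $u$ and $v$ both lie in $V(G)$, the basis vectors $\ket{u}, \ket{v} \in \mathbb{C}^{|V(G \cup H)|}$ are supported entirely in the first block, so
\begin{equation}
  \braket{v | \exp(-i\tau A(G \cup H)) | u} = \braket{v | \exp(-i\tau A(G)) | u},
\end{equation}
and the right-hand side has modulus $1$ by the hypothesis that $G$ has perfect state transfer from $u$ to $v$ at time $\tau$. Running the identical argument in the second block gives $|\braket{q | \exp(-i\tau A(G \cup H)) | p}| = 1$, which is perfect state transfer from $p$ to $q$ at the same time $\tau$. This yields both claims simultaneously.

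I do not anticipate a genuine obstacle: once the block-diagonal lemma is in hand the argument is essentially bookkeeping. The only points that demand care are making the disjointness hypothesis explicit at the outset (without it there would be cross edges and the adjacency matrix would no longer be exactly block diagonal, and the statement could fail) and fixing the vertex ordering that realises the block form so that the restriction-to-a-block identities for the matrix entries are transparent.
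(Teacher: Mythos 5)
Your proposal is correct and follows essentially the same route as the paper: write $A(G \cup H)$ in block-diagonal form, invoke the block-diagonal exponential lemma, and read off the relevant matrix entries from the appropriate block. Your explicit remark about the disjointness of the union is a worthwhile clarification but does not change the substance of the argument.
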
 
	
		\begin{proof}
			Let the graph $G$ has $|V_1|$ vertices. The graph $G$ has perfect state transfer between the vertices $u$ and $v$ at time $\tau$. It indicates $|\braket{u|\exp(-i \tau A(G)) | v}| = 1$, where $\ket{u}$ and $\ket{v}$ are the state state vectors in $\mathbb{C}^{|V_1|}$ corresponding to the vertices $u$ and $v$, respectively. Similarly, if $H$ contains $|V_2|$ vertices, we have $|\braket{p|\exp(-i \tau A(G)) | q}| = 1$, where $\ket{p}$ and $\ket{q}$ are the state vectors in $\mathbb{C}^{|V_2|}$ corresponding to $p$ and $q$ respectively. We know that the graph $G \cup H$ has $|V_1| + |V_2|$ vertices. Corresponding to the vertices $u, v, p$, and $q$ define state vectors in $\mathbb{C}^{|V_1| + |V_2|}$ as $\ket{u'} = \begin{bmatrix} \ket{u} \\ (0)_{|V_2| \times 1}\end{bmatrix}, \ket{v'} = \begin{bmatrix} \ket{v} \\ (0)_{|V_2| \times 1}\end{bmatrix}, \ket{p'} = \begin{bmatrix} (0)_{|V_1| \times 1} \\ \ket{p} \end{bmatrix}$ and $\ket{q'} = \begin{bmatrix} (0)_{|V_1| \times 1} \\ \ket{q} \end{bmatrix}$, respectively. Note that
			\begin{equation}
				\begin{split}
					& A(G \cup H) = \begin{bmatrix} A(G) & (0)_{|V_1| \times |V_2|} \\ (0)_{|V_2| \times |V_1|} & A(H) \end{bmatrix}\\
					\text{or}~ & \exp(-i \tau A(G \cup H)) = \begin{bmatrix} \exp(-i \tau A(G)) & (0)_{|V_1| \times |V_2|} \\ (0)_{|V_2| \times |V_1|} & \exp(-i \tau A(H)) \end{bmatrix}. \\
				\end{split}
			\end{equation} 
			Now, 
			\begin{equation}
				\begin{split}
				|\braket{u' | \exp(-i \tau A(G \cup H)) | v'}| & = \Bigg|\bra{u'} \begin{bmatrix} \exp(-i \tau A(G)) & (0)_{|V_1| \times |V_2|} \\ (0)_{|V_2| \times |V_1|} & \exp(-i \tau A(H)) \end{bmatrix} \begin{bmatrix} \ket{v} \\ (0)_{|V_2| \times 1}\end{bmatrix} \Bigg|\\
				& =\Bigg| \begin{bmatrix} \bra{u} &  (0)_{1 \times |V_2|}\end{bmatrix} \begin{bmatrix} \exp(-i \tau A(G)) \ket{v} \\ (0)_{|V_2| \times 1} \end{bmatrix} \Bigg| \\
				& =| \braket{u | \exp(-i \tau A(G)) | v}| = 1.
				\end{split}
			\end{equation} 
			Similarly, $|\braket{p' | \exp(-i \tau A(G \cup H)) | q'}| = 1$. Therefore the graph $G \cup H$ has state transfer between $u$ and $v$ as well as $p$ and $q$, at time $\tau$.
		\end{proof}
		
		\begin{corollary}\label{state_transfer_with_isolated_vertices} 
			Let $G$ be a graph with state transfer between the vertices $u$ and $v$ at time $\tau$. Then the graph $G \cup \{v_1\}\cup \{v_2\} \dots \cup \{v_k\}$ has perfect state transfer between $u$ and $v$, where $v_1, v_2, \dots v_k$ are isolated vertices at time $\tau$.
		\end{corollary}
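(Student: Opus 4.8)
The plan is to reduce the statement to the previous lemma at essentially no cost. An isolated vertex $v_i$, regarded as a one-vertex graph $\{v_i\}$, has adjacency matrix $[0]$, so $\exp(-i\tau[0]) = [1]$ and hence $|\braket{v_i|\exp(-i\tau[0])|v_i}| = 1$ for every $\tau$; thus $\{v_i\}$ trivially has perfect state transfer from $v_i$ to itself at time $\tau$. More generally, the edgeless graph $H = \{v_1\} \cup \{v_2\} \cup \dots \cup \{v_k\}$ on $k$ vertices has adjacency matrix $0_{k \times k}$, so $\exp(-i\tau A(H)) = I_k$ and $H$ has (trivial) perfect state transfer between $v_1$ and $v_1$ at time $\tau$. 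Applying the previous lemma with this $H$ — $G$ transfers $u \to v$ at time $\tau$ and $H$ transfers $v_1 \to v_1$ at the same time $\tau$ — yields that $G \cup H = G \cup \{v_1\} \cup \dots \cup \{v_k\}$ has state transfer between $u$ and $v$ at time $\tau$, which is exactly the claim.

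Alternatively, I would give the computation directly, reusing the block-diagonal exponentiation lemma just recalled. Ordering the vertices so that those of $G$ come first gives
\begin{equation*}
A(G \cup \{v_1\} \cup \dots \cup \{v_k\}) = \diag\{A(G), 0, 0, \dots, 0\},
\end{equation*}
a block-diagonal matrix with $k$ trailing $1 \times 1$ zero blocks, so that
\begin{equation*}
\exp\bigl(-i\tau A(G \cup \{v_1\} \cup \dots \cup \{v_k\})\bigr) = \diag\{\exp(-i\tau A(G)), 1, 1, \dots, 1\}.
\end{equation*}
Writing $\ket{u'}, \ket{v'} \in \mathbb{C}^{|V(G)| + k}$ for the images of $\ket{u}, \ket{v}$ under the embedding that appends $k$ zeros, the relevant transition amplitude is left unchanged,
\begin{equation*}
|\braket{u'|\exp(-i\tau A(G \cup \{v_1\} \cup \dots \cup \{v_k\}))|v'}| = |\braket{u|\exp(-i\tau A(G))|v}| = 1,
\end{equation*}
which proves the corollary.

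The hard part here is, honestly, nonexistent: there is no genuine obstacle, and the only thing to keep straight is the bookkeeping with the embedding of state vectors into the enlarged Hilbert space (the same $R$-type padding used earlier). If one prefers, the result also follows by a one-line induction on $k$, adjoining a single edgeless vertex at each step and invoking the previous lemma; I would present the direct block-diagonal computation, since it makes the statement self-evident and immediately reuses the exponentiation lemma.
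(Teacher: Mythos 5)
Your proof is correct, and it is essentially the argument the paper intends: the paper's own proof is just ``Proof follows trivially,'' relying on the same block-diagonal exponentiation used in the preceding lemma, which you have simply written out explicitly. The only caveat worth noting is that your first route (invoking the lemma with $H$ the edgeless graph) technically violates the lemma's hypothesis that $H$ be connected, but your direct block-diagonal computation is self-contained and airtight, so nothing is lost.
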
				
	
		\begin{proof}
			Proof follows trivially.
		\end{proof}
	
		\begin{definition}
			Hopping: In our formalism, the hopping on a graph $G = (V(G), E(G)$ is a combination of a Classical (C), a true Quantum hopping (Q), and again a Classical process (C) which consists of the following steps:
			\begin{enumerate}
				\item 
					\textbf{Switch off the selected edges (Classical):} Construct a subgraph $H = (V(H), E(H)$ of $G$ such that $V(H) = V(G)$ and $E(H) \subset E(G)$. Let $S = E(G) - E(H)$.
				\item 
					\textbf{Perfect state transfer (Quantum):} Perform quantum operations on $H$, such that, the quantum state can be transferd from vertex $u$ to vertex $v$ in the graph $H$.
				\item 
					\textbf{Switch on the edges (Classical):} Construct the graph $G$ form the graph $H$ by adding the edges from $S$ in the graph $G$. 
			\end{enumerate}
		\end{definition}
		We call these processes together as CQC-hopping. So, CQC-hopping is a process of a classical switching, followed by true quantum evolution (hopping), followed by another switching of edges. Now we have the following important result.
    \subsection{Perfect State Transfer in hypercubes from all-to-all vertices in single CQC-hopping}
    In this section, we prove that, given any hypercube of any dimension, we enable the perfect state transfer from all-to-all vertices of the graph. This is in contrast to the main result of \cite{ref:15} presented in literature section \ref{sec:hypercubetransfer} where only pair of antipodal vertices are PST enabled. This is accomplished with the edge switching.
		\begin{theorem}\label{perfect_state_transfer_in_hypercube}
			The perfect state transfer is possible between any two vertices of a hypercube $Q_k$ by a single hopping CQC process. 
		\end{theorem}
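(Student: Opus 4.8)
The plan is to reduce the claim, by a single classical switching, to the already-recalled fact that the antipodal vertices of a hypercube admit perfect state transfer. Given the two target vertices $u$ and $v$, write their labels $\bin(u) = (u_j)_{j = 0}^{(k - 1)}$ and $\bin(v) = (v_j)_{j = 0}^{(k - 1)}$, and let $D = \{ j : u_j \neq v_j \}$ be the set of coordinates on which the labels differ; put $d = |D|$, so $d \geq 1$ because $u \neq v$. Let $\{m_1 < m_2 < \dots < m_{(k - d)}\}$ be the complement of $D$ in $\{0, 1, \dots, (k-1)\}$ and set $M_t = u_{m_t}$. Since each $m_t \notin D$ we have $u_{m_t} = v_{m_t} = M_t$, so both $u$ and $v$ lie in the vertex set $V_d = \{w : \bin(w)_{m_t} = M_t \text{ for } t = 1, \dots, (k - d)\}$. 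Applying Lemma \ref{projection_lemma} with $i = d$ to this choice of $\{m_t\}$ and $M$, the induced subgraph of $Q_k$ generated by $V_d$ is isomorphic to $Q_d$; and under the isomorphism $f$ of that lemma, $f(u)$ and $f(v)$ retain exactly the coordinates indexed by $D$, on every one of which $u$ and $v$ disagree, so $f(u)$ and $f(v)$ are antipodal in $Q_d$ (for $d = 1$ this reads $Q_1 = K_2$, with $u,v$ its two vertices).

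Next I would spell out the CQC-hopping. For the classical "switch off" step, take $H$ with $V(H) = V(Q_k)$ and $E(H)$ equal to the edge set of the induced subgraph on $V_d$, i.e. switch off $S = E(Q_k) \setminus E(H)$; then $H$ is the disjoint union of one copy of $Q_d$ supported on $V_d$ (with $u,v$ antipodal) and $2^k - 2^d$ isolated vertices. For the quantum step, evolve for time $t_0 = \pi/2$: by the third of the known perfect-state-transfer results recalled above, $Q_d$ has perfect state transfer between its antipodal points at $t_0 = \pi/2$, and since $A(H)$ is block diagonal with one block $A(Q_d)$ and the remaining blocks $1 \times 1$ zeros, the block-diagonal exponential lemma together with Corollary \ref{state_transfer_with_isolated_vertices} gives $|\braket{v | \exp(-i t_0 A(H)) | u}| = 1$. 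For the final classical step, switch the edges of $S$ back on to recover $Q_k$. The same conclusion holds for the Laplacian (Heisenberg) evolution, since on the $Q_d$ block the Laplacian differs from $-A(Q_d)$ only by the scalar $d\,\mathbb{I}$, contributing merely a global phase, and the transition amplitude has the same modulus.

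There is no deep obstacle: the substance is organisational — choosing which coordinates to freeze (precisely the positions where $u$ and $v$ agree) so that $u$ and $v$ become antipodal in a subcube, and noticing that the edge set of that induced subcube is exactly the set of couplings to be retained. The only points needing a word of care are the extreme cases: $d = k$, where $V_d = V(Q_k)$, nothing is switched off, and the statement is just the plain hypercube result; and $d = 1$, where the subcube is $K_2$ and one invokes the $K_2$ transfer. In every case the transfer time is the universal $t_0 = \pi/2$, independent of $k$ and of the pair $(u,v)$, which is exactly why a single hop suffices.
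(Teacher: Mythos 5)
Your proposal is correct and follows essentially the same route as the paper's proof: freeze the coordinates on which $u$ and $v$ agree, invoke Lemma \ref{projection_lemma} to identify the induced subgraph as a subhypercube in which $u$ and $v$ are antipodal, and combine the known antipodal transfer with Corollary \ref{state_transfer_with_isolated_vertices} for the isolated vertices. Your additional remarks (the explicit handling of the extreme cases $d=1$ and $d=k$, the universal time $t_0=\pi/2$, and the Laplacian variant via the shift $L(Q_d)=d\,\mathbb{I}-A(Q_d)$) are consistent refinements rather than a different argument.
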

		
		\begin{proof}
			Consider any two vertices $u$ and $v$ in the hypercube $Q_k$ which are labeled by the binary sequences $\bin(u) = \{u_j\}_{j = 0}^{(k - 1)}$ and $\bin(v)  = \{v_j\}_{j = 0}^{(k - 1)}$, respectively. If $u$ and $v$ are the antipodal points of the hypercube then there is a state transfer between $u$ and $v$. In this case, we need no edge to switch on or switch off. 
			
			Let $u$ and $v$ are not the antipodal points of the hypercube $Q_k$. Then there is a sequence of indices $\{m_t\}$, such that, $0 \leq m_1 < m_2 < \dots < m_{(k - i)} \leq (k - 1)$ and $u_{m_t} = v_{m_t} = M_t$ holds. Corresponding to the set of indices $\{m_t\}$ and the binary sequence $M$ construct a set of vertices $V_i = \{v: \bin(v) = (v_j)_{j = 0}^{(k - 1)}, v_{m_t} = M_t ~\text{for}~ t = 1, 2, \dots (k - i) \} \subset V(Q_k)$. Denote the induced subgraph of $Q_k$ generated by $V_i$ as $G(V_i)$. Using lemma \ref{projection_lemma}, we find that $G(V_i)$ is isomorphic to the hypercube $Q_i$.
			
			Now we perform a CQC hopping on $Q_k$. First, we switched off all the edges which are not included in the induced subgraph $G(V_i)$ that is in $E(Q_k) - E(G(V_i))$. The new graph can be expressed as $H = Q_i \cup (V(Q_k) - V_i)$, where $(V(Q_k) - V_i)$ denotes the set of isolated vertices. The considered vertices $u$ and $v$ are the antipodal points of $G(V_i)$ or $Q_i$. Hence, there is a perfect state transfer between $u$ and $v$. By corollary \ref{state_transfer_with_isolated_vertices} we find that there is state transfer between the vertices $u$ and $v$ in the graph $H$. After state transfer we switched on the edges in $E(Q_k) - E(G(V_i))$. In this way we can transfer the sate between any two vertices of a hypercube $Q_k$.
		\end{proof}
		
		This theorem suggests the number of edges to switch off and switch on in CQC process. When $u$ and $v$ are two vertices belonging to the hypercube $Q_k$, following this theorem we construct a sub-hypercube $Q_i$ which is essential for the state transfer. Now $Q_k$ has $k 2^{k - 1}$ edges and $Q_i$ has $i 2^{i - 1}$ edges. Therefore, we need to switched off $k 2^{k - 1} - i 2^{i - 1} = 2^{i - 1}(k2^{k - i} - i)$ edges, in the hypercube $Q_k$. We can also present this result as algorithm \ref{algo_state_transfer_in_hypercube}.
		
		\begin{algorithm}
			\caption{Find the sub-hypercube for perfect state transfer between two arbitrarily chosen vertices.}\label{algo_state_transfer_in_hypercube}
			\begin{algorithmic}
				\REQUIRE Vertices $u$ and $v$ of $Q_k$ for state transfer.
				\ENSURE $\bin(u) = \{u_j\}_{j = 0}^{(k - 1)}$ where $u_j \in \{0, 1\}$ for all $u \in V(Q_k)$.
				\IF{$u_j \neq v_j$ for all $j = 0, 1, \dots (k - 1)$}
					\STATE Perfect state transfer between $u$ and $v$.
					\RETURN $Q_k$
				\ELSE
					\STATE Count = 0 
					\STATE m = [] \COMMENT{List of indices $m_t$ such that $u_{m_t} = v_{m_t}$.}
					\STATE M = [] \COMMENT{List of common indices in $\bin(u)$ and $\bin(v)$.}
					\FOR{$j \gets 0$ to $(k - 1)$} 
						\IF{$u_j = v_j$}
							\STATE Count = Count + 1
							\STATE m.insert(j)
							\STATE M.insert($u_j$)
						\ENDIF 
					\ENDFOR 
					\STATE Construct $V(Q_i) = \{v: \bin(v) = (v_j)_{j = 0}^{(Count - 1)}\}$ \COMMENT{Here the vertices are labelled by the binary sequences of length $i$.}
					\STATE $G = (V(G), E(G))$.
					\FOR{$v \in V(Q_i)$}
						\STATE $v: \bin(v) = (v'_j)_{j = 0}^{(k - 1)}$ \COMMENT{We want to label the vertices by the binary sequences of length $k$.}
						\FOR{$j \gets 0$ to $(k - 1)$}
							\STATE $C = 0$
							\IF{$j = m_t$}
								\STATE $v'_{m_t} = M_t$
								\STATE $C = C + 1$
							\ELSE 
								\STATE $v'_j = v_{j + C}$
							\ENDIF
						\ENDFOR  
					\ENDFOR 
					\STATE $V(G) = V(Q_i)$.
					\STATE $E(G)$ = Set of edges in the induced subgraph of $Q_k$ generated by $V(Q_i)$ 
					\STATE Perfect state transfer between $u$ and $v$.
					\RETURN $G$
				\ENDIF
			\end{algorithmic}
		\end{algorithm}
	
		\begin{example}
			Consider two arbitrary verticees $u$ and $v$ in $Q_8$ where $\bin(u) = 00101101$ and $\bin(v) = 10011000$. Note that, $u_j = v_j$ for $j = 1, 4$ and $6$ that is $m_1 = 1, m_2 = 4$, and $m_3 = 6$, as well as $M_1 = 0, M_2 = 1$ and $M_3 = 0$. Following lemma \ref{projection_lemma} we construct a set of vertices $V_5$ and an induced subgraph $H(V_5)$ of $Q_8$ which is isomorphic to $Q_5$. The nodes $u$ and $v$ correspond two antipodal points in $H(V_5)$. To make state transfer between $u$ and $v$ we switched off and switched on all the edges in $E(Q_8) - E(H(V_5))$.
		\end{example} 


	\subsection{Constructing large graphs supporting state transfer between arbitrary nodes in two CQC-hoppings}
		
		Given any natural number $n$ there is a natural number $k$, such that, $2^k \leq n < 2^{(k + 1)}$ and $n$ has a $(k + 1)$-term binary representation. For any natural number $n$ there is a graph $G$ which allows perfect state transfer between any two nodes in two CQC-hoppings. The graph $G$ can be constructed as follows.
		
		\begin{procedure}\label{construction_of_two_CQC_graph_for_any_given_node}
			\textbf{Constructing graphs allowing perfect state transfer:} Let the natural number $n$ can be written as $n = a_{p_0}2^{(k - p_0)} + a_{p_1} 2^{k - p_1} + a_{p_2} 2^{k - p_2} + \dots$, where $p_0 = 0$ and $a_{p_1} = a_{p_2} = \dots =1$.
			\begin{enumerate}
				\item 
					Use $2^k$ vertices to construct $Q_k$ and label its nodes $v$ with a $(k + 1)$-term sequence $v = (v_j)_{j = 0}^k$ where $v_j \in \{0, 1\}$ and $v_0 = 0$.
				\item 
					Now fix $v_0 = 1$ for all remaining constructions. 
				\item 
					For every $p_i$ with $0 < p_1 < p_2 < \dots $ construct a hypercube $Q_{k - p_i}$ with $2^{k - p_i}$ nodes. First label the vertices with a $(k - p_i)$-term sequence $u = (u_j)_{j = 0}^{k - p_i - 1}$ where $u_j \in \{0, 1\}$. A vertex can be included in at most one hypercube.
				\item
					Re-label the vertices of $Q_{k - p_i}$ with a $(k + 1)$-term sequence $v = (v_j)_{j = 0}^k$ where $v_j \in \{0, 1\}$ with $v_0 = 1, v_j = 0$ for $j = 1, 2, \dots (p_i + 1)$ and $v_{p_i + 1 + j} = u_j$ for $j = 0, 1, \dots (k - p_i - 1)$. 
				\item 
					Join pairs of nodes $u$ and $v$ with an edge where $u \in V(Q_{k - p_i})$ and $v \in V(Q_k)$ as well as hamming distance between their labeling is $1$.
				\item 
					Fix $v_{p_i + 1} = 1$ for all remaining constructions.
				\item 
					Similarly, relabel all other hypercubes and create edges between them.
			\end{enumerate}
			Note that, a single node is a hypercube of dimension $0$.
		\end{procedure}		
	
		\begin{lemma}
			Let $n = 2^k + a_1 2^{k - 1} + a_2 2^{k - 2} + \dots a_k 2^0$ where $a_1, a_2, \dots a_k \in \{0, 1\}$ and there are integers $p_1 < p_2 < \dots $. such that $a_{p_1} = a_{p_2} = \dots = 1$. Then, the graph $G$ with $n$ vertices which is constructed by following the procedure \ref{construction_of_two_CQC_graph_for_any_given_node} has $T(n)$ edges, where
			$T(n) = \sum_{p_i} \left[ (k - p_i)2^{(k - p_i - 1)} + i \times 2^{(k - p_i)} \right].$
		\end{lemma}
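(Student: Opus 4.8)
The procedure builds $G$ in stages, one stage per index $p_i$ occurring in the binary expansion of $n$, with $p_0 = 0$. At stage $i$ it introduces a fresh set $B_i$ of $2^{(k - p_i)}$ vertices whose induced subgraph is, by the labelling prescribed in Steps 3--4, a copy of the hypercube $Q_{k - p_i}$; then, for $i \geq 1$, Steps 5 and 7 add an edge between a vertex of $B_i$ and a vertex of an earlier block whenever their labels are at Hamming distance $1$. Since Step 3 forbids reusing a vertex, the $B_i$ partition $V(G)$, so every edge of $G$ is either internal to exactly one $B_i$ or a connecting edge created at exactly one stage $i \geq 1$ (the later of the two blocks it joins). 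The plan is therefore to write $T(n) = \sum_i e_i^{\mathrm{int}} + \sum_{i \geq 1} e_i^{\mathrm{conn}}$ and evaluate the two sums separately.

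For the internal edges, since $B_i \cong Q_{k - p_i}$ and the hypercube $Q_m$ has $m 2^{m - 1}$ edges (as used in the discussion following Theorem~\ref{perfect_state_transfer_in_hypercube}), one has $e_i^{\mathrm{int}} = (k - p_i) 2^{(k - p_i - 1)}$, which is $0$ when $k - p_i = 0$ (a single vertex has no edge).

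The substantive step is the claim $e_i^{\mathrm{conn}} = i \cdot 2^{(k - p_i)}$. Reading off the relabelings and the ``fix this coordinate'' instructions in Steps 2, 4 and 6, every vertex of $B_i$ carries a $(k+1)$-bit label whose first $p_i + 1$ coordinates form a fixed pattern $\pi_i$ depending only on $i$, while the remaining $k - p_i$ coordinates range freely over $\{0,1\}^{k - p_i}$. I would next verify that these patterns are arranged so that, for every $j < i$, the restriction of $\pi_i$ to the coordinates $[0, p_j]$ differs from the corresponding restriction of $\pi_j$ in exactly one coordinate $c(i,j) \leq p_j$, and that the coordinates in $(p_j, p_i]$ (fixed for $B_i$ but free for $B_j$) create no obstruction because every coordinate in $(p_i, k]$ is free in $B_j$ as well. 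Granting this, for a vertex $u \in B_i$ and an earlier block $B_j$, flipping the single coordinate $c(i,j)$ of the label of $u$ produces a label lying in $B_j$ and at Hamming distance $1$ from $u$; moreover it is the \emph{only} Hamming-distance-$1$ neighbour of $u$ in $B_j$, because flipping a free coordinate keeps $u$ inside $B_i$ and flipping any other fixed coordinate leaves $B_j$. Hence each of the $2^{k - p_i}$ vertices of $B_i$ has exactly one neighbour in each of the $i$ earlier blocks, so $e_i^{\mathrm{conn}} = i \cdot 2^{(k - p_i)}$; in particular $e_0^{\mathrm{conn}} = 0$, matching Step 1.

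Summing index by index then yields
\[
T(n) = \sum_{p_i} \left[ (k - p_i) 2^{(k - p_i - 1)} + i \cdot 2^{(k - p_i)} \right],
\]
the asserted formula. The main obstacle is exactly the connecting-edge count: one has to track, through the successive relabelings and the coordinate-freezing steps, precisely which coordinates of the labels of each $B_i$ are frozen and to which values, and check that the frozen prefixes of any two blocks differ in precisely one coordinate over their common range. This is cleanly handled by induction on $i$, but it is bookkeeping rather than anything conceptually deep.
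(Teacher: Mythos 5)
Your proposal is correct and follows essentially the same route as the paper: decompose $T(n)$ into the internal edges of each hypercube block, $(k-p_i)2^{(k-p_i-1)}$, plus the connecting edges, obtained by showing each of the $2^{(k-p_i)}$ vertices of the $i$-th block has exactly one Hamming-distance-$1$ neighbour in each of the $i$ earlier blocks. Your treatment of the frozen-prefix bookkeeping is somewhat more explicit than the paper's, which argues the cases $p_1$, $p_2$ and then says ``extending we get,'' but the decomposition and the counting are identical.
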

		
		\begin{proof}
			Recall that any hypercube $Q_k$ has $k 2^{k - 1}$ edges. For every $p_i$, with $a_{p_0} = a_{p_1} = a_{p_2} = \dots = 1$ we construct a hypercube $Q_{k - p_i}$. Therefore we add $\sum_{a_{p_i} \neq 0} (k - p_i)2^{(k - p_i - 1)}$ edges in the graph $G$.
			
			Consider a vertex $u \in Q_{(k - p_1)}$. As a vertex of $G$, the labeling of $u$ is given by the sequence $\bin(u) = (u_j)_{j = 0}^k$. Note that $u_0 = 1$. Therefore, there is exactly one vertex $v = (v_j)_{j = 0}^k$ in $Q_{2^k}$ such that, $v_0 = 0$ and $v_j = u_j$ for $j = 1, 2, \dots k$. Clearly the Hamming distance between $\bin(u)$ and $\bin(v)$ is $1$. Hence, $u$ and $v$ are adjacent. Therefore every vertex of $Q_{(k - p_1)}$ is adjacent to only one vertex of $Q_k$. Therefore, there are $2^{(k - p_1)}$ edges $(u, v) \in E(G)$ such that $u \in V(Q_k)$, and $v \in V(Q_{(k - p_1)})$.
			
			In a similar fashion, we can justify that there are $2^{k - p_2}$ edges $(u, v) \in E(G)$ such that $u \in V(Q_k)$, and $v \in V(Q_{(k - p_2)})$. In addition, there are another $2^{k - p_2}$ edges $(u, v) \in E(G)$ such that $u \in V(Q_{(k - p_1)})$, and $v \in V(Q_{(k - p_2)})$. Adding we get there are $2 \times 2^{k - p_2}$ edges whose one end vertex in $V(Q_{(k - p_2)})$ and another end vertex is either in $V(Q_k)$ or in $V(Q_{(k - p_1)})$.
			
			Extending we get, there are $3 \times 2^{k - p_3}$ edges whose one end vertex in $V(Q_{(k - p3)})$ and another end vertex is either in $V(Q_k)$ or in $V(Q_{(k - p_1)})$ or in $V(Q_{(k - p_2)})$. Collecting all the edges we get the total number of edges, which is mentioned in the statement.
		\end{proof}
	
		\begin{theorem}
			Let $u$ and $v$ be two vertices in the graph $G$ of order $n$ which is constructed following the procedure \ref{construction_of_two_CQC_graph_for_any_given_node}. Then there is a perfect state transfer between $u$ and $v$ by at most two CQC-hoppings. 
		\end{theorem}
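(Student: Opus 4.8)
The plan is to recognise the graph $G$ produced by Procedure~\ref{construction_of_two_CQC_graph_for_any_given_node} as an induced subgraph of a single hypercube $Q_{k+1}$ with a transparent sub-cube decomposition, and then to route between any two vertices using at most two invocations of Theorem~\ref{perfect_state_transfer_in_hypercube}. First I would identify every vertex of $G$ with its $(k+1)$-bit label and observe that $G$ is exactly the subgraph of $Q_{k+1}$ induced on the initial segment $S=\{0,1,\dots,n-1\}$: inside each constituent hypercube $B_0=Q_k,\ B_1=Q_{k-p_1},\ B_2=Q_{k-p_2},\dots$ the edges are precisely the Hamming-distance-one pairs, Steps~5 and~7 of the procedure add an inter-block edge exactly when two labels differ in a single bit, and no edge is ever created between labels at Hamming distance greater than one.

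Next I would record the sub-cube decomposition of $S$. Writing $n=2^{k}+2^{k-p_1}+\dots+2^{k-p_r}$ and putting $p_0:=0$, the initial segment splits as a disjoint union of coordinate sub-cubes $S=B_0\sqcup B_1\sqcup\dots\sqcup B_r$, where $B_i$ consists of the labels that agree with the binary representation of $n$ on its first $p_i$ positions and carry a $0$ in the next position where $n$ carries a $1$; thus each $B_i$ is an induced sub-hypercube of $G$ of dimension $k-p_i$, with $\dim B_0>\dim B_1>\dots>\dim B_r$. The key structural lemma I would then isolate is a gluing statement: for any $i<j$ there is a sub-cube $B_i^{(j)}\subseteq B_i$ of dimension $\dim B_j$ such that every vertex of $B_j$ has a unique Hamming-one neighbour in $B_i^{(j)}$ (obtained by copying the free coordinates and overwriting the low-order bits by the fixed pattern of $B_i$, the single bit flip occurring exactly at the position distinguishing the prefixes of $B_i$ and $B_j$), so that $\hat Q:=B_i^{(j)}\cup B_j$ together with the matching edges supplied by Step~7 is an induced sub-hypercube of $G$ of dimension $\dim B_j+1$.

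With these two facts in hand the theorem follows by a short case analysis. Given $u\in B_a$ and $v\in B_b$, assume without loss of generality $\dim B_a\ge\dim B_b$, i.e.\ $a\le b$. If $a=b$, then $u$ and $v$ lie in the single induced hypercube $B_a$ of $G$; a single CQC-hopping suffices — switch off every edge of $G$ outside the sub-hypercube of $B_a$ in which $u$ and $v$ are antipodal (such a sub-hypercube exists by Lemma~\ref{projection_lemma}), transfer by hypercube perfect state transfer together with Corollary~\ref{state_transfer_with_isolated_vertices}, then switch all edges back on. This is a legitimate CQC-hopping because the definition only requires $E(H)\subset E(G)$, and it is essentially the proof of Theorem~\ref{perfect_state_transfer_in_hypercube} carried out inside the larger graph. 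If $a<b$, apply the gluing lemma with $(i,j)=(a,b)$ to obtain $B_a^{(b)}$ and $\hat Q=B_a^{(b)}\cup B_b$. When $u\in B_a^{(b)}$, both $u$ and $v$ already lie in the induced hypercube $\hat Q$ and one CQC-hopping finishes the job as above. Otherwise pick any $w\in B_a^{(b)}$: the first CQC-hopping transfers $u\to w$ inside the hypercube $B_a$, and the second transfers $w\to v$ inside the hypercube $\hat Q$, completing perfect state transfer between $u$ and $v$ in exactly two CQC-hoppings. In every case at most two CQC-hoppings are used.

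The step I expect to be the main obstacle is the coordinate bookkeeping in the gluing lemma — verifying that the two relevant sub-cubes of $S$ sit at Hamming distance exactly one, so that their union is again a sub-cube of dimension one higher, and hence induces a genuine hypercube in $G$ on which Theorem~\ref{perfect_state_transfer_in_hypercube} applies. This is delicate only because the labelling prescribed in Procedure~\ref{construction_of_two_CQC_graph_for_any_given_node} carries a few off-by-one ambiguities that must be resolved consistently; once the labels are pinned down it reduces to the elementary fact that consecutive blocks of an initial segment of the integers differ by flipping a single leading bit. Everything downstream of the gluing lemma is a routine combination of Theorem~\ref{perfect_state_transfer_in_hypercube}, Lemma~\ref{projection_lemma} and Corollary~\ref{state_transfer_with_isolated_vertices}.
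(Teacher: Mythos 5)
Your proof is correct and follows essentially the same route as the paper: decompose $G$ into its constituent hypercubes, dispose of the same-block case with one CQC-hop, and otherwise route through an intermediate vertex $w$ of the larger block that sits at Hamming distance one from the smaller block, using one hop inside the larger hypercube and one hop across the junction. The only cosmetic differences are that the paper performs the cross-block hop over the single edge joining $v$ to its match $w$ viewed as a $K_2$ whereas you enlarge it to the glued $(\dim B_b+1)$-dimensional hypercube $\hat Q$, and that your gluing lemma spells out explicitly the Hamming-distance-one matching that the paper merely asserts via Corollary~\ref{induced_hypercube}.
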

	
		\begin{proof}
			According to the construction procedure there are hypercubes containing the vertices $u$ and $v$. If $u$ and $v$ belongs to same hypercube then by theorem \ref{perfect_state_transfer_in_hypercube} state transfer from $u$ to $v$ is possible by one CQC-hopping only.
			
			Let $u$ and $v$ belong two two different hypercubes. The procedure \ref{construction_of_two_CQC_graph_for_any_given_node} indicates that no two hypercubes can be of equal size. For simplicity let $u \in Q_p$ and $v \in Q_q$ where $p > q$. Now corollary \ref{induced_hypercube} suggests that $Q_p$ has an induced subgraph isomorphic to $Q_q$. Therefore, there is a vertex $w$ in $Q_p$ which is equivalent to $v$ in $Q_q$.
			
			The vertices $u$ and $v$ are represented by binary sequences of length $(k + 1)$. Note that, the binary sequences representing $v$ and $w$ have Hamming distance $1$. Therefore, there is an edge $(v, w)$ in the graph $G$.
			
			Now the first CQC-hop allows state transfer from $v$ to $w$. The second CQC-hop allows state transfer from $w$ to $u$. Hence, the proof.  
		\end{proof}
		
		This theorem suggests the number of edges to switch off and switch on in CQC-hopping process. Let $u$ and $v$ belong to two different hypercubes. For first CQC we need only one edge. Therefore, we need to switched off $T(n) - 1$ edges. For the second state transfer we need the hypercube $Q_q$. Recall that number of edges in $Q_q$ is $q 2^{q - 1}$. We need to switched on only these edges. When the state transfer is done (state has been recovered at the desired vertex), we shall switched on all the edges in $G$ as desired.
			
		Construction of this graph can be seen as a growing network. This tool allows us for constructing a graph $G_1$ with an additional node from a given graph $G$ allowing state transfer in two-CQCs. The motivation for such an argument is due to the fact that experimentally only a few number of qubits are added with every technological improvement in quantum technologies, therefore, suggesting the need of an architecture which allows the growth as addition of one qubit each time.

    \subsection{Perfect State Transfer in growing network supporting all-to-all transfer in two CQC-hoppings}
    In the following procedure, we propose the growing network architecture where each new qubit can be added to the existing network. It suffices to start with a $2^k$-vertex hypercube and grow it to the next hypercube of $2^{k+1}$ vertices. 
	\begin{procedure}
			\label{procedure2}
			\textbf{Growing network method}
			Let $Q_k$ be a hypercube with $2^k$ vertices which are labeled by all possible binary sequences of length $k$. Follow the steps below to construct a graph $G$ with number of nodes $< 2^{k + 1}$ allowing state transfer with two CQC-hoppings. Note that, we can add at most $(2^k - 1)$ new vertices with $Q_k$ to construct the graph $G$. 
			\begin{enumerate}
				\item 
					Relabel all the vertices of $Q_k$ with binary sequences of length $(k + 1)$ such that the left-most element of the sequence is $0$ and the others are equal to the old labeling of $Q_k$.
				\item 
					The vertex labeling of the $l$-th new vertex will be given by a $(k+1)$ term binary sequence initiated by $1$. The remaining $k$-terms are the $k$-term binary representation of $l$. 
				\item 
					The new vertex will be adjacent to all other vertices with Hamming distance $1$. 
			\end{enumerate}   
		\end{procedure}
		Each application of this procedure will increase the number of vertices in the initial graph $Q_k$ for any integer $k > 0$. In each step the new graph $G$ will allow perfect state transfer in two CQCs.

		\textbf{Equivalent definition:} The above procedure can be see in terms of binary addition (denote it as $\oplus$) where the one bit is carried to the left. For any $n$ in the range given above, the first node is $|10...00\rangle$, which is $(2^k+1)$th node. For all subsequent nodes just add a 1 via binary addition to the existing labeling of the last added node with any possible carry to the left. Hence, the next node will be $|10...00\oplus 1\rangle=|10...01\rangle$. The next node will be $|10...01\oplus1\rangle=|10...10\rangle$ where one carry is taken over to the second index from the right and so on until the desired vertex. The second last node will be $|11...110\rangle$ and adding 1 further to the first index will give give the last node $|111...11\rangle$ which completes the $2^{k+1}$-vertex hypercube. For further addition of nodes simply append one more index and start the same set of steps. Therefore, it can easily be seen that any $(2^k+1+m)$th node is simply $|10...00\bigoplus^{m-1}_{j=1}1\rangle$, where $2^k+m=n$, and this determines the full connectivity of any new node with the existing graph, which defines a simple graph upon addition of every new node.

\section{Edge minimization vs. hopping minimization}
In our CQC-hopping scheme we have given the optimal number of hoppings required to transfer a state from any vertex to any other vertex in the graph for arbitrary number of total vertices. At least one of the two CQC-hoppings for a dual hop is a hypercube jump, where the involved hypercube maybe of a large dimension. This state transfer will be ensured by ensuring the correct set of edges where the total number of edges are indeed large. If the cost of creating a large number of edges is high, experimentally, then it may pose a new problem to the architecture. It can be asked whether there exists another architecture which minimizes the number of edges instead of CQC-hoppings. Ideally, the least number of edges are there for the case of a path graph P$_{\text{n}}$ for $n$ vertices. But then the question of finding a network is trivial. Therefore, we need a trade-off between both of these extremes because creating and manipulating a large number of edges will also leave the system prone to more error in state transfer resulting in low fidelity.

A trade-off between both of these extremes can be can be discussed for a neighborhood zone for two given vertices. The possible moves allowed under two CQC-hopping are: one-link jump $(t_0=\pi/2)$ and the two-link jump $(t_0=\pi/\sqrt{2})$. So, the maximum distance that can be reached using these possible moves for two CQC-hoppings is the four corresponding to all the fourth nearest vertices of any chosen vertex. The farthest fourth nearest adjacent vertices can be reached in time $\sqrt{2}\pi$ using a total of four edges only. In contrast, had these two vertices separated by a distance of 4 were the antipodal vertices, we would have to create $4\times 2^3=32$ edges in total. However, this would perform the task in single CQC-hopping in time $\pi/2$. Depending upon experiment to experiment, various constraints maybe posed on the cost of edges and the cost of CQC-hoppings and state transfer time. For our physical implementation proposed in chapter \ref{chap:phyreal}, the edges can be easily constructed and destructed and we follow our scheme for optimal number of hoppings and transfer time.

We can classify the set of vertices that can be reached from a vertex $|x\rangle$ by partition into the following sets. These sets also do not contain any vertices that are missing due to an incomplete hypercube, we assume all these vertices to not be in these sets by construction. For any arbitrary $2^k\leq n <2^{k+1}$, we have	
	\begin{itemize}
	    \item \textit{Set of all adjacent vertices:} Any vertex has some $i$th adjacent vertex as the $i$th flipped index. So, the set of all these adjacent nodes is $\{|\alpha_i^x\rangle \equiv |x_0,x_1,...,x_i\oplus 2,...,x_n\rangle\}$ $\forall i\in \{0,1,...n\}$. Total number of such distinct nodes at most is $k+1$. Each of these nodes can be reached by one unique path.
		\item \textit{Set of all next-adjacent vertices:} Set of next-adjacent nodes can be reached in one CQC-hopping by the two-link jump in time $\pi/\sqrt{2}$. All available vertices accessible via $|\overline{\alpha^x_i}\rangle$ except trivially $|x\rangle$ itself. Hence this is the set $\{ |\beta^x_{ij}\rangle \equiv |x_0,x_1,...,x_j\oplus 2,...,x_i\oplus 2,...,x_k\rangle \}$ $\forall i,j\in \{0,1,...,k\}$ $\&$ $i\neq j$. Total number of such distinct nodes is $(k+1)k/2$ at most. Each of these vertices can be reached by two unique paths if no hypercube jump is involved and all adjacent vertices are available relative to a complete $2^{k+1}-$vertex hypercube.
		\item \textit{Set of all next-to-next adjacent vertices:} These can be reached by one-link jump followed by a two-link jump or vice versa with time $\pi/2+\pi/\sqrt{2}$. Using the previous argument it is simply the set defined as $\{ |\gamma^x_{ijp}\rangle \equiv |x_0,x_1,...,x_p\oplus 2,...,x_j\oplus 2,...,x_i\oplus 2,...,x_k\rangle \}$ $\forall i,j,p\in \{0,1,...,k\}$ $\&$ $i\neq j\neq p$. Total number of such distinct nodes is at most $(k+1)k(k-1)/6$. Each of these nodes can be reached by three unique paths if no hyperjump is involved and all the adjacent vertices are available relative to the complete hypercube.
		\item \textit{Set of all next-to-next-to-next adjacent vertices:} These can be reached by two-link jump followed by another two-link jump with time $\sqrt{2}\pi$. It is the set defined as $\{ |\delta^x_{ijpq}\rangle \equiv |x_0,x_1,...,x_q \oplus 2,...,x_p\oplus 2,...,x_j\oplus 2,...,x_i\oplus 2,...,x_k\rangle \}$ $\forall i,j,p,q\in \{0,1,...,k\}$ $\&$ $i\neq j\neq p \neq q$.  Total number of such distinct nodes is at most $(k+1)k(k-1)(k-2)/24$. Each of these nodes can be reached by four unique paths if no hyperjump is involved if all vertices are available.
		\end{itemize}
This lets us minimize the number of edges for small distance of 4 between any two given vertices.
		
Another argument from the physical point of view is for the decoherence error. During the quantum evolution, the system is most prone to error. In contrast, the edge deletion and creation are purely classical phenomenon and do not introduce any direct error quantum mechanically into the system. Minimising this evolution time by minimising the number of CQC-hoppings will be more robust approach for the implementation for state transfer.

\section{Switching and global evolution of the graph}
\label{sec:globalevo}
For any pair of two chosen nodes for arbitrary number of nodes $n\neq 2^k$ for the graph, we have two subgraphs corresponding to the action-space of two adjacency matrices $A_1$ and $A_2$ that describe the first and second hop respectively over the entire graph as described before. We perform the first CQC-hop from some vertex $|u\rangle$ to a vertex $|v\rangle$ keeping all edges switched off that belong to the second hop, followed by another CQC-hop from $|v\rangle$ to $|w\rangle$ while all edges corresponding to first hop are switched off. So, both the adjacency matrices have one common element of action which is the intermediate vertex $|v\rangle$ such that $A_1|v\rangle\neq 0\quad \& \quad A_2|v\rangle\neq0$. This means we act with $\exp(-iA_1t_0)$ followed by $\exp(-iA_2t_0)$ over the entire graph as continuous quantum walk. $A_1$ and $A_2$ corresponds to the first and second CQC-hop adjacency matrices respectively, resulting from the switching. Is it possible to capture these two unitary evolution as one single unitary evolution? In the light of Baker–Campbell–Hausdorff (BCH) formula, let us look at the commutator $[A_1,A_2]$ for all possible vertices on this graph. We may classify all nodes in three sets: All nodes $|u'\rangle$ which can be reached by the repeated action of $A_1$ (except $|v\rangle$), the intermediate node $|v\rangle$ itself and all the nodes $|w'\rangle$ that can be reached by the repeated action of $A_2$ (except $|v\rangle$). Any arbitrary chosen initial graph state linear combination of all of these elements (which act as a basis for the action space of the two adjacency matrices), at most. Let us evaluate the action of the commutator on a general state (un-normalised)
\begin{equation}
    [A_1,A_2](|u\rangle +|v\rangle +|w\rangle)=(A_1A_2-A_2A_1)(|u\rangle +|v\rangle +|w\rangle
\end{equation}
\begin{equation}
    =\sum_{\{u'\},\{w'\}}\left( \alpha_{u'}|u'\rangle +\alpha_{w'}|w'\rangle \right) +\alpha_v|v\rangle
\end{equation}
which is not necessarily zero, except for a special initial state. Here, $\alpha_\lambda \in \mathbb{Z}$ are the coefficients. Therefore, $A_1$ and $A_2$ do not commute in general and hence we cannot write down a general evolution $\exp (-iA_1t)\cdot\exp( -iA_2t)=\exp \left[ -i(A_1+A_2)t\right]$ and it implies they have to be treated strictly as non-simultaneous unitary evolution in time which is also expected otherwise as it is a strictly two step task, which does not allow the leakage of state out of the action space of one adjacency matrix.

Generally, for the part of our protocol which requires a dual CQC-hop when we are away from a perfect hypercube, let us define the overall adjacency matrix as a step-function in time as follows:
\begin{equation}
\label{eqn:adjtwohop}
    A(G(t))=
    \begin{cases}
    A_1, \quad \text{for } 0< t\leq t_0,\\
    A_2, \quad \text{for } t_0< t\leq 2t_0
    \end{cases}
\end{equation}
Hence, for any given pair of nodes we quickly identify $A_1$ and $A_2$ and perform the PST from all-to-all nodes in at most time $2t_0$. At time $t_0$ we perform the switching and the adjacency matrix of the graph is switched from $A_1$ to $A_2$. In the ideal scenario, it is assumed to take no time which is realistically not the case. For those cases and to quantify the error introduced in the evolution due to switching by approximating the step function by some function close to it and see how much error it introduces into the fidelity. Switching will introduce error in the quantum system as it will be probed by a classical system to switch the couplings which is close to a classical task.

\section{Quantum computation assisted with perfect state transfer}
The original idea \cite{ref:1} of perfect state transfer was to transport a state from one location to another using a channel like mechanism which is non-computational for quantum computing usage. For example, a spin lattice channel fabricated between two quantum processors can be used to transport arbitrary state from one processor to another for large scale quantum architecture. So, perfect state transfer has been seen as a means of communication between two or more quantum computational units. This idea has been experimentally demonstrated in \cite{ref:47}. The implication of such as idea involves transferring state from one specific qubit from the first processor to a specific qubit in the second processor. Within each processing unit, conventional quantum SWAP gates have to be used transport the state from one qubit to another (given that we do not care about the state of the other qubit). In fact, a large amount of cascaded SWAP operations have to be used in order to accomplish this task within each processor, if the qubits are distant, assuming one SWAP gate is one operation quantum mechanically. 

The possibility of application of two-qubit gates between for given qubits depends upon whether these two qubits are allowed to interact, that is, that are coupled by some coupling Hamiltonian which defines an exchange interaction. Two qubits which are not directly coupled to one another cannot have two-qubit interactions directly. This means we cannot have two qubit gates, such as SWAP or CNOT, acting between these qubits. This connectivity on the network of qubits is what we impose by defining a graph scheme. Hence, in a quantum circuit, we cannot have two-qubit quantum gates between the qubits which are physically distant on the real processor. In theoretical quantum circuits, we assume that two qubit interactions are possible for every pair of qubits, but the disconnectedness imposes even more number of gates to be used for the same task. When two qubits are distant and cannot interact directly, we may have cascaded two qubit gates. For example, on a linear chain we are allowed to perform SWAP operation for every $(i,i+1)$ qubit pair. To transfer a state $|1\rangle$ from one end to another we have to apply $(l-1)$ SWAP gates, where $l$ is the length of the chain, and all other qubits are intialized to $|0\rangle$. Similarly, we have to use a large number of cascaded SWAP gates for to perform state transfer to distant qubits. We deduce that the minimum number of SWAP gates required for the state transfer is equal to the minimum edge distance between two given vertices.

In this section we aim to present perfect state transfer as a quantum operation in the conventional quantum circuit model. In chapter \ref{chap:phyreal}, we present a physical model which serves as high fidelity quantum processor along with the ability of perfect state transfer all over the computational qubits. This model serves as the physical implementation of our scalable architecture proposed in this section. Having realised in chapter \ref{chap:phyreal} that same quantum computing hardware can be used for perfect state transfer, we look at perfect state transfer from a another perspective of a quantum gate. We propose the idea of quantum computing assisted with perfect state transfer on graphs.

We can take the advantage of perfect state transfer as a SWAP gate operation. Consider a quantum circuit with qubit $u$ in some state $|\psi\rangle$ while all other qubits are initialized to zero. Notice the action of a CQC-hop state transfer from a given qubit at site $u$
\begin{equation}
    e^{-iA_1t_0}|u\rangle_\psi=e^{i\phi}|v\rangle \xrightarrow{\text{Phase correction}(\varphi_1)}|v\rangle_\psi
\end{equation}
for some arbitrary phase $\phi$, which can be corrected post the state transfer. Furthermore, if we are not at a perfect hypercube we require another state transfer, due to equation (\ref{eqn:adjtwohop}), we have
\begin{equation}
    e^{-iA_2t_0}|v\rangle_\psi=e^{i\phi}|w\rangle\xrightarrow{\text{Phase correction}(\varphi_2)}|w\rangle_\psi
\end{equation}
which takes us to a final vertex $w$. Each of these CQC-hoppings' action is equivalent to the SWAP gate between the qubits at those vertices.

The reported time for a iSWAP gate in \cite{ref:14} is reported to be around $\tau=35$ ns (differs for different nature of decoherence errors incorporated in the model. However, to show the advantage of CQC-hop state transfer scheme, only the order is important). For the same values of expermental data in \cite{ref:14}, we show that the total time $2t_0$ for our scheme is of the order of 1.2 ns only. Moreover, SWAP and CNOT gates on a physical architecture are not a single step operations, but usually involves a series of steps with time controlled quantum evolution with various parameters of control (such as the famous Cirac-Zoller CNOT gate in trapped ion \cite{ref:35} and similar implementation in C-QED Jaynes-Cummings interaction \cite{ref:23}). However, for the purpose of quantum circuits, all two-qubit gates are seen as single operation for the circuit. In contrast to a SWAP gate, a perfect state transfer requires less time if parameters are chosen correctly, for our architecture (see chapter \ref{chap:phyreal}). In contrast, perfect state transfer is simply the time controlled free quantum evolution of the whole network and is more robust to errors and decoherence because there is no quantum manipulation into the network during the process.
\begin{figure}[hbtp]
    \centering
    \includegraphics[scale=0.5]{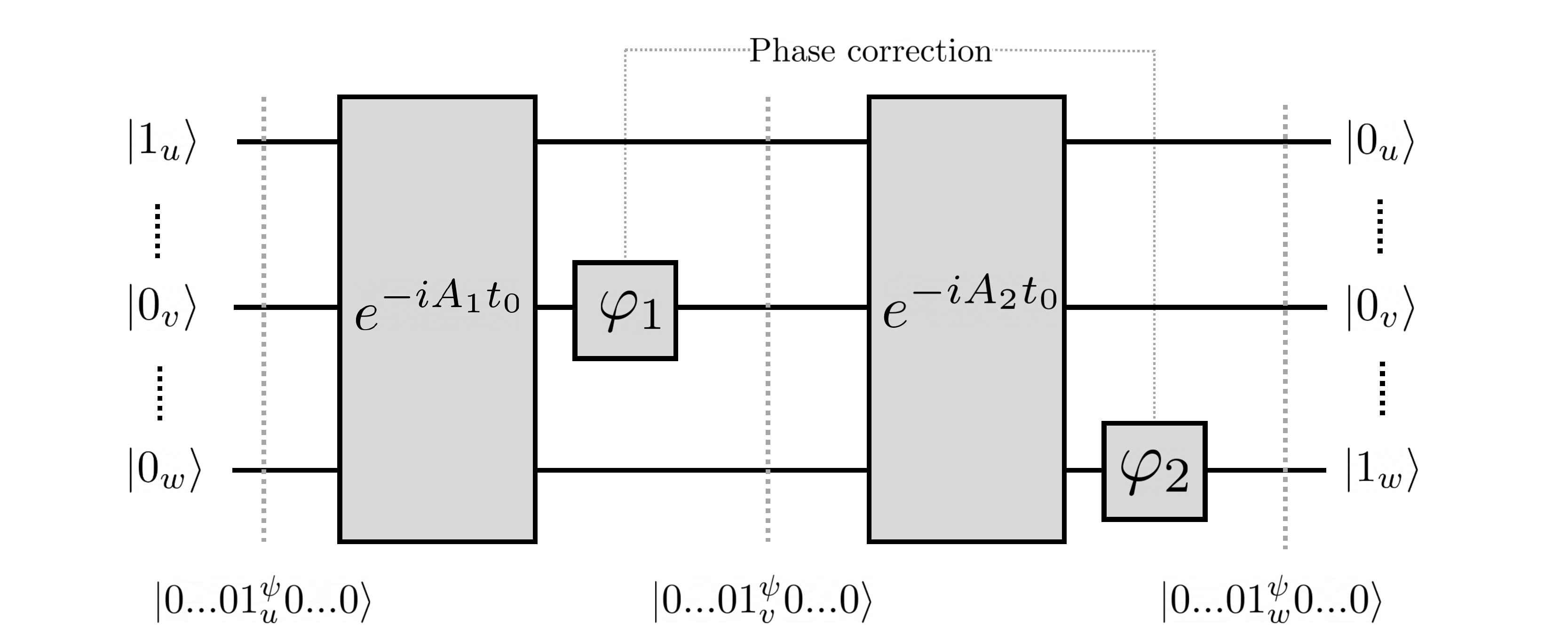}
    \caption{Quantum circuit for our CQC-hopping scheme}
    \label{fig:qcircuit}
\end{figure}

For a $d$-dimensional hypercube, the shortest path length between two given vertices is at most $d$. Hence, at most $d$ SWAP gates will be required to transfer a state between these two vertices. Whereas CQC-hopping can make this task possible in at most 2 operations for arbitrarily chosen qubits in the network. For example, consider the case of $\square^5 K_2-\{|11111\rangle\}$ (which is not a perfect hypercube with 31 qubits). The vertices can be labeled as $|x_0x_1x_2x_3x_4\rangle$. Let us say we want to the state transfer from $|
10100\rangle$ ($u$) to $|01011\rangle$ ($w$). Notice that they are not antipodal because the vertex $|11111\rangle$ is missing. There are multiple shortest paths for this given pair. One of these is
\begin{equation}
    |10100\rangle_u \rightarrow |00100\rangle \rightarrow |01100\rangle \rightarrow |01110\rangle \rightarrow |01010\rangle \rightarrow |01011\rangle_w.
\end{equation}
This requires us to use five SWAP gates in this desired sequence. Whereas for the CQC-hopping scheme we have
\begin{equation}
    |10100\rangle_u\xrightarrow{e^{-iA_1t_0}}|00100\rangle_v\xrightarrow{e^{-iA_2t_0}} |01011\rangle_w  
\end{equation}
which is just a two step task.  For any order of hypercube and any large arbitrary large number of qubits $n$, this holds. In this example, the first CQC-hopping is one-link jump whereas the second CQC-hopping is a hypercube (of dimension 4) jump. Hence, the computational advantage. Refer to figure \ref{fig:qcircuit} for the quantum circuit for CQC-hopping scheme for general two hoppings. If it so happens that $u$ and $w$ are antipodal vertices of a hypercube then only one hopping is required. Phase correction gates  $\varphi_1$ and $\varphi_2$ are applied to recover the original state $|\psi\rangle$ at the desired qubits. This phase correction can be easily performed at any given qubit when our state transfer qubits are themselves the computational qubits (see chapter \ref{chap:phyreal}). Note that the ket states in \ref{fig:qcircuit} denote the excitation states at different qubits and this is different from the binary ket representation of the same. Therefore, whenever a quantum circuit has evaluated a segment of circuit and is brought to halt with a result stored in one of qubit's state, it can be easily transferred to any other qubit purely by our state transfer scheme in contrast to multiple SWAP gates. The only condition that needs to be ensured is to initialize all other qubits to ground state except the one we want to transfer.

\newpage
\chapter{Perfect State Transfer for qudit (d-level systems) networks}
Instead of a qubit (a two-level quantum system), we may require to transfer the state of a $d-$level system, called a qudit. If a qudit is thought of as stacked qubits with equally spaced levels (like finite harmonic oscillator), then there is a finite possibility that the $d-$level excitation can \textit{split} and distribute as a sum of lower order excitations (such as different $\Gamma^k$ in section \ref{sec:firstexcitation}) which splits this qudit state while preserving the total excitation. However, we want to perfectly transfer the full qudit state from one site to another in the network (graph) of qudits. This requires special conditions. We want to transfer any general qudit state from one vertex to another. We denote a general qudit state $|\psi_d \rangle$ as
\begin{equation}
    |\psi_d \rangle=\sum_{j=0}^d\alpha_j|j\rangle
\end{equation}
with $\alpha_i\in \mathbb{C}$ and $\sum_{j=0}^d|\alpha_j|^2=1$, as the normalization condition.

\section{Perfect State Transfer in weighted qudit chains of arbitrary length}
Similar to equation (\ref{eqn:XYmodel}), the transfer Hamiltonian for the $XY$ model can be formulated using Lie algebra. This was proposed in \cite{ref:20} for chains. Analogous to the Pauli operators for $SU(2)$ group for qubit, Lie algebra generators can be defined for $SU(d)$ group for the $d-$level system transfer dynamics. First step is to define the corresponding projectors
\begin{equation}
    (P^{k,j})_{\mu,\nu}=|k\rangle \langle j|=\delta_{\mu,j}\delta_{\nu,k}, \quad 1\leq \mu, \quad  \nu\leq d
\end{equation}
We want to obtain all the $d^2-1$ operators for $SU(d)$ group. The first set of these are the projectors
\begin{equation}
    \Theta^{k,j}=P^{k,j}+P^{j,k}, \quad \beta^{k,j}=-i(P^{k,j}-P^{j,k}),
\end{equation}
with $1\leq k \leq j \leq d$. And the remaining $d-1$ operators are defined to be the following,
\begin{equation}
    \eta^{r,r}=\sqrt{\frac{2}{r(r+1)}}\left[ \left( \sum_{j=1}^r P^{j,j}\right)-rP^{r+1,r+1} \right]
\end{equation}
with $1\leq r\leq (d-1)$. These two sets give all the $d^2-1$ operators for the $SU(d)$ group that are necessary to define the dynamics for qudit chains. For $d=3$, these sets give the Gell-Mann matrices, and so on. For defining the dynamics for the $XY$ model, only the off-diagonal operators of the first set are enough.

For defining the qudit network we assume that all the qudits are identical. Then we define the $XY$ Hamiltonian as
\begin{equation}
    H^d_{XY}=\sum_{(i,i+1)\in \mathcal{L}(G)}\frac{J_i}{2}\left( \Theta^{k,j}_{(i)}\Theta^{k,j}_{(i+1)}+\beta^{k,j}_{(i)}\beta^{k,j}_{(i+1)} \right)
\end{equation}
where $\mathcal{L}(G)$ denotes a line graph of $G$. And the couplings are weighted couplings with $J_i=\sqrt{i(n-i)}/2$ for $1\leq k \leq j \leq d$. This is analogous to equation (\ref{eqn:XYweighted}) deduced earlier. This Hamiltonian preserves a quantity similar to the total $z-$spin in the $SU(2)$ model. The conservation relation is
\begin{equation}
    \left[ H_{XY}^d,\sum_{j=1}^n \eta^{r,r}_{(j)}\right]=0, \quad \text{for } 1\leq r \leq (d-1).
\end{equation}
The dynamics of Laplacian can be similarly defined. The transfer dynamics of this Hamiltonian for entanglement transfer with chains has been shown in \cite{ref:20}.

\section{PST in arbitrary qudit graphs (Bosonic model)}
\label{sec:hubbardbose}
Spin-1/2 (fermionic) systems can be seen as qubits ($d=2$). Similarly, arbitrary $d$-level system can be realised by an appropriate spin. Here we investigate perfect state transfer of the bosonic spin particles on bosonic network lattice - the Bose-Hubbard model. This work was proposed in \cite{ref:21}. First, we present the incorrect calculation reported in \cite{ref:21}. Equation $\sum_lU_{l1}\Tilde{b}_l^\dagger=\sum_l\Tilde{b}_l^\dagger$ following after equations 2-8 on page 174 in \cite{ref:21} is incorrect. This equation has been used to derive equation 2-9 and finally the condition 2-12 for perfect state transfer which is incorrect. This relation implies that every entry in the first column of $U$ is unity. This is because $U$ is a unitary matrix after all where a complete column cannot have each entry as unity. This would imply a non-unitary transformation over each adjacency matrix $A_k$.

We reformulate this method to derive the right condition for the perfect state transfer of qudits. The Hamiltonian governing the Bose-Hubbard model is
\begin{equation}
    H_d=\sum_{k=1}^{n-1}J_k\left( b_k^\dagger b_k+b_k b_k^\dagger \right)+\sum_{k=1}^n\epsilon_kb_k^\dagger b_k
\end{equation}
where the first term governs the hopping from one site to its adjacent site and second term is the local energy of that site. $b_k^\dagger$ and $b_k$ are the bosonic creation and annihilation operators for the site $k$ whose action on the basis kets $|0\rangle,|1\rangle$ (for each site) is well known. We are mostly concerned with the first coupling term. This model is applicable on general arbitrary finite graphs also. This can be generalised to an arbitrary graph in a weighted scheme. Suppose $\Omega$ is a connected graph. For each coupling strength $J_k$, $k = 0, 1,..., d$, we can form a graph $\Omega_k$ in which vertices are adjacent if their coupling in $\Omega$ equals $k$. Let $A_k$ be the adjacency matrix of $\Omega_k$. For instance, $A_1$ is the adjacency matrix $A$ of $\Omega$. Also, let $A_0 = I$, the identity matrix. This gives us $d + 1$ matrices $A_0, A_1,..., A_d$, called the adjacency matrices of $\Omega$. Their sum is the matrix $J$ in which every entry is 1. In the other words, we assume that the dynamics of bosons, in a system with $n$ sites (associated with the nodes of a finite group), is governed by
the following off-diagonal Bose-Hubbard Hamiltonian
\begin{equation}
    H_d=\sum_{i,j=1}^n\sum_{k=0}^dJ_k(A_k)_{ij}b_i^\dagger b_j
\end{equation}
The ground assumption is that the matrices $A_k$ for $k=0,1,...,d$ commute with each other such that all of them are simultaneously by the matrix $U$. The same matrix $U$ pointed above for a calculation mistake in the original work. The diagonalization follows as $UA_kU^\dagger=D_k$, where $D_k=\diag (\lambda_1^{(k)},\lambda_2^{(k)},...,\lambda_n^{(k)})$ is a diagonal matrix with eigenvalues of $A_k$ on its diagonal. Using this relation in the Hamiltonian expression we get
\begin{equation}
    H_d=\sum_{i,j=1}^n\sum_{k=0}^dJ_k(U^\dagger D_kU)_{ij}b_i^\dagger b_j=\sum_{k=0}^d\sum_{l=1}^nJ_k\lambda^{(k)}_l\left( \sum_iU^\dagger_{il}b_i^\dagger \right)\left( \sum_jU_{lj}b_j \right)
\end{equation}
with $l=1,2,...,n$. The term in the brackets can be treated as Bogoliubov transformation which is a linear transformation on creation/annihilation operators. Define this change of basis as
\begin{equation}
    \Tilde{b}_l=\sum_jU_{lj}b_j, \quad \quad b_l=\sum_jU^*_{jl}\Tilde{b}_j
\end{equation}
Define the number operator in the new tilde basis as $\Tilde{n}_l=\Tilde{b}^\dagger_l\Tilde{b}_l$ and the effective coupling as $\Tilde{J}_l=\sum_{k=0}^dJ_k\lambda^{(k)}_l$. Then Hamiltonian simply takes the diagonal form
\begin{equation}
    H_d=\sum_{l=1}^n\Tilde{J}_l\Tilde{n}_l.
\end{equation}
Without loss of generality, at $t=0$, we start with an initial state localised at the first qudit
\begin{equation}
|\psi_d(0)\rangle=\sum_{j=0}^d\alpha_j(b_1^\dagger)^j|\mathbf{0}\rangle
\end{equation}
where $|\mathbf{0}\rangle:=|00...0\rangle$ and $(b_k^\dagger)^j|\\mathbf{0}\rangle=|0_10_2...j_k...0_n\rangle=|j\rangle$ (which is understood to be a action on site $k$ by context). This is a harmonic oscillator (energy levels equally spaced) like $d$-level state. All the qudits are identical in the network. For $d=2$, analogous to \ref{eqn:bose1evo}, we will have the evolution of the initial state as
\begin{equation}
\label{eqn:quditevo}
    |\psi_d(t)\rangle=e^{-iH_dt}|\psi_d(0)\rangle=\alpha_0|\mathbf{0}\rangle+\alpha_1\sum_{j=1}^nf_{j1}(t_0)b^\dagger_j|\mathbf{0}\rangle
\end{equation}
where 
\begin{equation}
    f_{j1}(t)=\langle \\mathbf{0}|e^{-iH_dt\sum_{l=1}^n\Tilde{J}_l\Tilde{n}_l}b_1^\dagger|\mathbf{0}\rangle
\end{equation}
and for the perfect state transfer from forst qubit to some $m$th qubit, we impose $|f_{m1}(t_0)|=1$ for some finite $t=t_0$. We derive a condition for qubit using this Hamiltonian and generalise it over $d$-levels. Change of basis in first part of \ref{eqn:quditevo} yields
\begin{equation}
     |\psi_d(t_0)\rangle=\alpha_0|\mathbf{0}\rangle+\alpha_1e^{-it_0\left(\sum_{l=1}^n\Tilde{J}_l\Tilde{b}^\dagger_l\Tilde{b}_l\right)}\sum_{m=1}^nU_{m1}\Tilde{b}^\dagger_m|\mathbf{0}\rangle
\end{equation}
\begin{equation}
\begin{split}
    &=\alpha_0|\mathbf{0}\rangle+\alpha_1\sum_{\substack{l=1\\m=1}}^ne^{-it_0\Tilde{J}_l}|\Tilde{l}\rangle \langle\Tilde{l}| U_{m1}|\Tilde{m}\rangle \quad \text{(Spectral decomposition)}\\
    & =\alpha_0|\mathbf{0}\rangle+\alpha_1\sum_{\substack{l=1\\m=1}}^ne^{-it_0\Tilde{J}_l}U_{m1}|\Tilde{l}\rangle \delta_{\Tilde{l}\Tilde{m}} \\
    &=\alpha_0|\mathbf{0}\rangle+\alpha_1\sum_{l=1}^ne^{-it_0\Tilde{J}_l}U_{l1}\Tilde{b}^\dagger_l|\mathbf{0}\rangle.
\end{split}
\end{equation}
After reverting back to the initial basis we obtain
\begin{equation}
\label{eqn:correctedqudit}
      |\psi_d(t_0)\rangle=\alpha_0|\mathbf{0}\rangle+\alpha_1\sum_{l,j=1}^ne^{-it_0\Tilde{J}_l}U_{l1}U^*_{lj}b^\dagger_j|\mathbf{0}\rangle
\end{equation}
in contrast to equation 2-9 in \cite{ref:21} which we report as incorrect. To extract a more compact form define the column vector
\begin{equation}
    \Tilde{\mathbf{J}}=\left( U_{11}e^{-it_0\Tilde{J}_1}\quad U_{21}e^{-it_0\Tilde{J}_2} \quad ... \quad U_{n1}e^{-it_0\Tilde{J}_n} \right)^T
\end{equation}
then equation (\ref{eqn:correctedqudit}) can be re-expressed as
\begin{equation}
\begin{split}
    |\psi_d(t_0)\rangle&= \alpha_0|\mathbf{0}\rangle+\alpha_1\sum_{j=1}^n \left( \sum_{l=1}^nU_{jl}^\dagger \Tilde{\mathbf{J}}_l \right)b^\dagger_j|\mathbf{0}\rangle\\
    &= \alpha_0|\mathbf{0}\rangle+\alpha_1\sum_{j=1}^n \left(U^\dagger \Tilde{\mathbf{J}} \right)_jb^\dagger_j|\mathbf{0}\rangle.
\end{split}
\end{equation}
Comparing with equation (\ref{eqn:quditevo}) we have that
\begin{equation}
\label{eqn:fidelutyqudit}
    f_{j1}(t_0)=\left(U^\dagger \Tilde{\mathbf{J}} \right)_j=\sum_{l=1}^n\left(U^\dagger\right)_{jl}U_{l1} e^{-it_0\Tilde{J}_l}
\end{equation}
and the condition for perfect state transfer to the $m$th site imposes
\begin{equation}
\label{eqn:fidelityconditionqudit}
    \left(U^\dagger \Tilde{\mathbf{J}} \right)_j=e^{i\phi}\delta_{jm}
\end{equation}
for some arbitrary phase $\phi$ which can always be corrected, post the transfer.

To have estimation about the entries of the matrix $U$, define the product
\begin{equation}
    b_i^\dagger b_j:=E_{ij}.
\end{equation}
Here the action of $b_i^\dagger b_j$ is $b_i^\dagger b_j |k\rangle=\delta_{jk}|i\rangle$ as understood by the operator action on the basis. We can deduce that $E_{ij}$ is a matrix with all entries zero except the $(i,j)$ entry ($\langle i|E_{ij}|j\rangle=1$), that is, $(E_{ij})_{kl}=\delta_{ik}\delta_{jl}$. Then for the indices which are adjacent in graph $\Omega_k$, denoted as $i\sim _kj$
\begin{equation}
    \sum_{i\sim _kj}=\sum_{i\sim _kj}E_{ij}=A_k.
\end{equation}
Then the Hamiltonian can be expressed in terms of adjacency matrices $A_k$ as
\begin{equation}
    H_d=\sum_{k=0}^dJ_k\sum_{i\sim _kj}E_{ij}=\sum_{k=0}^dJ_k A_k.
\end{equation}
The matrices $A_k$ can always be diagonalized as $A_k=\sum_l\lambda_l^{(k)}|l\rangle \langle l|=\sum_l\lambda_l^{(k)}E_l$, where $E_l$ are the corresponding projectors of the $l$th subspace spanned by the eigenvectors corresponding to the eigenvectors of $\lambda_l^{(k)}$. Using this observation for spectral decomposition we have
\begin{equation}
    f_{j1}(t_0)=\langle \mathbf{0}|b_je^{-iHt_0}b_1^\dagger |\mathbf{0}\rangle=\sum_le^{-it_0\Tilde{J}_l}\langle \mathbf{0}|b_jE_lb_1^\dagger |\mathbf{0}\rangle
\end{equation}
\begin{equation}
    =\sum_le^{-it_0\Tilde{J}_l}\langle j|E_l|1\rangle=\sum_le^{-it_0\Tilde{J}_l}U_{l1}U^\dagger _{jl}.
\end{equation}
Last step is done using equation (\ref{eqn:fidelutyqudit}).
This gives us the information about the entries of $U$ as follows
\begin{equation}
    U_{l1}U^\dagger_{jl}=\langle j|E_l|1\rangle=(E_l)_{j1}.
\end{equation}

This procedure for two levels can be applied to arbitrary $d$ levels treating each level as some higher level excitation of the ground level. Label 1 can be replaced by any arbitrary label $x$ which is initial site to begin the transfer from. For simplicity, let us stick with the initial site as the first site. The free evolution of a qudit will be as follows for each arbitrary level $i$, and all levels evolve independently as they are the basis kets,
\begin{equation}
    e^{-iH_dt}(b^\dagger _1)^i|\mathbf{0}\rangle=e^{-iH_dt}\sum_{l_1,...,l_i}\Tilde{b}^\dagger _{l_1}\Tilde{b}^\dagger _{l_2}...\Tilde{b}^\dagger _{l_i}|\mathbf{0}\rangle
\end{equation}
\begin{equation}
   =\sum_{l_1,...,l_i}e^{-it\sum_{k=1}^i\Tilde{J}_{l_k}} \Tilde{b}^\dagger _{l_1}\Tilde{b}^\dagger _{l_2}...\Tilde{b}^\dagger _{l_i}|\mathbf{0}\rangle
\end{equation}
\begin{equation}
\begin{split}
    =&\sum_{k_1,...,k_i}\left( \sum_{l_1}e^{-it\Tilde{J}_{l_1}}U_{l_11}U^*_{l_1k_1} \right)\left( \sum_{l_2}e^{-it\Tilde{J}_{l_2}}U_{l_21}U^*_{l_2k_2} \right)...\left( \sum_{l_i}e^{-it\Tilde{J}_{l_i}}U_{l_i1}U^*_{l_ik_i} \right)\\
    & \times b^\dagger _{k_1}b^\dagger _{k_2}...b^\dagger _{k_i}|\mathbf{0}\rangle
\end{split}
\end{equation}
which can again be written in the compact form similar to the previous analysis as
\begin{equation}
    e^{-iH_dt}(b^\dagger _1)^i|\mathbf{0}\rangle=\sum_{k_1,...,k_i}\left( U^\dagger \Tilde{\mathbf{J}} \right)_{k_1}\left( U^\dagger \Tilde{\mathbf{J}} \right)_{k_2}...\left( U^\dagger \Tilde{\mathbf{J}} \right)_{k_i} \times b^\dagger _{k_1}b^\dagger _{k_2}...b^\dagger _{k_i}|\mathbf{0}\rangle.
\end{equation}
This transfers each excitation level of the qudit independently from one site to another. The final state of the system will be
\begin{equation}
\begin{split}
    |\psi_d(t)\rangle=e^{-iH_dt}|\psi_d(0)\rangle &=\alpha_0|\mathbf{0}\rangle+\alpha_1\sum_{k_1}\left( U^\dagger \Tilde{\mathbf{J}} \right)_{k_1}b^\dagger _{k_1}|\mathbf{0}\rangle+\alpha_2\sum_{k_1,k_2}\left( U^\dagger \Tilde{\mathbf{J}} \right)_{k_1}\left( U^\dagger \Tilde{\mathbf{J}} \right)_{k_2}b^\dagger _{k_1}b^\dagger _{k_2}|\mathbf{0}\rangle\\
    & +...+\alpha_d \sum_{k_1,...,k_d}\left( U^\dagger \Tilde{\mathbf{J}} \right)_{k_1}\left( U^\dagger \Tilde{\mathbf{J}} \right)_{k_2}...\left( U^\dagger \Tilde{\mathbf{J}} \right)_{k_d} \times b^\dagger _{k_1}b^\dagger _{k_2}...b^\dagger _{k_d}|\mathbf{0}\rangle.
\end{split}
\end{equation}
The condition for perfect state transfer applies to each excitation level which exactly remains the same condition as equation (\ref{eqn:fidelityconditionqudit}) for $t=t_0$.
The final state when transfer is accomplished to the $m$th site is simply
\begin{equation}
     |\psi_d(t_0)\rangle=\sum_{j=0}^d\alpha_j\left( b^\dagger_m \right)^j|\mathbf{0}\rangle.
\end{equation}
Special graphs can be studied under this model of perfect state transfer for qudits. PST for qudits has been demonstrated experimentally on superconducting transmon qudits in \cite{ref:51}. Qudit PST for pseudo-regular networks is explored in \cite{ref:52}. An architecture for arbitrary long distances qudit perfect state transfer using multiple hoppings is presented in \cite{ref:53}.
\newpage
\chapter{Physical implementation for qubit networks with superconducting circuits}
\label{chap:phyreal}

Our proposed architecture in chapter \ref{chap:twohop} addresses the problem of scalability of quantum processors while preserving the perfect fidelity for state transfer with full control on routing of initial states. Scalability of quantum processors is a deep concern in the development of quantum computing hardware \cite{ref:26}\cite{ref:27}\cite{ref:28}. In the NISQ (Noisy Intermediate-Scale Quantum) \cite{ref:29} era quantum processors, high-fidelity quantum operations and attempt of scalable quantum networks are key features. These features are essential for determining how good is an architecture for quantum information processing \cite{ref:30}. One of the main challenges for scalable quantum network is the imperfect two-qubit interaction. For perfect state transfer with maximum fidelity, the pairwise interaction should be improved for large-scale quantum processors \cite{ref:31}. Different physical systems for quantum computation have different advantages, such as, high-fidelity and control in ion-traps \cite{ref:35}\cite{ref:34} versus the scalability of superconducting circuits \cite{ref:33}\cite{ref:23}\cite{ref:32}. Our focus in this work will be with the high-fidelity, state control and scalability of the superconducting quantum computing architecture and proving that our two-hopping all-to-all state transfer scheme can be perfectly realised with this approach.

For the usual perfect state transfer scheme, the underlying assumption is that all the edges in the graph network are precisely engineered at the desired coupling strength. If this is violated, we will have to compromise a lot with the fidelity of the state transfer. PST Hamiltonians like $XY$ and Heisenberg model are nothing but the sum of the pairwise interactions between the connected qubits that are allowed to interact in these two defined schemes. The coupling Hamiltonian for an isolated pair in PST scheme is simply the gate Hamiltonian for the quantum computation. This implies that PST is nothing but the complete free quantum evolution (for a desired time interval which is the control parameter) of the entire network in contrast to the controlled quantum evolution of an isolated pair of qubits (which is a quantum gate operation). Improving two qubit gate fidelity will improve the perfect state transfer fidelity. Connectivity of the qubits (as a graph structure) in the quantum circuit model will determine which pair of qubits can have two-qubit gate operations between them mutually. Disconnected qubits on the physical architecture will prohibit two-qubit interactions between these qubits. This is how the graph determines the architecture. More connectivity will enable more two-qubit interactions. However, there is also a physical limitation for the maximum nearest neighbour interactions a qubit can sustain \cite{ref:69}.

There are two sources of two-qubit interaction errors: decoherence (stochastic) and nonideal interactions (deterministic). The latter includes parasitic coupling, leakage to non-computational states, and control crosstalk. As one example of parasitic coupling, the next-nearest-neighbor (NNN)
coupling is a phenomenon commonly seen in many systems, including Rydberg atoms \cite{ref:38}, trapped ions \cite{ref:39}\cite{ref:34}, semiconductor spin qubits \cite{ref:40}, and superconducting qubits \cite{ref:41}\cite{ref:42}. Unwanted interactions (such as next-nearest neighbour) between qubits are meant to be unconnected.

\section{Requirements for our CQC-hopping scheme}
To implement our network on a real architecture we follow the physical architecture given in \cite{ref:14} for two qubits and generalize it over arbitrary number of qubits $n$. Physical key requirements of our architecture are the following:
\begin{itemize}
    \item Multiple nearest-neighbour (NN) interactions
    \item Since distant qubits are connected, implementation is not possible in planar integration. Three-dimensional (3D) integration is needed \cite{ref:36}\cite{ref:69}
    \item Tunable (eventually switchable) edges as couplings for each pair of nodes (qubits)
    \item Multiple qubit coupling controlled independently (each qubit independently be brought in dispersive regime) \cite{ref:31}
    \item Addition of a new node to the existing network and so on (addressing scalability)
    \item High fidelity control over the processor \cite{ref:37}
\end{itemize}
Most of these requirements cannot be fulfilled by the usual spin model where switching and tuning of edge coupling becomes nearly impossible as the experimentalist control is negligible for spins on a fixed lattice. Moreover, such coupling is a function of the distance between two nodes which is not changeable in practice. Multiple edges are very hard to be tuned to the same coupling strength and the nodes that are physically distant are impossible to couple. We show that all these implementation problems can be addressed with the architecture based on tunable-coupling superconducting circuits. A tunable coupler can also help mitigate the problem of frequency crowding that exacerbates the effect from nonideal interactions. However, these additional elements often add architectural complexity, as well as open a new channel for decoherence and crosstalk. Many prototypes of
a tunable coupler have been demonstrated in superconducting quantum circuits, such as the the gmon design (\cite{ref:43}, two-qubit gate fidelity limited by decoherence) and xmon design (\cite{ref:44}, qubit's coherence time is decreased by the tunable coupler), are two examples in the literature. Addition and deletion of a new node (qubit) to the graph (network) has been treated as the inclusion and exclusion of that qubit in the computational network. That is, this is achieved by turning of every interaction of that node with the rest of the network and switching it on when this node is to be added. Building and fabricating a new qubit into the existing processor is a matter of the available resources and depends upon lab to lab. We do not address that matter. For physical realization, the addition of nodes can be regarded as turning on interaction with more qubits that already exist by construction in the quantum processor.

\section{Tunable-coupling effective Hamiltonian for CQC-hopping architecture}

In the superconducting transmon charge qubit architecture \cite{ref:25} all these limitations can be tackled and all the above requirements can be fulfilled as we demonstrate. Moreover, this architecture maybe used for the conventional quantum computation model in addition to quantum state transfer application due to the high-fidelity gate operations that are possible \cite{ref:14}. This architecture allows high control over the coupling macroscopically and decoherence times are quite longer. A gate (two qubit interaction) fidelity of 99.999\% is reported in \cite{ref:14} in the absence of decoherence.

We consider a generic system consisting of an arbitrary network of qubits  with exchange coupling between nearest qubits (which have an edge between them) and a coupler between these that couples to both these qubits. The approach is based on a generic three-body system with exchange-type interaction. A central component, the coupler, frequency tunes the virtual exchange interaction between two qubits and features a critical bias point, at which the exchange interaction offsets the direct qubit-qubit coupling, effectively turning off the net coupling. Two-qubit interactions are executed for each pair of qubits by operating the respective couplers in the dispersive regime, strongly suppressing leakage to the coupler’s excited states. The two qubits (with Zeeman splittings $\omega_i$ and $\omega_j)$ each couple to a center tunable coupler $(\omega^c_{ij})$ with a coupling strength $g_i$ $(i = 1, 2,...,n)$, as well as to each other with a coupling strength $g_{ij}$. The ancillary coupling is stronger than the direct coupling , $g_i,g_j > g_{ij} > 0$. Ancillary coupler does not count as the part of our network, it is only needed as the part of the implementation of architecture. We have the total number of ancillary couplers equal to the number of edges in the network. Without loss of generality, we begin our analysis with a two-level Hamiltonian,
\begin{equation}
\begin{split}
    H= \frac{1}{2}\sum_{i=1}^n\omega_i\sigma^z_i+\frac{1}{2}\sum_{\langle i,j \rangle}\omega^c_{ij}\sigma^z_{C_{ij}}+\sum_{\langle i,j \rangle}g_i\left( \sigma^+_i\sigma^-_{C_{ij}}+\sigma^-_i\sigma^+_{C_{ij}} \right) +\sum_{\langle i,j \rangle}g_{ij}\left( \sigma^+_i\sigma^-_j+\sigma^-_i\sigma^+_j \right)
\end{split}
\end{equation}
where all the operators are defined for the respective modes and $\langle i,j \rangle$ means there is an edge between qubits $i$ and $j$. First two terms are just the Zeeman splittings of the network qubits and the ancillary couplers for all edges. Third term (call it $V$, treated as a perturbation) describes the coupling of each qubit to coupler and the last term is the direct coupling between the network qubits which have an edge between them. $H=H_0+V$ can be written where we treat the coupling to the ancilla as external coupling one wants to get rid of. In this setting, all qubits are negatively detuned from the resonance with the ancillary coupler with $\Delta_j=\omega_j-\omega^c_{ij}<0$ and we operate in the dispersive regime for all qubits with $g_j\ll |\Delta_j|\quad \forall j$. Any two connected qubits interact through two channels, the direct nearest coupling and the indirect coupling via the ancilla (which can be regarded as a virtual exchange interaction). The idea is to make these two couplings compete against each other and tune the desired strength of coupling for each pair of qubit. We desire to tune all couplings at equal magnitude to make an uniformly coupled network as proposed.

\begin{figure}[hbtp]
    \centering
    \includegraphics[scale=0.5]{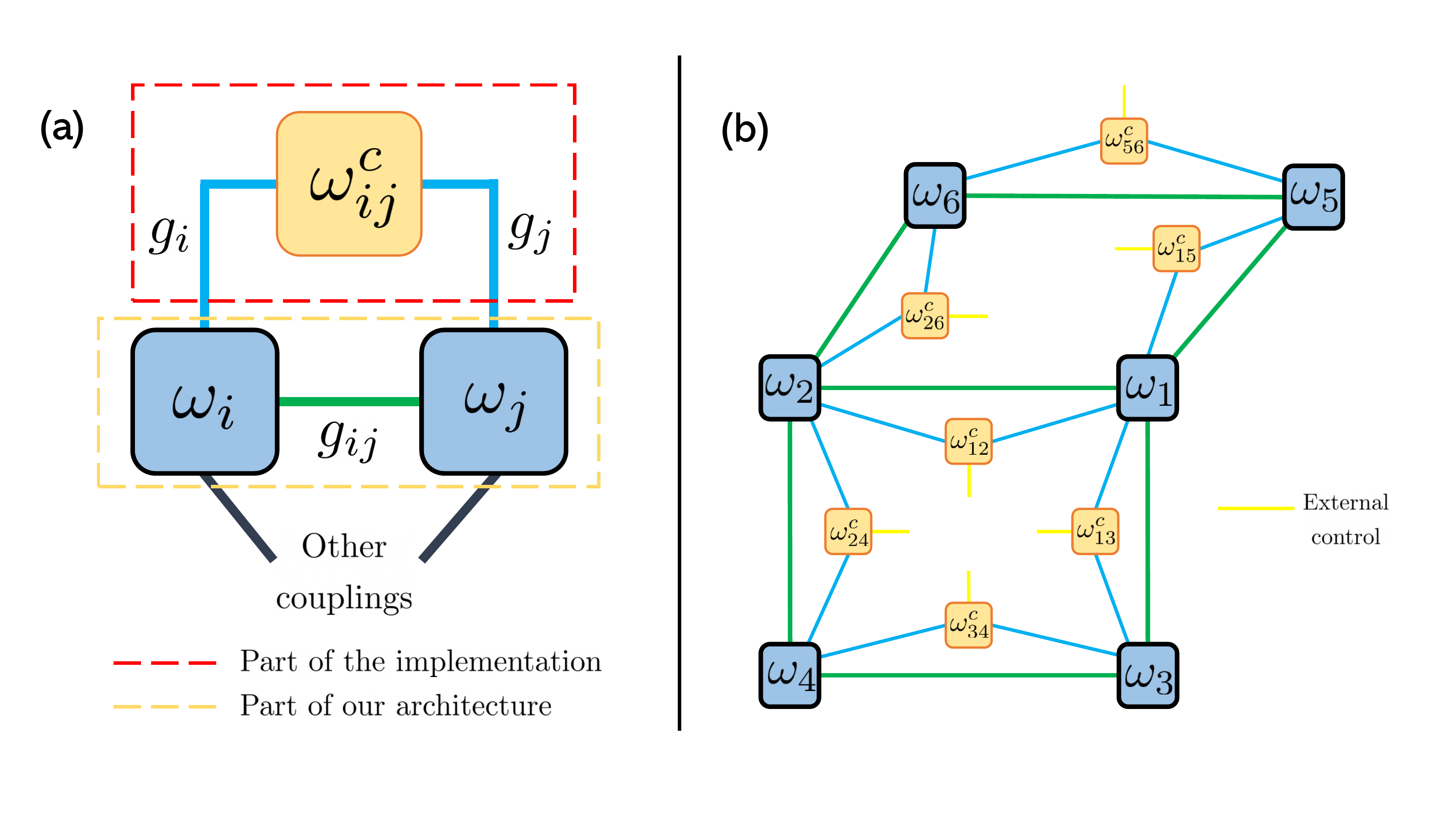}
    \caption{The part (a) shows the isolated pair of coupled qubits with the ancillary coupler acting as the inter-mediator. The situation for every pair that has an edge between them is the same. Part (b) shows our qubit architecture for $n=6$ qubits with the ancillary couplers involved. The control parameter for edge switching here is the external control on the capacitance for each coupler, denoted as yellow.}
    \label{fig:coupling}
\end{figure}

To find the effective qubit-qubit coupling and eliminate the qubit-ancilla coupling, we take advantage of the Schrieffer-Wolff unitary transformation $U_{SW}=e^\eta$.
The transformation if formally represented as 
\begin{equation}
    \Tilde{H}=U_{SW}HU_{SW}^\dagger=H+[\eta,H]+\frac{1}{2!}[\eta,[\eta,H]]+...
\end{equation}
\begin{equation}
    = H_0+V+[\eta,H_0]+[\eta,V]+\frac{1}{2!}[\eta,[\eta,H_0]]+\frac{1}{2!}[\eta,[\eta,V]]+...
\end{equation}
If one can find a transformation $\eta$ such that $V + [\eta, H_0] = 0$, the transformed Hamiltonian becomes:
\begin{equation}
    \Tilde{H}=H_0+\frac{1}{2}[\eta,V]+\mathcal{O}(V^3)
\end{equation}
because $\eta \propto \mathcal{O}(V )$. The standard way to find this transformation requires to calculate the commutator
\begin{equation}
    \left[H_0,V\right]=\sum_{\langle i,j\rangle}g_i\left( g_{ij}\sigma_i^z -\Delta_i \right)\left( \sigma^+_i\sigma^-_{C_{ij}}-\sigma^-_i\sigma^+_{C_{ij}} \right)
\end{equation}
and impose this operator ansatz form with free parameters $\alpha_i^{\pm} as$
\begin{equation}
    \eta=\alpha_i^+\sigma^+_i\sigma^-_{C_{ij}}+\alpha^-_i\sigma^-_i\sigma^+_{C_{ij}}.
\end{equation}
Then we evaluate the commutator $[H_0,\eta]$ and equate to $V$ to find the free parameters $\alpha^\pm_i=\pm g_i/\Delta_i$. This gives the transformation $U_{SW}$ for our case as
\begin{equation}
    U_{SW}=e^\eta =\exp\left( \sum_{\langle i,j\rangle} \frac{g_i}{\Delta_i}\left( \sigma^+_i\sigma^-_{C_{ij}}-\sigma^-_i\sigma^+_{C_{ij}} \right) \right).
\end{equation}
For general SW-transformation within and beyond rotating wave approximation (RWA), see \cite{ref:24} and Appendix: B of \cite{ref:14}. Performing the SW transformation up to second order we get the Hamiltonian
\begin{equation}
\begin{split}
    &\Tilde{H}=\frac{1}{2}\sum_{j=1}^n\omega_j\sigma^z_j+\frac{1}{2}\sum_{\langle i,j\rangle}^n\omega^c_{ij}\sigma^z_{C_{ij}}+\sum_{\langle i,j\rangle}\left(\frac{g_ig_j}{\Delta_{ij}}+g_{ij}\right)\left( \sigma^+_i\sigma^-_j+\sigma^-_i\sigma^+_j \right)\sigma^z_{C_{ij}}\\
    &+ \sum_{\langle i,j \rangle}\frac{g_i^2}{\Delta_i}\left( \sigma^z_i\sigma^-_{C_{ij}}\sigma^+_{C_{ij}}+\sigma^-_i\sigma^+_i\sigma^z_{C_{ij}}\right)+\sum_{\substack{\langle i,j \rangle \\ \langle i,k \rangle} }\frac{g_i^2}{\Delta_i}\left(\sigma^+_i\sigma^-_i\sigma^-_{C_{ij}}\sigma^+_{C_{ik}}-\sigma^-_i\sigma^+_i\sigma^+_{C_{ij}}\sigma^-_{C_{ik}}\right).
\end{split}
\end{equation}
Now we make an important assumption that the ancillary couplers always remain in their ground state for all the edges and drop out the constant energy terms. This combines first and fourth term in the Hamiltonian and removes the last term which is exchange interaction between different ancillary couplers of the same qubit. This transformation finally decouples the ancillary coupler from the qubits up to second order in $g_i/\Delta_i$ resulting in
\begin{equation}
\label{eqn:SCtuned}
    \Tilde{H}=\frac{1}{2}\sum_{j=1}^n\Tilde{\omega}_j\sigma^z_j+\sum_{\langle i,j\rangle}\left(\frac{g_ig_j}{\Delta_{ij}}+g_{ij}\right)\left( \sigma^+_i\sigma^-_j+\sigma^-_i\sigma^+_j \right)
\end{equation}
where $\Tilde{\omega}_j=\omega_j+g_j^2/\Delta_j$ is the Lamb-shifted frequency revealed by the SW transformation and $\Delta_{ij}=2\Delta_i\Delta_j/\left( \Delta_i+\Delta_j \right)<0$ is the effective detuning. the coefficient of the second term, denote it as $\Tilde{g}_{ij}$, is the effective \textit{tunable} coupling between any two qubits that are coupled in the network wherever there is an edge. Identifying $\tilde{g}_{ij}$ as $2 J_{ij}$ gives the identical coupling Hamiltonian in equation (\ref{eqn:XYmodel}). However, now our $J_{ij}$ is tunable to a range of values we desire by setting the desired couplings and detunings. 
The effective coupling $\tilde{g}_{ij}$ in equation (\ref{eqn:SCtuned}) can be adjusted by the ancilla coupler frequency through $\Delta_{ij}$, as well as $g_i$ and $g_j$, both of which may be implicitly dependent on $\omega^c_{ij}$. Therefore, $\tilde{g}_{ij}$ is a function of $\omega^c_{ij}$ in general. The first term (indirect coupling) in the expression of $\tilde{g}_{ij}$ is negative while the second (direct coupling) is positive and this enables a competition between the two where $\omega^c_{ij}$ can be taken to act as the tunable parameter since it can be externally controlled in the experiment. $\tilde{g}_{ij}(\omega^c_{ij})$ can be tuned negative when ancilla coupler frequency is decreased or positive when this frequency is increased. And this is a continuous parameter, therefore we have some $\omega^c_{ij_{\text{off}}}$ such that $\tilde{g}_{ij}(\omega^c_{ij_{\text{off}}})=0$ which should be permitted by the bandwidth of the ancilla coupler. It is shown \cite{ref:14} that this cut-off frequency can be found even in weak dispersive regime with $g_j<|\Delta_j|$. Thus, in principle, we obtain the switchable edges with $\omega^c_{ij}$ as the parameter. We can simply tune each frequency $\omega^c_{ij}$ for each edge $E(i,j)$ to switch it on or off when our protocol requires and this is essentially a classical operation in experiment. Therefore, some of the edges maybe switched off by selecting special cut-off values. The couplers remain in their ground state throughout the quantum evolution as the effective interaction is only for one quantum exchange between the two qubits which are part of the network. Similar effective coupling Hamiltonians based on Cavity and Circuit-QED have been proposed in \cite{ref:22} (scalability has been addressed with experimental concerns using molecular architecture for qubits in superconducting resonators) and \cite{ref:23} (foundational reference for superconducting electrical circuits).

\section{Circuit Hamiltonian quantization}
Now, the above general formalism can be applied to specific system Hamiltonian. For our case we use the transmon qubits \cite{ref:45}. Josephson energy Hamiltonian with tunable energy is
\begin{equation}
    E_{J_\lambda}=\left( E_{J_\lambda,L}+E_{J_\lambda,R} \right)\sqrt{\cos^2\left( \frac{\pi\Phi_{e,\lambda}}{\Phi_0} \right)+\left( \frac{E_{J_\lambda,L}-E_{J_\lambda,R}}{E_{J_\lambda,L}+E_{J_\lambda,R}} \right)^2\sin^2\left( \frac{\pi\Phi_{e,\lambda}}{\Phi_0} \right)}
\end{equation}
where $\lambda=i,j,c_{ij}\quad \forall i,j$ ($c_{ij}$ is the labeling for the coupler connecting $i$ and $j$ qubits) and $\Phi_0=h/2e$ is the superconducting flux quantum. The coefficient of the sine term quantifies the asymmetry of the junction \cite{ref:46}. Refer to figure \ref{fig:SCqubits} for notation. $E_{J_{\lambda,L(R)}}$ is the Josephson energy of the left(right) junction in mode $\lambda$. $C_\lambda$ is the dominant capacitance for that mode. $C_{jc_{ij}}$ is the coupling capacitance between the qubit $j$ and the coupler $\langle ij\rangle$. $C_{ij}$ is the directing coupling capacitance between the two qubits $i$ and $j$, and $\phi_\lambda$ is the reduced total flux for that node.

\begin{figure}[hbtp]
    \centering
    \includegraphics[scale=0.65]{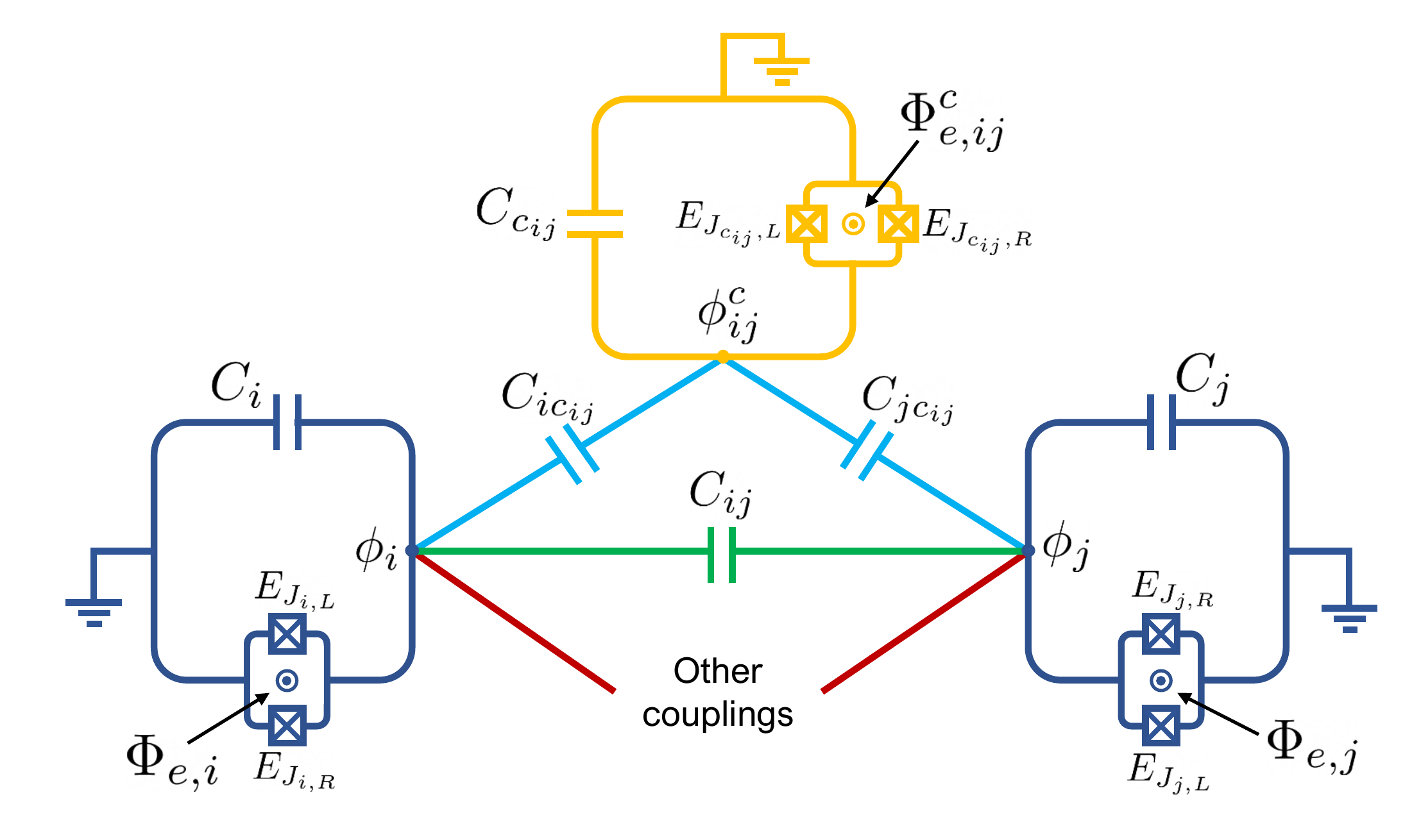}
    \caption{Schematic circuit diagram for a pair of connected tunable transmon qubits. Each connected pair which forms an edge on the graph has this structure.}
    \label{fig:SCqubits}
\end{figure}

This Hamiltonian can be canonically quantized in second quantization in the transmon regime with $E_{J_\lambda}/E_{C_\lambda}\gg 1$, where $E_{C_\lambda}=e^2/2C_\lambda$ for the corresponding mode \cite{ref:14}. The system is then described by the coupled oscillators ($\hbar=1$):
\begin{equation}
    H=\sum_{\langle i,j\rangle}\left( H_i+H_j+H_{c_{ij}}+H_{ic_{ij}}+H_{jc_{ij}}+H_{ij}  \right)
\end{equation}
Let the corresponding creation and annihilation operators for the respective mode be $b_\lambda^\dagger$ and $b_\lambda$ respectively, with similar action as in \ref{sec:hubbardbose}.
The terms of the Hamiltonian are
\begin{equation}
    H_\lambda=\omega_\lambda b_\lambda^\dagger b_\lambda+\frac{\alpha_\lambda}{2}b_\lambda^\dagger b_\lambda^\dagger b_\lambda b_\lambda,
\end{equation}
where $\alpha_\lambda$ is the anharmonicity (energy difference between the first excitation energy and further second excitation energy) of the oscillator, 
\begin{equation}
    H_{jc_{ij}}=g_j\left( b_j^\dagger b_{c_{ij}}+b_j b_{c_{ij}}^\dagger -b_j^\dagger b_{c_{ij}}^\dagger -b_j b_{c_{ij}} \right),
\end{equation}
\begin{equation}
    H_{ij}=g_{ij}\left( b_i^\dagger b_j+b_i b_j^\dagger-b_i^\dagger b_j^\dagger-b_i b_j \right),
\end{equation}
where the second order effect has also been taken into consideration. The energies are given as
\begin{equation}
    \omega_\lambda=\sqrt{8E_{J_\lambda}E_{C_\lambda}}-E_{C_\lambda},
\end{equation}
\begin{equation}
    g_j=\frac{1}{2}\frac{C_{jc_{ij}}}{\sqrt{C_jC_{c_{ij}}}}\sqrt{\omega_j\omega^c_{ij}}
\end{equation}
and
\begin{equation}
    g_{ij}=\frac{1}{2}(1+\eta_{ij})\frac{C_{ij}}{\sqrt{C_iC_j}}\sqrt{\omega_i\omega_j}
\end{equation}
where $\eta_{ij}=C_{ic_{ij}}C_{jc_{ij}}/C_{ij}C_{c_{ij}}$. Single quantum exchange is due to the Jaynes-Cumming type interaction while the double excitation and de-excitation effect arises due to counter rotating terms, without the rotating wave approximation (RWA), which is important when the coupler frequency is higher than of the related qubit frequencies.

\section{Circuit Hamiltonian dynamics beyond Rotating Wave Approximation (RWA)}
Following the general SWT formulated in \cite{ref:24}, the dynamics beyond the RWA for the above Hamiltonian is straightforward. The right SWT is
\begin{equation}
        U_{SW}=\exp \left(  \sum_{\langle i,j\rangle} \left[ \frac{g_j}{\Delta_j}\left( b_j^\dagger b_{c_{ij}}-b_j b_{c_{ij}}^\dagger \right)-\frac{g_j}{\Sigma_j}\left( b_j^\dagger b_{c_{ij}}^\dagger-b_j b_{c_{ij}} \right) \right] \right)
\end{equation}
where the second term takes care for the counter-rotating terms and $\Sigma_j=\omega_j+\omega^c_{ij}$. In the weak anharmonic limit, $\alpha_\lambda \ll \Delta_j$ (detuning of nearly the same order for all qubits). Expansion in the second order in couplings $g_j$ we obtain the effective qubit-qubit Hamiltonian
\begin{equation}
    \tilde{H}=U_{SW}HU_{SW}^\dagger=\sum_j\left( \tilde{\omega}_jb^\dagger_jb_j+\frac{\tilde{\alpha}_j}{2}b^\dagger_jb^\dagger_jb_jb_j \right)+\sum_{\braket{i,j}} \tilde{g}_{ij}\left( b_i^\dagger b_j+b_ib_j^\dagger \right)
\end{equation}
which is identical to equation (\ref{eqn:SCtuned}), plus the anharmonic terms along with counter rotating contribution to the coupling coefficients. Here,
\begin{equation}
    \tilde{\omega}_j\approx \omega_j+g_j^2\left( \frac{1}{\Delta_j}+\frac{1}{\Sigma_j} \right),
\end{equation}
\begin{equation}
    \tilde{\alpha}_j\approx \alpha_j,
\end{equation}
and the effective coupling as
\begin{equation}
    \tilde{g}_{ij}\approx \frac{g_ig_j}{2}\left( \frac{1}{\Delta_i}+\frac{1}{\Delta_j}-\frac{1}{\Sigma_i}-\frac{1}{\Sigma_j} \right)+g_{ij}.
\end{equation}
Here, the same assumption of the coupler being strictly in its ground state has been taken into account. Note that this is exactly similar to making the assumption that the cavity resonator remains in the constant photon number in the conventional cavity-quantum-electrodynamics Hamiltonians, while performing their Schrieffer-Wolff transformations. 
For the dispersive regime we simply have $|\Delta_j|\approx |\Sigma_j|$, which is when counter rotating terms contribute significantly. The computational states are $|1_i0_{c_{ij}}0_j\rangle$ and $|0_i0_{c_{ij}}1_j\rangle$ and they exchange their energy virtually through the non-computational coupler excited state $|0_i1_{c_{ij}}0_j\rangle$ by the virtue of the Jaynes-Cummings interaction ($b_j^\dagger b_{c_{ij}}+b_j b_{c_{ij}}^\dagger$). The counter-rotating term ($b_j^\dagger b_{c_{ij}}^\dagger+b_j b_{c_{ij}}$) involves exchange via the higher non-computational state $|1_i1_{c_{ij}}1_j\rangle$. Substituting the values for the couplings we obtain
\begin{equation}
\label{eqn:tunablecoupfinal}
    \tilde{g}_{ij}\approx \frac{1}{2}\left[ \frac{\omega^c_{ij}}{4}\left( \frac{1}{\Delta_i}+\frac{1}{\Delta_j}-\frac{1}{\Sigma_i}-\frac{1}{\Sigma_j} \right)\eta_{ij}+\eta_{ij}+1 \right]\times \frac{C_{ij}}{\sqrt{C_iC_j}}\sqrt{\omega_i\omega_j}.
\end{equation}
For the case when all qubits are identically set in the dispersive regime with identical construction, we set $\omega_i=\omega_j=\omega$. This results in
\begin{equation}
\label{eqn:coupfinal}
    \tilde{g}_{ij}=\frac{1}{2}\left[ \frac{\omega^2}{ \Delta_i \Sigma_i}\eta_{ij}+1 \right]\frac{C_{ij}}{\sqrt{C_iC_j}}\omega
\end{equation}
The first term in the can be made arbitrarily small for very high values $\omega^c_{ij}$ and very large values of $\eta_{ij}$ which can give zero effective coupling (a switched-off edge). This way, by just adjusting the coupler dynamics (which is externally under full control) we can turn off selected edges (couplings) in the network.

\section{CQC switching with the tunable coupling Hamiltonian}
The tunable couplers are used to turn off the graph edge interactions
by biasing their frequency at $\omega^c_{ij_{\text{off}}}$ during the switching period. To activate the two-qubit interaction as the edge in the network graph, one tunes the couplers' frequency to a desired value $\omega^c_{ij_{\text{on}}}$, yielding a finite $\tilde{g}(\omega^c_{ij_{\text{on}}})$. All the couplers are set to the same strength of coupling. Then a PST can be performed by modulating only the coupler frequency to $\tilde{g}(\omega^c_{ij_{\text{on}}})$ for all the edges $E(i,j)$ which have to be switched on while leaving the other qubits unperturbed during the PST. The edges which are switched on and switched off are known from our formalism in Chap. \ref{chap:twohop}. By operating the couplers in the dispersive limit, parasitic effects from higher-order terms that are ignored after SWT are strongly suppressed, leading to higher two-qubit hopping fidelity. During this process, the control Hamiltonian $\sigma^z_{C_{ij}}$ commutes with the qubits’ degrees of freedom within the dispersive approximation, causing reduced leakage to the non-computational (coupler) state. The non-adiabatic effect in this case is suppressed by the relatively large qubit-coupler detuning $\Delta_j$, allowing a shorter PST time and therefore and reduced decoherence error.

\begin{figure}[hbtp]
    \centering
    \includegraphics[scale=0.4]{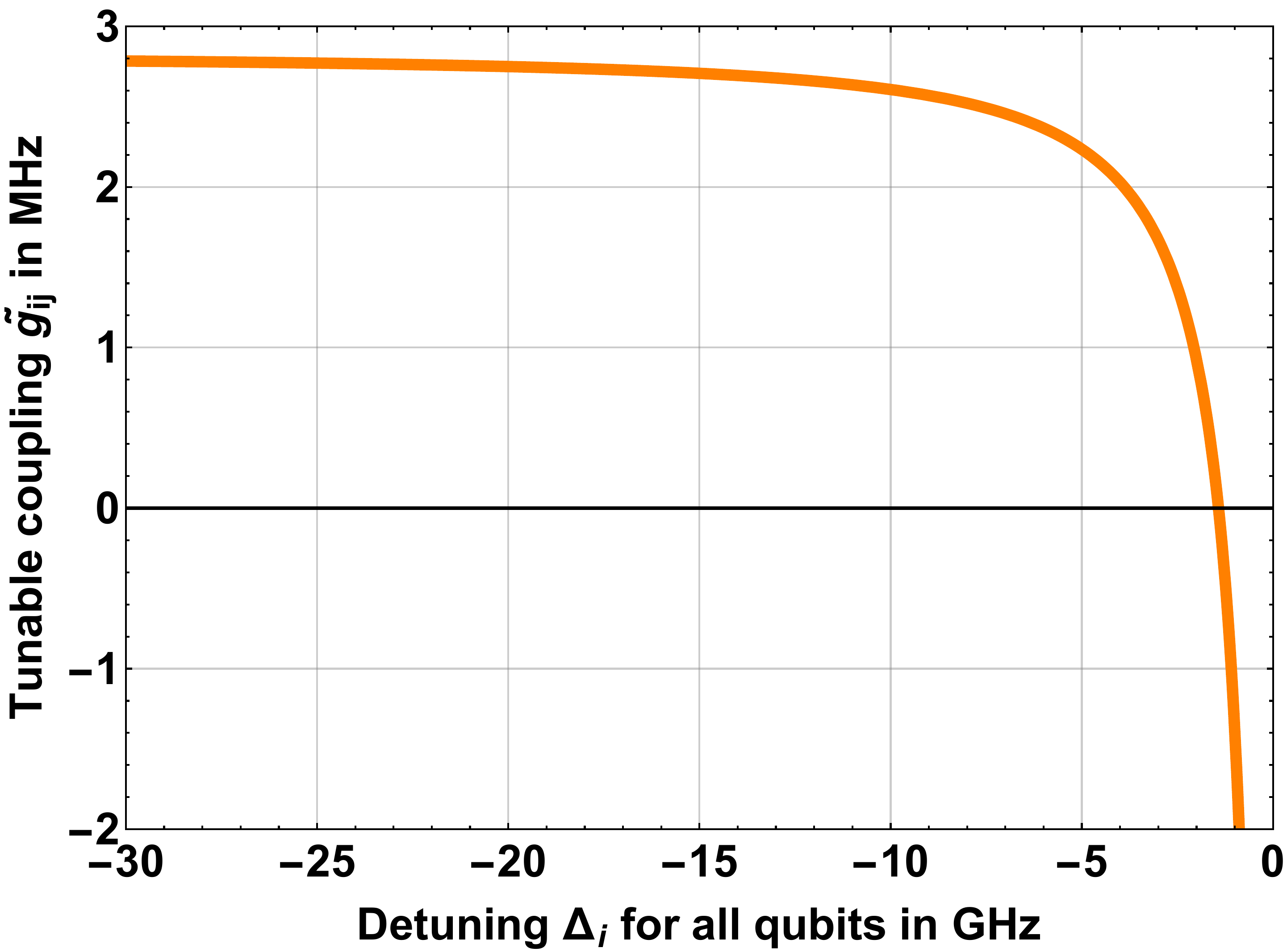}
    \caption{Variation of the dynamic tunable coupling $\tilde{g}_{ij}$ w.r.t. the detuning $\Delta_i$ for each qubit. There exists a cutoff value, in this case $\Delta_i=-1.426$ GHz, corresponding to $\omega^c_{ij_{\text{off}}}=5.426$ GHz. For all configurations, such a cut-off value can always be obtained.}
    \label{fig:coupling cutoff}
\end{figure}

Figure \ref{fig:coupling cutoff} shows the variation of the dynamic tunable coupling $\tilde{g}_{ij}$ with respect to the control parameter $\omega^c_{ij}$. The reasonable experimental values for the parameters used are \cite{ref:14}: $C_i=70$ fF, $C_j=72$ fF, $C_{c_{ij}}=200$ fF, $C_{ic_{ij}}=4$ fF, $C_{jc_{ij}}=4.2$ fF and $C_{ij}=0.1$ fF. All $\omega_j=\omega_j=\omega=4$ GHz (because all qubits are identical). The fabrication defects and imperfection is accounted in the different values for the capacitances. However, for a quite good variation amongst these values, we can still guarantee a cut-off value existence.

The perfect state transfer time for our CQC-hopping (2$t_0$) is plotted against the detuning of qubits. PST time is $t_0$ if we are at a perfect hypercube or $2t_0$ otherwise. For the same experimental values considered above, this is plotted in figure \ref{fig:PSTtime}

\begin{figure}[hbtp]
    \centering
    \includegraphics[scale=0.5]{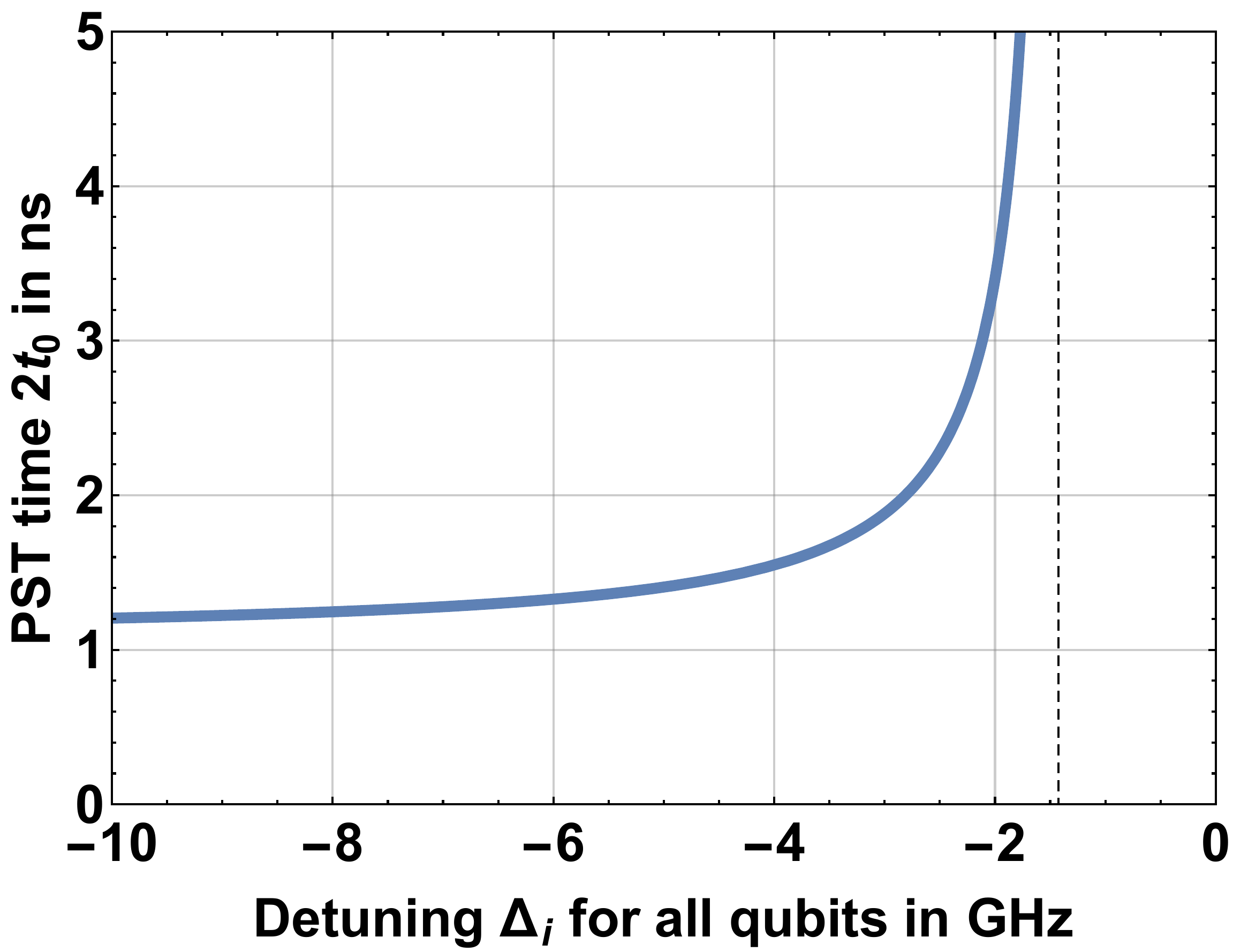}
    \caption{Variation of the Perfect State Transfer time $2t_0$ w.r.t. qubit detuning $\Delta_i$ for all qubits. The typical PST time is around 1.5 ns.}
    \label{fig:PSTtime}
\end{figure}

It can be deduced that the typical time scale for our CQC-hopping scheme is around 1.5 ns. This is less than the single iSWAP quantum gate time for the same experimental values which is reported to be around 35 ns. Even if we consider that switching takes few nanoseconds, still we perform PST in less time relative to the SWAP gate operation between two given qubits. We have to ensure in the experiment that all detunings are onset to the same value to realise a uniformly coupled qubit network. If all detunings are not equal this will actually realise a weighted coupled qubit network and introduce an error since our CQC-hopping protocol is not for weighted graph networks. This error can be estimated by calculation pairwise transfer fidelity via two different couplings. The typical coupling can be tuned around to 2 MHz (for example), to give PST time of 1.55 ns, by setting the detuning to $-4$ GHz, corresponding to $\omega^c_{ij}=8$ GHz for all couplers.
\newpage
\chapter{Conclusion and future work}
We proposed a switching procedure on a memory-enhanced hypercube such that an induced hypercube can be determined with a desired pair of antipodal vertices. A framework of superconducting qubits is defined for physical implementation of the switching procedure under the XY coupling. This same physical architecture also realises growing network scheme. It was shown that perfect state transfer between any pair of vertices in a hypercube or more generally in a network of arbitrary number of vertices is possible utilizing the proposed switching scheme. We have proved the computational advantage of using PST assisted quantum computing. We showed counter-examples numerically for PST under Corona product of graphs and corrected an error in a qudit PST paper which is a very rich resource for computational power in quantum computing.

There are certain results which can further be extended. Such as finding a physical system for scalable Laplacian perfect state transfer for switchable hypercubes similar to our scheme. One can also look to find more optimal graph operations, if they exist, for long distance PST to realize a growing network. Analytical results for signed Corona transfer may also be attempted to look for the class of graphs which support PST under Corona product. Qudit state transfer for large distances using least hoppings is still not reported in literature, to this date. It would be a big result to find such an optimal network for qudits under a suitable Hamiltonian so that large amount of quantum information can be transferred over long distances, similar to qubits.

\newpage
\bibliographystyle{unsrt}
\bibliography{Bibliography.bib}


\end{document}